\numberwithin{equation}{section}
\newtheorem{Theorem}{Theorem}[section]
\newtheorem*{Theorem*}{Theorem}
\newtheorem{Corollary}[Theorem]{Corollary}
\newtheorem{Lemma}[Theorem]{Lemma}
\newtheorem{Proposition}[Theorem]{Proposition}
 { \theoremstyle{definition}
\newtheorem{Definition}[Theorem]{Definition}

\newtheorem{Example}[Theorem]{Example}
\newtheorem{problem}[Theorem]{Riemann--Hilbert Problem}
\newtheorem{Remark}[Theorem]{Remark} }
\tikzset{->-/.style={decoration={
			markings,
			mark=at position #1 with {\arrow{latex}}},postaction={decorate}}}
\tikzset{-<-/.style={decoration={
			markings,
			mark=at position #1 with {\arrowreversed{latex}}},postaction={decorate}}}
\tikzset{cross/.style={cross out, draw,
		minimum size=2*(#1-\pgflinewidth),
		inner sep=0pt, outer sep=0pt}}
\DeclareMathOperator{\Tr}{Tr}
\def\e{\mathrm{e}}
\newcommand{\R}{\mathbb{R}}
\newcommand{\C}{\mathbb{C}}
\newcommand{\N}{\mathbb{N}}
\newcommand{\Ai}{{\rm Ai}}
\def\d{\mathrm{d}}
\def\i{\mathrm{i}}
\begin{document}
\allowdisplaybreaks

\renewcommand{\thefootnote}{}

\newcommand{\arXivNumber}{2211.16898}

\renewcommand{\PaperNumber}{030}

\FirstPageHeading

\ShortArticleName{Recursion Relation for Toeplitz Determinants and the Discrete Painlev\'e~II Hierarchy}

\ArticleName{Recursion Relation for Toeplitz Determinants\\ and the Discrete Painlev\'e~II Hierarchy\footnote{This paper is a~contribution to the Special Issue on Evolution Equations, Exactly Solvable Models and Random Matrices in honor of Alexander Its' 70th birthday. The~full collection is available at \href{https://www.emis.de/journals/SIGMA/Its.html}{https://www.emis.de/journals/SIGMA/Its.html}}}

\Author{Thomas CHOUTEAU~$^{\rm a}$ and Sofia TARRICONE~$^{\rm b}$}

\AuthorNameForHeading{T.~Chouteau and S.~Tarricone}

\Address{$^{\rm a)}$~Universit\'e d'Angers, CNRS, LAREMA, SFR MATHSTIC, F-49000 Angers, France}
\EmailD{\href{mailto:thomas.chouteau@univ-angers.fr}{thomas.chouteau@univ-angers.fr}}

\Address{$^{\rm b)}$~Institut de Physique Th\'eorique, Universit\'e Paris-Saclay, CEA, CNRS,\\
\hphantom{$^{\rm b)}$}~F-91191 Gif-sur-Yvette, France}
\EmailD{\href{mailto:sofia.tarricone@ipht.fr}{sofia.tarricone@ipht.fr}}
\URLaddressD{\url{https://starricone.netlify.app/}}

\ArticleDates{Received December 22, 2022, in final form May 16, 2023; Published online May 28, 2023}

\Abstract{Solutions of the discrete Painlev\'e~II hierarchy are shown to be in relation with a~family of Toeplitz determinants describing certain quantities in multicritical random partitions models, for which the limiting behavior has been recently considered in the literature. Our proof is based on the Riemann--Hilbert approach for the orthogonal polynomials on the unit circle related to the Toeplitz determinants of interest. This technique allows us to construct a new Lax pair for the discrete Painlev\'e~II hierarchy that is then mapped to the one introduced by Cresswell and Joshi.}

\Keywords{discrete Painlev\'e equations; orthogonal polynomials; Riemann--Hilbert problems; Toeplitz determinants}

\Classification{33E17; 33C47; 35Q15}

\renewcommand{\thefootnote}{\arabic{footnote}}
\setcounter{footnote}{0}

	\section{Introduction}

Let us consider the symbol $\varphi(z)=\e^{w(z)}$ with
\begin{equation}\label{eqintro: function w and v}
w(z)\coloneqq v(z)+v\big(z^{-1}\big) \qquad\text{and}\qquad v(z)\coloneqq \sum_{j=1}^N\frac{\theta_j}{j}z^j,
\end{equation}
for $\theta_j$ being real constants and natural $N\geq 1$. The $n$-th Toeplitz matrix associated to this symbol and denoted by $T_n(\varphi)$ is a square $(n+1)$-dimensional matrix which entries are given by
\begin{equation}\label{eq:Tn}
T_n(\varphi)_{i,j} \coloneqq \varphi_{i-j}, \qquad i,j=0,\dots,n.
\end{equation}
Here for every $k\in\mathbb{Z}$, $\varphi_k$ is the $k$-th Fourier coefficient of $\varphi(z)$, namely
\begin{equation*}
\label{eq:phi k intro}
\varphi_k=\int_{-\pi}^{\pi}\e^{-{\rm i}k\beta}\varphi\big(\e^{{\rm i}\beta}\big)\frac{{\rm d}\beta }{2\pi},
\end{equation*}
so that $\sum_{k\in\mathbb{Z}}\varphi_k z^k=\varphi(z)$. Notice that, even though it is not emphasized in our notation, the functions $\varphi_k$ and thus the Toeplitz matrix $T_n(\varphi)$ explicitly depend on the natural parameter $N$ which enters in the definition of $v(z)$ in equation~\eqref{eqintro: function w and v}.

In the present work, it is indeed the dependence on this parameter $N$ that we want to study. In particular, we show that the Toeplitz determinants associated to $T_n(\varphi)$, naturally defined as
 \begin{equation}
 \label{def:Dn}
 D_n^N \coloneqq D_n= \det (T_n(\varphi)),
 \end{equation}
are related to some solutions of a discrete version of the Painlev\'e~II hierarchy, indexed over the parameter $N$ (the dependence on $N$ is dropped in the rest of the paper). Our interest in these Toeplitz determinants comes from their appearance in the recent paper~\cite{betea2021multicritical}. The authors there consider some probability measures on the set of integer partitions called \textit{multicritical} Schur measures, which are a particular case of Schur measures introduced by Okounkov in~\cite{Okunkov}.
They are generalizations of the classical Poissonized Plancherel measure and they are defined as
\begin{equation}\label{def:P}
\mathbb{P}(\lbrace \lambda \rbrace ) = Z^{-1} s_{\lambda}[\theta_1,\dots, \theta_N]^2,\qquad\text{with}\quad
Z=\exp\Bigg(\sum_{i=1}^N\frac{\theta_i^2}{i}\Bigg).
\end{equation}
Here $s_{\lambda}[\theta_1,\dots, \theta_N]$ denotes a Schur symmetric function indexed by a partition $\lambda$ that can be expressed as
\[
s_{\lambda}[\theta_1,\dots, \theta_N]=\det_{i,j}h_{\lambda_i-i+j}[\theta_1,\dots,\theta_N],
\]
where $\sum_{k\geq 0}h_k z^k = \exp\big(\sum_{i=1}^N\frac{\theta_i}{i}z^i\big)$. In~\cite{betea2021multicritical}, the authors first used the term \textit{multicritical} to underline that they obtained a different limiting edge behavior for these Schur measures compared to the classical case of the Poissonized Plancherel measure ($N=1$) which is characterised by the Tracy--Widom GUE distribution.
	For more details, we remind to their Theorem~1 or our discussion in the paragraph ``\textit{Continuous limit}'' below, for instance see equation~\eqref{eq:limpn=FN} where the higher order Tracy--Widom distributions appear.

In this setting, denoting by $\lambda=(\lambda_1\geq\lambda_2\geq\dots\geq 0)$ a generic integer partition and by $\lambda'=(\lambda_1'\geq \lambda_2'\geq\dots \geq 0)$ its conjugate partition (namely such that $\lambda_j'=|i:\lambda_i\geq j|$), major quantities of interest of the model are, for any given $n \in \mathbb{N}$,
\begin{equation}
r_n \coloneqq \mathbb{P}(\lambda_1 \leq n) \qquad \text{and}\qquad
q_n\coloneqq \mathbb{P}(\lambda_1' \leq n),
\label{def:pnqn}
\end{equation}
that are often called discrete gap probabilities as random partitions have a natural interpretation in terms of random configuration of points on the set of semi-integers. Indeed, associating the set $\{\lambda_i-i+1/2\}\subset \mathbb{Z}+\frac{1}{2}$ to a partition $\lambda$ (see~\cite{Okunkov}), $r_n$ and $q_n$ can be expressed in terms of a Fredholm determinant of a discrete kernel which corresponds to the gap probability in the determinantal point process defined through the same kernel.

According to Geronimo--Case/Borodin--Okounkov formula~\cite{BO99}, there is a relation between this Fredholm determinant and the Toeplitz determinant $D_n$ and this implies that $r_n$ and $q_n$ (up to a constant factor) are Toeplitz determinants. It leads to (for instance~\cite[Propositions~6 and 7]{betea2021multicritical}):
\begin{gather}
\label{eq:qn}
q_n = \e^{-\sum_{j=1}^N\theta_j^2/j} D_{n-1}.
\end{gather}
For $r_n $ instead, one should define $\widetilde{\theta}_i=(-1)^{i-1}\theta_i$ and by taking $\tilde{w}(z)= \tilde{v}(z)+ \tilde{v}\big(z^{-1}\big)$, where $\tilde{v}(z)$ is nothing than $v(z)$ with $\theta_i$ replaced by $\tilde{\theta}_i$ as given above, the Toeplitz determinant $\widetilde{D}_n$ associated to the symbol $ \widetilde{\varphi}(z)=\e^{\tilde{w}(z)}$ would give the analogue formula
\begin{equation*}
	\label{eq:pn}
	r_n = \e^{-\sum_{j=1}^N\widetilde{\theta}_j^2/j} \widetilde{D}_{n-1}.
\end{equation*}

Notice that in the simplest case, when $N=1$, the quantities $r_n$ and $q_n $ coincide. Moreover, thanks to Schensted's theorem~\cite{Schensted}, they are also equal to the discrete probability distribution function of the length of the longest increasing subsequence of random permutations of size $m$, with $m$ distributed as a Poisson random variable.

 In the case $N=1$, the relation of these quantities with the theory of discrete Painlev\'e equations was shown two decades ago independently and through very different methods by Borodin~\cite{Borodindiscrete}, Baik~\cite{BaikRHforlastpassageperc}, Adler and van~Moerbeke~\cite{AV} and Forrester and Witte~\cite{FW04}.\footnote{They obtained an analogue of equation~\eqref{eq:N=1intro} for Toeplitz determinant associated to symbols which are not necessarily positive or even real valued.} In particular, they all proved that for every $n \geq 1$, the following chain of equalities holds
\begin{equation}
\label{eq:N=1intro}
\frac{D_{n}D_{n-2}}{D_{n-1}^2}=\frac{q_{n+1}q_{n-1}}{q_{n}^2}=\frac{r_{n+1}r_{n-1}}{r_{n}^2}=1-x_n^2,
\end{equation}
where $x_n$ solves the second order nonlinear difference equation
\begin{equation}
\label{eq:dPIIintro}
\theta_1(x_{n+1}+x_{n-1})\big(1-x_n^2\big)+n x_n=0,
\end{equation}
with certain initial conditions.
Equation~\eqref{eq:dPIIintro} is a particular case of the so called discrete Painlev\'e~II equation~\cite{GrammaticosdiscrPain}, a \textit{discrete analogue} of the classical second order ODE known as the Painlev\'e~II equation~\cite{painlev00}. This means that performing some continuous limit of equation~\eqref{eq:dPIIintro} one gets back the Painlev\'e~II equation. The Painlev\'e~II equations, discrete and continuous ones, depend in general on an additional constant term $\alpha\in\R$. In the present work, we consider the discrete Painlev\'e~II equation and its hierarchy in the homogeneous case where $\alpha=0$. Its continuous limit will correspond as well to the case $\alpha=0$.

\begin{Remark}
\label{rk:asympxn}
The homogeneous Painlev\'e~II equation admits a famous solution~\cite{HastingsMcLeod}, called the Hastings--McLeod solution, found by requiring a specific boundary condition at $\infty$. In parallel, one might wonder what is the large $n$ behavior of the solution $x_n$ of the discrete Painlev\'e~II equation~\eqref{eq:dPIIintro}. Its behavior is expressed in terms of the Bessel functions. First, this is suggested by the following heuristic arguments. Because of the definition of $r_n$~\eqref{def:pnqn}, as $n\to\infty$, $r_n$ tends to one and according to the equation~\eqref{eq:N=1intro}, $x_n$ tends to zero. Then for large $n$, the nonlinear term in equation~\eqref{eq:dPIIintro} is small compared to the linear ones and the equation~\eqref{eq:dPIIintro} reduces to the equation
\begin{equation*}
\theta_1\left(x_{n+1}+x_{n-1}\right)+nx_n=0,
\end{equation*}
which indeed admits $J_{-n}(2\theta_1)$, the Bessel function of the first kind of order $-n$, as a solution.
The claim is confirmed by a result of the recent work~\cite{CafassoRuzza}.
The authors there studied the finite temperature deformation for the discrete Bessel point process. The Fredholm determinant of the finite temperature discrete Bessel kernel they studied depends on a function $\sigma$. In the case when $\sigma=1_{\mathbb{Z}_+'}$ (the characteristic function of the set of positive half integers), the Fredholm determinant is then equal to $r_n$. Then from~\cite[equations~(1.33) and~(1.36) of Theorem~III]{CafassoRuzza} together with equation~\eqref{eq:N=1intro}, one can deduce that for $n$ large $x_n^2\sim J_n(2\theta_1)^2$ and, because of the previous discussion, one can conclude
\begin{equation*}
\label{eq:assympxn}
x_n\sim J_{-n}(2\theta_1)=(-1)^nJ_n(2\theta_1),
\end{equation*}
see also Figure~\ref{fig:asympxn}.
\end{Remark}

\begin{figure}[ht]
\centering
\includegraphics[scale=0.62]{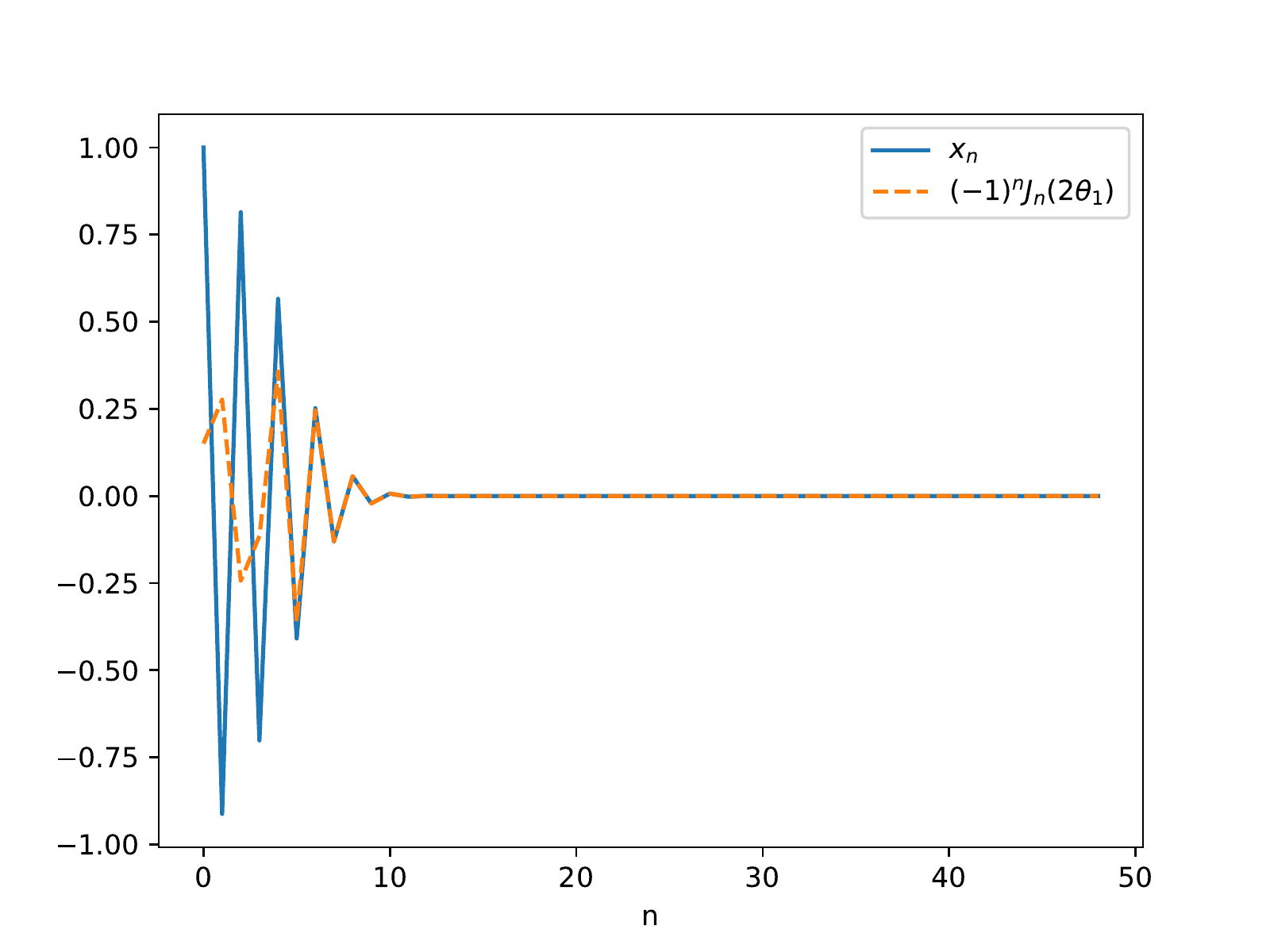}
\caption{For $N=1$, the graphs of $x_n$ and $(-1)^nJ_n(2\theta_1)$ in function of $n$ for $\theta_1=3$.}
\label{fig:asympxn}
\end{figure}

For $N > 1$, Adler and van Moerbeke presented in~\cite{AV}, a generalization of equation~\eqref{eq:N=1intro} by proving that $x_n$ satisfies some recurrence relation written in terms of the Toeplitz lattice Lax matrices. The main result of our work is a recurrence relation for $x_n$ defined via a $N$-times iterating discrete operator which establishes the link with the discrete Painlev\'e~II hierarchy~\cite{joshi1999discrete}.
The precise result is stated as below.
\begin{Theorem}
	\label{thm:main intro}
For any fixed $N\geq 1$, for the Toeplitz determinants $D_n$~\eqref{def:Dn}, $n\geq 1$ associated to the symbol $\varphi(z)$~\eqref{eqintro: function w and v}, we have
\begin{equation}
\label{eq:Ngenintro}
\frac{D_{n}D_{n-2}}{D_{n-1}^2} = 1-x_n^2,
\end{equation}
where $x_n $ solves the $2N$ order nonlinear difference equation
\begin{equation}
\label{eq:dPIIhierarchy intro}
nx_n+\bigl(-v_n-v_n{\rm Perm}_n+2x_n\Delta^{-1}(x_n-(\Delta+I)x_{n}{\rm Perm}_n)\bigr){L}^N(0)=0,
\end{equation}
where $L$ is a discrete recursion operator defined as
\begin{equation}
\label{eq:recop intro}
L(u_n)\coloneqq \big(x_{n+1}\big(2\Delta^{-1}+I\big)((\Delta+I)x_{n}{\rm Perm}_n-x_n) +v_{n+1}(\Delta+I)-x_nx_{n+1}\bigr)u_n.
\end{equation}
Here $v_n\coloneqq 1-x_n^2$, $\Delta$ denotes the difference operator
\begin{equation*}
\Delta\colon\ u_n\to u_{n+1}-u_n
\end{equation*}
and ${\rm Perm}_n$
is the transformation of the space $\C\big[(x_j)_{j\in[[0,2n]]}\big]$ acting by permuting indices in the following way:
\begin{align}\label{eq:Permdef intro}
\begin{split}
{\rm Perm}_n\colon\quad \C\big[(x_j)_{j\in[[0,2n]]}\big]&\longrightarrow\C\big[(x_j)_{j\in[[0,2n]]}\big],
\\
P\big((x_{n+j})_{-n\leqslant j\leqslant n}\big)&\longmapsto P\big((x_{n-j})_{-n\leqslant j\leqslant n}\big).
\end{split}
\end{align}
\end{Theorem}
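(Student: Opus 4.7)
The plan is to proceed via the Riemann--Hilbert problem (RHP) for the monic orthogonal polynomials on the unit circle (OPUC) associated with the symbol $\varphi(z) = \e^{w(z)}$. One sets up the standard Fokas--Its--Kitaev $2\times 2$ RHP whose solution $Y(z;n)$ encodes the OPUC $p_n(z)$, $p_{n-1}^*(z)$ and their Cauchy transforms. The Toeplitz-side identity~\eqref{eq:Ngenintro} is then essentially automatic: Heine's identity gives $D_n = \prod_{k=0}^n h_k$ with $h_k = \|p_k\|^2$, and the Szeg\H{o} recurrence yields $h_n/h_{n-1} = 1 - |\alpha_{n-1}|^2$, where $\alpha_{n-1} = -\overline{p_n(0)}$ is the Verblunsky coefficient. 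Because $w$ is real and invariant under $z \mapsto z^{-1}$, the $\alpha_n$ are real, and one sets $x_n = -\alpha_{n-1}$.

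The hierarchy equation~\eqref{eq:dPIIhierarchy intro} will emerge as a compatibility condition for a Lax pair extracted from the RHP. Differentiating in $z$ yields $\partial_z Y = A(z;n) Y$, with $A(z;n)$ a Laurent polynomial in $z$ of degree $N$ in both positive and negative powers, the value of $N$ being dictated by $\deg v$; this is what fixes the order of the hierarchy. The discrete shift $Y(z;n+1) = U(z;n) Y(z;n)$ provides the second half of the pair, and the compatibility identity $(\partial_z U)(z;n) = A(z;n+1) U(z;n) - U(z;n) A(z;n)$, expanded power by power in $z$, produces a chain of discrete equations relating $x_n$ to the auxiliary coefficients of $A$.

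The technical heart is to collapse this chain into the iterated-operator form~\eqref{eq:dPIIhierarchy intro}--\eqref{eq:recop intro}. The strategy is to identify a discrete operator $L$ such that $L^k(0)$ reproduces, for $k=0,\dots,N$, the coefficients of $A(z;n)$ starting from the highest power of $z$: the anti-difference $\Delta^{-1}$ in~\eqref{eq:recop intro} encodes the summation inverting this recursion, while $\mathrm{Perm}_n$ reflects the duality $p_n \leftrightarrow p_n^*$ forced by the palindromic structure $w(z) = v(z) + v(z^{-1})$. The residue of the compatibility identity at the bottom of the chain is then precisely the string equation~\eqref{eq:dPIIhierarchy intro}. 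The main obstacle is algebraic bookkeeping: one must verify by induction on the power of $z$ that the ansatz~\eqref{eq:recop intro} genuinely reproduces every coefficient of $A$, with delicate cancellations between $A(z;n+1)$ and $A(z;n)$; a sanity check at $N=1$ must recover the classical dPII equation~\eqref{eq:dPIIintro}. Finally, an explicit gauge transformation matches our Lax pair with the one of Cresswell--Joshi, confirming that $x_n$ solves the discrete Painlev\'e~II hierarchy in their sense.
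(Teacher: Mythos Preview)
Your plan is correct and mirrors the paper's route almost step for step: the Fokas--Its--Kitaev Riemann--Hilbert problem for the OPUC, the Toeplitz identity via $\kappa_n^2=D_{n-1}/D_n$ together with $\det Y(0)=1$, extraction of the Lax pair $(U,T)$ from the shift and $z$-derivative, reduction of the compatibility condition to a recursion for the coefficients $T_k(n)$, and finally an explicit gauge matching to Cresswell--Joshi. The one step where the paper needs more than the ``algebraic bookkeeping'' you anticipate is the proof of the symmetry $T_{i,21}(n)=v_n\,\mathrm{Perm}_n T_{i,12}(n)$, which is what collapses the $2\times2$ recursion to the scalar operator $L$: this is not proved by direct induction on the $T_i$ (the $\Delta^{-1}$ makes that awkward), but by introducing the auxiliary matrix $C(n;z)=T(n;z)^2$, showing its first $N{+}1$ coefficients are scalar multiples of the identity, and using this to get polynomial (hence manifestly $\mathrm{Perm}_n$-compatible) expressions for the diagonal entries $T_{i,11}(n)$.
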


\begin{Remark}
According to equation~\eqref{eq:dPIIhierarchy intro} and the definition of the operator $L$~\eqref{eq:recop intro}, we need to perform discrete integrations to compute the $N$-th equation of the discrete Painlev\'e~II hierarchy. It is always possible to accomplish this discrete integration. The operator $\Delta^{-1}$, inverse of the difference operator $\Delta$, is applied to $(\Delta+I)x_{n}{\rm Perm}_n-x_n$ and it is possible to write this operator as a derivative. Indeed,
\begin{equation*}
\left(\Delta+I\right)x_{n}{\rm Perm}_n-x_n=\Delta x_n{\rm Perm}_n+({\rm Perm}_n-I)x_n.
\end{equation*}
The first term on the right hand side is a derivative and because of the definition of ${\rm Perm}_n$, the second term can be expressed as a derivative.
\end{Remark}
Equation~\eqref{eq:dPIIhierarchy intro}, together with the definition of the recursion operator $L$ in~\eqref{eq:recop intro}, of the quantity $v_n$ and of the transformation ${\rm Perm}_n$ in~\eqref{eq:Permdef intro} is indeed the $N$-th member of the discrete Painlev\'e~II hierarchy.
The first equations of the hierarchy read as
\begin{align}\label{eq:hierar1 intro}
N=1\colon\ &n x_n+\theta_1(x_{n+1}+x_{n-1})\big(1-x_n^2\big)=0,
\\
N=2\colon\
&n x_n+\theta_1\big(1-x_n^2\big)(x_{n+1}+x_{n-1})
+\theta_2\big(1-x_n^2\big)\nonumber
\\
&\qquad\times{}\bigl(x_{n+2}\big(1-x_{n+1}^2\big)+x_{n-2}\big(1-x_{n-1}^2\big)-x_n(x_{n+1}+x_{n-1})^2\bigr)=0,
\label{eq:hierar2 intro}
\\
N=3\colon\	&nx_n+\theta_1\big(1-x_n^2\big)(x_{n+1}+x_{n-1})
+\theta_2\big(1-x_n^2\big)\bigl(x_{n+2}\big(1-x_{n+1}^2\big)\nonumber
\\
&\qquad{}+x_{n-2}\big(1-x_{n-1}^2\big)-x_n(x_{n+1}+x_{n-1})^2\bigr)
+\theta_3\big(1-x_n^2\big)\bigl(x_n^2(x_{n+1}+x_{n-1})^3\nonumber
\\
&\qquad{}+x_{n+3}\big(1-x_{n+2}^2\big)\big(1-x_{n+1}^2\big)+x_{n-3}\big(1-x_{n-2}^2\big)
\big(1-x_{n-1}^2\big)\bigr)\nonumber
\\
&\qquad{}+\theta_3\big(1-x_n^2\big)\bigl(-2x_n(x_{n+1}+x_{n-1})\big(x_{n+2}\big(1-x_{n+1}^2\big)+x_{n-2}
\big(1-x_{n-1}^2\big)\big)\nonumber
\\
&\qquad{}-x_{n-1}x_{n-2}^2\big(1-x_{n-1}^2\big)\bigr)\nonumber
\\
&\qquad{}+\theta_3\big(1-x_n^2\big)\bigl(-x_{n+1}x_{n+2}^2\big(1-x_{n+1}^2\big)-x_{n+1}x_{n-1}(x_{n+1}+x_{n-1})\bigr)=0
\label{eq:hierar3 intro}
\end{align}
with the first one coinciding with the discrete Painlev\'e~II equation~\eqref{eq:dPIIintro}. Computations with the operator~\eqref{eq:recop intro} introduced in Theorem~\ref{thm:main intro} for $N=1$ and $2$ are done in Example~\ref{ex:computationsDiscetePIIN=12}.
\begin{Remark} The same heuristic argument used in Remark~\ref{rk:asympxn} applies also when $N>1$ (since $x_n$ still tends to zero as $n\to\infty$), thus suggesting that the $N$-th equation of the discrete Painlev\'e~II hierarchy reduces to a linear discrete equation for large $n$. For $N=2$ and $3$, the reduced equations are
\begin{align*}
N=2\colon\	&n x_n+\theta_1(x_{n+1}+x_{n-1})+\theta_2(x_{n+2}+x_{n-2})=0,
\\
N=3\colon\	&n x_n+\theta_1(x_{n+1}+x_{n-1})+\theta_2(x_{n+2}+x_{n-2}) +\theta_3(x_{n+3}+x_{n-3})=0.
\end{align*}
Similar recurrence relations appeared
in~\cite{Dattoli92} for the multivariable generalized Bessel functions (GBFs). These generalized Bessel functions were discussed in~\cite{KimuraZahabi, Okunkov} in the context of Schur measures for random partitions and generalizations of the previous recurrence equations were introduced (in particular, see in~\cite[equation~(3.2b)]{KimuraZahabi}). We denote by $J_n^{(N)}(\xi_1,\dots,\xi_N)$ a $N$-variable GBFs of order $n$. In~\cite{Dattoli92}, $J_n^{(N)}(\xi_1,\dots,\xi_N)$ is defined as a discrete convolution product of $N$ Bessel functions. In particular, if $j_n^{(k)}(\xi)$ is the $n$-th Fourier coefficient of the function $\beta\to\e^{2{\rm i}\xi\sin(k\beta)}$ then
\begin{equation*}
J_n^{(N)}(\xi_1,\dots,\xi_N)\coloneqq j_n^{(N)}(\xi_N)\ast j_n^{(N-1)}(\xi_{N-1})\ast \cdots\ast j_n^{(1)}(\xi_1)(n),
\end{equation*}
where $\ast$ denotes the discrete convolution.

In the case $N=1$, the symbol we considered was $\varphi\big(\e^{{\rm i}\beta}\big)=\e^{\theta_1(\e^{{\rm i}\beta}+\e^{-{\rm i}\beta})}=\e^{2\theta_1\cos(\beta)}$ and the large $n$ asymptotic behavior of $x_n$ was given by $J_{-n}(2\theta_1)$ which is the $n$-th Fourier coefficient of the function $\beta\to \e^{\theta_1(\e^{{\rm i}\beta}-\e^{-{\rm i}\beta})}$ up to a constant $(-1)^n$.

For $N>1$, the symbol is $\varphi_N(\e^{{\rm i}\beta})=\prod_{k=1}^N\e^{\frac{\theta_k}{k}(\e^{{\rm i}k\beta}+\e^{-{\rm i}k\beta})}=\prod_{k=1}^N\e^{2\frac{\theta_k}{k}\cos(k\beta)}$. Then, we conjecture that the large $n$ asymptotic behavior of $x_n^{(N)}$ would be given by the $n$-th Fourier coefficient of $\beta\to \prod_{k=1}^N\e^{\frac{(-1)^{k+1}\theta_k}{k}(\e^{{\rm i}k\beta}-\e^{-{\rm i}k\beta})}$ which is precisely $J_n^{(N)}(\xi_1,\dots,\xi_N)$ up to some constant and proper rescaling:
\begin{equation*}
x_n^{(N)}\sim(-1)^nJ_n^{(N)}\bigg(\bigg((-1)^i\frac{2}{i}\theta_i\bigg)_{1\leqslant i\leqslant N}\bigg),
\end{equation*}
see also Figure~\ref{fig:asympxn2et3}.
\end{Remark}

\begin{figure}[ht]
 \begin{center}
\includegraphics[scale=0.5]{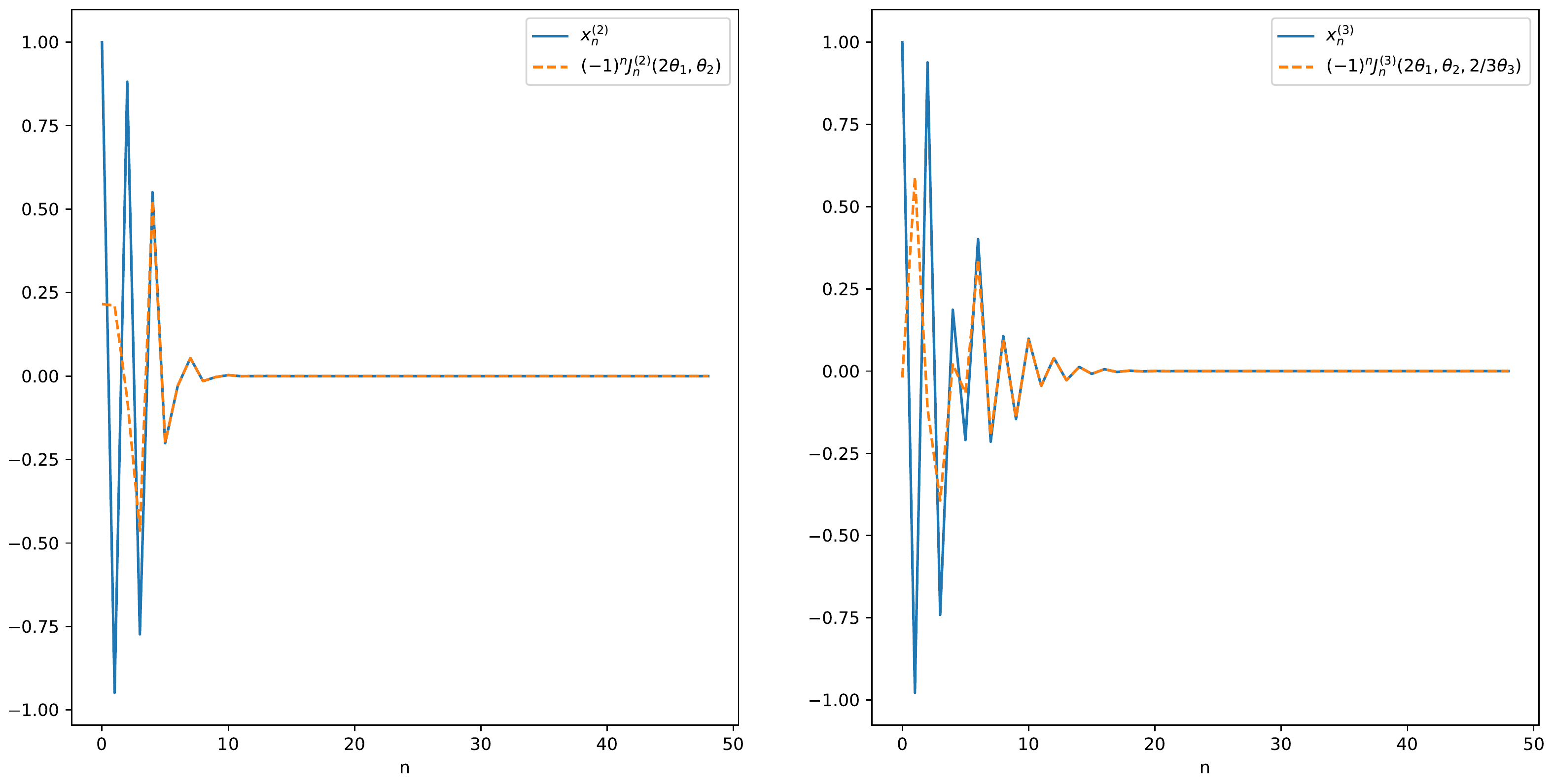}
\caption{The graphs of $x_n^{(N)}$ and $(-1)^nJ_n^{(N)}\big((\theta_i)_{1\leqslant i\leqslant N}\big)$ (for $N=2$ on left and $N=3$ on the right) in function of $n$ for $\theta_1=3$, $\theta_2=1.2$ and $\theta_3=2.6$.}
 \label{fig:asympxn2et3}
 \end{center}
\end{figure}
\begin{Remark}
	Notice that for $N=1,2$ the equations~\eqref{eq:hierar1 intro} and~\eqref{eq:hierar2 intro} coincide with the ones found in~\cite{AV}. Also notice that in the physics literature, Periwal and Schewitz~\cite{periwalschewitz} found similar discrete equations for $N=1,2$ (with different coefficients sign) in the context of unitary matrix models and used their solutions to evaluate the behavior of some typical integrals in the large-dimensional limit passing through the continuous limit of their discrete equations.
For $N=1$, the discrete Painlev\'e~II equation was also found in~\cite{HisakadoUnitary1} as a particular case of the string equation for the full unitary matrix model, i.e., for $w(z)=\theta_1z+\theta_{-1}z^{-1}$. The dependence in $\theta_{\pm1}$ of $x_n$ (and some other $x_n^*$) was also studied there and it produced some evolution equations related, after some change of variables, to the two-dimensional Toda equations. This would suggest that for the general case $N>1$, the dependence of $x_n$ on times $\theta_1,\dots,\theta_N$ would be related to the one-dimensional Toda hierarchy (see also~\cite{Okunkov}).
\end{Remark}

The first construction of a discrete Painlev\'e~II hierarchy in~\cite{joshi1999discrete} used the integrability property of the continuous one, in the following sense. It is well known that the classical Painlev\'e~II equation admits an entire hierarchy of higher order analogues. Indeed, this equation can be obtained as a self-similarity reduction of the modified KdV equation. Thus, the higher order members of the Painlev\'e~II hierarchy are but analogue self-similarity reductions of the corresponding higher order members of the modified KdV hierarchy (see, e.g.,~\cite{flaschkanewell1}). In some way, this implies that the Lax representation of the KdV hierarchy in terms of isospectral deformations becomes for the Painlev\'e~II hierarchy a Lax representation in terms of isomonodromic deformations~\cite{clarkson2006lax}.

In~\cite{joshi1999discrete} then, the discrete Painlev\'e~II hierarchy is defined as the compatibility condition of a~sort of ``discretization'' of the Lax representation of the Painlev\'e~II hierarchy. In particular, they considered the compatibility condition of a linear $2\times 2$ matrix-valued system of the following type:
\begin{equation}\label{eq:lin sys intro}
\Phi_{n+1}(z) = L_n(z) \Phi_n(z), \qquad
\frac{\partial}{\partial z}\Phi_{n}(z) = M_n(z) \Phi_n(z),
\end{equation}
where the coefficients $L_n(z)$, $M_n(z)$ are explicit matrix-valued rational function in $z$, depending on $x_\ell,\ell=n+N,\dots, n-N$, in some recursive (on $N$) way. This allows the authors there to compactly write the $N$-th discrete Painlev\'e~II equation using some recursion operators.
The linear system that we obtain in Proposition~\ref{Prop:LaxPair} and that encodes our hierarchy as written in~\eqref{eq:dPIIhierarchy intro} is mapped into the one of~\cite{joshi1999discrete} through an explicit transformation, as shown in Proposition~\ref{Prop:connectionJoshi}, thus implying that~\eqref{eq:dPIIhierarchy intro} is indeed the same discrete Painlev\'e~II hierarchy.

{\bf Continuous limit.}
The aim of this paragraph is to explain heuristically the reason why our result given in Theorem~\ref{thm:main intro} can be considered as the discrete analogue of the generalized Tracy--Widom formula for higher order Airy kernels (namely, the result contained in~\cite[Theorem~1.1]{CafassoClaeysGirotti}, case $\tau_i=0$).

For $N=1$, Borodin in~\cite{Borodindiscrete} already pointed out that formula~\eqref{eq:N=1intro} with~\eqref{eq:dPIIintro} can be seen as a discrete analogue of the classical Tracy--Widom formula for the GUE Tracy--Widom distribution~\cite{TW94,TracyWidom}.
In other words, he described how to pass from the left to the right in the picture below:
\begin{center}
\begin{tikzpicture}
\node[] at (-9,2) {Discrete case};
\node[] at (-9,1) {$\dfrac{D_nD_{n-2}-D_{n-1}^2}{D_{n-1}^2}=-x_n^2$};
\node[] at (-9,0) {with $nx_n+\theta\big(1-x_n^2\big)(x_{n+1}+x_{n-1})=0$,};
\node[] at (0,2) {Continuous case};
\node[] at (0,1) {$\dfrac{{\rm d}^2}{{\rm d} t^2}\log \det (1-\mathcal{K}_\Ai|_{(t,+\infty)})=-u^2(t)$};
\node[] at (0,0) {with\quad $u''(t)= 2u^3(t)+tu(t)$,};
\node[] at (0,-.8) {$u(t)\underset{t\rightarrow \infty}{\sim} \Ai(t)$,};
\draw[->] (-6,1.8)--(-3.5,1.8);	
\node[] at (-4.75,2.1) {$\text{Baik--Deift--Johansson}$};
\end{tikzpicture}
\end{center}
where $\Ai(t)$ denotes the classical Airy function and $\mathcal{K}_\Ai$ denotes the integral operator acting on $L^2(\R)$ through the Airy kernel.
This connection was achieved by using the scaling limit computed by Baik, Deift and Johansonn in~\cite{baik1999distribution} for the distribution of the first part of partitions in the Poissonized Plancherel random partition model (which is recovered in~\cite[Theorem~1]{betea2021multicritical} for $N=1$). In some way, as emphasized by Borodin, their result not only assures the existence of a limiting function for the $D_n$, in this case $ D(t)= \det (1-\mathcal{K}_\Ai|_{(t,+\infty)})$, for a certain continuous variable $t$. It also encodes already how the discrete function $x_n$, should be rescaled in terms of a differentiable function $u(t)$ to get back, from the recursion relation for $D_n$, the Tracy--Widom formula.

To generalize this result for the case $N>1$, we proceed by adapting the method used by Borodin in~\cite{Borodindiscrete} for $N=1$ to the higher order cases, using the scaling proposed in~\cite{betea2021multicritical}\footnote{Up to the correction of the typo $d\rightarrow d^{-1}$ in their statement of Theorem 1.} for the multicritical case (notice that their $n$ corresponds to our $N$), instead of the Baik--Deift--Johansson's one that only holds for $N=1$.

We recall that $D_n$ is the Toeplitz determinant associated to the symbol $\varphi(z)$~\eqref{eqintro: function w and v} (which depends on $\theta_i$, $i=1,\dots,N$ and thus on $N$). In the following discussion, we write explicitly the dependence on the family of parameters $(\theta_i)$, $i=1,\dots,N$ of $D_n=D_n(\theta_i)$, $x_n=x_n(\theta_i)$, ${r_n=r_n(\theta_i)}$ and $q_n=q_n(\theta_i)$. Consider equation~\eqref{eq:Ngenintro} written in terms of the Toeplitz determinants $D_n(\theta_i)$ in this way
\begin{equation}
	\label{eq:toepl discrete der intro}
\frac{D_{n-2}(\theta_i)D_n(\theta_i)-D_{n-1}^2(\theta_i)}{D_{n-1}^2(\theta_i)}=-x_n^2(\theta_i).
\end{equation}
From the equation~\eqref{eq:qn}, this previous equation can be expressed in terms of $q_n(\theta_i)$ defined as~\eqref{def:pnqn}. It becomes
\begin{equation}
	\label{eq:discrete der qn}
\frac{q_{n-1}(\theta_i)q_{n+1}(\theta_i)-q_{n}^2(\theta_i)}{q_{n}^2(\theta_i)}=-x_n^2(\theta_i).
\end{equation}
 According to~\cite[Lemma~8]{betea2021multicritical}, with the change of parameters $\tilde{\theta}_i=(-1)^{i-1}\theta_i$, we have $q_n(\theta_i)=r_n\big(\tilde{\theta}_i\big)$. Thus equation~\eqref{eq:discrete der qn} now reads as
\begin{equation}
\label{eq:discrete der pn}
\frac{r_{n-1}\big(\tilde{\theta}_i\big)r_{n+1}\big(\tilde{\theta}_i\big)-r_{n}^2\big(\tilde{\theta}_i\big)}{r_{n}^2\big(\tilde{\theta}_i\big)}=-x_n^2(\theta_i).
\end{equation}
Following the scaling limit described in~\cite[Theorem~1]{betea2021multicritical}, we define the following scaling for the discrete variable $n$:
\begin{equation}
	\label{eq: continuous variable}
n=b\theta +t\theta^{\frac{1}{2N+1}}d^{-\frac{1}{2N+1}} \quad \iff \quad
t = (n-b\theta)\theta^{-\frac{1}{2N+1}}d^{\frac{1}{2N+1}}
\end{equation}
with $b$, $d$ defined as
\begin{equation*}
b=\frac{N+1}{N}, \qquad
d=\begin{pmatrix}2N\\N-1\end{pmatrix}
\end{equation*}
and choose $\tilde{\theta}_i$ (resp.\ $\theta_i$) all proportional to $\theta = \tilde{\theta}_1=\theta_1$ in the following way:
\begin{equation*}
\tilde{\theta}_i=(-1)^{i-1}\frac{(N-1)!(N+1)!}{(N-i)!(N+i)!}\theta, \qquad
i=1,\dots, N,
\end{equation*}
respectively,
\begin{equation}
\label{eq:scaling of theta i}
\theta_i=\frac{(N-1)!(N+1)!}{(N-i)!(N+i)!}\theta, \qquad i=1,\dots, N.
\end{equation}
Now recall the definition of $r_n\big(\tilde{\theta}_i\big)$~\eqref{def:pnqn} in function of $\mathbb{P}=\mathbb{P}_{\tilde{\theta}_i}$ (see equation~\eqref{def:P} for the definition of $\mathbb{P}$ and the dependence on the family of parameters $(\theta_i)$). From the previous scaling, it is now possible to express $r_n\big(\tilde{\theta}_i\big)$ in function of $t$ and $\theta$
\begin{equation}
\label{eq:pnPthetat}
r_n\big(\tilde{\theta}_i\big)=\mathbb{P}_{\tilde{\theta}_i}\bigg(\frac{\lambda_1-b\theta}{(\theta d^{-1})^{\frac{1}{2N+1}}}\leqslant t\bigg)
\end{equation}
 and according to~\cite[Theorem 1]{betea2021multicritical}, the limiting behavior of the probability distribution function of $\lambda_1$ in this setting is given by
\begin{gather}
\lim_{\theta \rightarrow +\infty} r_n\big(\tilde{\theta}_i\big)=\lim_{\theta \rightarrow +\infty} \mathbb{P}_{\tilde{\theta}_i} \bigg(\frac{\lambda_1 -b\theta}{\big(\theta d^{-1}\big)^{\frac{1}{2N+1}}}\leqslant t\bigg)=	F_N(t), \nonumber\\
F_N(t)=\det(1-\mathcal{K}_{\Ai_{2N+1}}|_{(t,\infty)}),\label{eq:limpn=FN}
\end{gather}
where $\mathcal{K}_{\Ai_{2N+1}}$ is the integral operator acting with higher order Airy kernel (see, for instance, in~\cite[equation~(2.7)]{betea2021multicritical}).

As we did for $r_n\big(\tilde{\theta}_i\big)$ in equation~\eqref{eq:pnPthetat}, we express $r_{n+1}\big(\tilde{\theta}_i\big)$ and $r_{n-1}\big(\tilde{\theta}_i\big)$ in function of $t$ and~$\theta$:
\begin{equation*}
r_{n\pm 1}\big(\tilde{\theta}_i\big)=\mathbb{P}_{\tilde{\theta}_i}\bigg(\frac{\lambda_1-b\theta}{\big(\theta d^{-1}\big)^{\frac{1}{2N+1}}}\leqslant t\pm\big(\theta d^{-1}\big)^{-\frac{1}{2N+1}}\bigg).
\end{equation*}
With this discussion and this scaling for $n$, $(\theta_i)$ and $\big(\tilde{\theta}_i\big)$, we deduce that{\samepage
\begin{equation*}
-\lim_{\theta \rightarrow +\infty}\dfrac{x_n^2(\theta_i)}{\big(\theta d^{-1}\big)^{-\frac{2}{2N+1}}}=\lim_{\theta \rightarrow +\infty}\frac{r_{n-1}\big(\tilde{\theta}_i\big)r_{n+1}\big(\tilde{\theta}_i\big)-r_{n}^2\big(\tilde{\theta}_i\big)}{\big(\theta d^{-1}\big)^{-\frac{2}{2N+1}}r_{n}^2\big(\tilde{\theta}_i\big)}= \frac{{\rm d}^2}{{\rm d}t^2} \log F_N(t),
\end{equation*}
where the first equality comes from equation~\eqref{eq:discrete der pn} and the second from equation~\eqref{eq:limpn=FN}.}

From now on, we drop the dependence on $\theta_i$, $i=1,\dots,N$ in the notation. The previous equation suggests that, in order to be consistent with~\cite[Theorem~1.1]{CafassoClaeysGirotti}, the discrete function $x_n$ appearing in formula~\eqref{eq:toepl discrete der intro} in the scaling~\eqref{eq: continuous variable} for $n$ and~\eqref{eq:scaling of theta i} for $(\theta_i)$ limit should be
\begin{equation*}
-x_n^2 \sim -(\theta)^{-\frac{2}{2N+1}}d^{\frac{2}{2N+1}} u^2(t)
\end{equation*}
with $u(t)$ solution of the $N$-th equation of the Painlev\'e~II hierarchy. This can be proved directly by computing the scaling limit of the equations of the discrete Painlev\'e~II hierarchy we found for $x_n$ in Theorem~\ref{thm:main intro}. Indeed, for every fixed $N$, we write $x_n$ as
\begin{equation}
	\label{eq: continuous xn scaling}
x_n=(-1)^n \theta^{-\frac{1}{2N+1}}d^{\frac{1}{2N+1}} u(t)
\end{equation}
with $u(t)$ a smooth function of the variable $t$ defined as in equation~\eqref{eq: continuous variable}.
Now recall that $x_n$ solves the discrete equation~\eqref{eq:dPIIhierarchy intro} of order $2N$ for every $N\geq 1$. The continuous limit of the discrete equations of the hierarchy~\eqref{eq:dPIIhierarchy intro}, under the definition of $x_n$~\eqref{eq: continuous xn scaling} and the scaling of the parameters $\theta_i$ as~\eqref{eq:scaling of theta i}, gives the equations of the classical Painlev\'e~II hierarchy. For any fixed~$N$ the computation should be done in the same way: consider the $N$-th discrete equation of the hierarchy~\eqref{eq:dPIIhierarchy intro} and replace each $\theta_i$ with the values given in formula~\eqref{eq:scaling of theta i}. Then substitute~$x_n$ with the definition in~\eqref{eq: continuous xn scaling} and for $\theta\rightarrow+\infty$ compute the asymptotic expansion of every term $x_{n+K}\propto u\big(t+K\theta^{-\frac{1}{2N+1}}d^{\frac{1}{2N+1}}\big)$, $K=-N,\dots,N$ appearing in the discrete equation. The coefficient of $\theta^{-1}$ resulting after this procedure coincides indeed with the $N$-th equation of the Painlev\'e~II hierarchy.
For $N=1,2,3$, the computations are explicitly done in the Appendix~\ref{appendix}.

\begin{Remark}
It is worthy to mention that in~\cite{CafassoClaeysGirotti}, the authors also consider a generalization of the Fredholm determinant $F_N(t)$, recalled here in~\eqref{eq:limpn=FN}, depending on additional parameters~$\tau_i$. Those are related to solutions of the general Painlev\'e~II hierarchy, which depends as well on the~$\tau_i$. With the scaling as in~\cite{betea2021multicritical} for the $\theta_i$'s, the continuous limit for our discrete equations leads to the Painlev\'e~II hierarchy with $\tau_i=0$ for all~$i$. This is consistent with the fact that the limiting behavior in~\cite{betea2021multicritical}, written here in equation~\eqref{eq:limpn=FN}, involves indeed the Fredholm determinant $F_N(t)$ corresponding to $\tau_i=0$ for all $i$ (the same already appeared in~\cite{LDMS18}). \end{Remark}

{\bf Methodology and outline.} The rest of the work is devoted to prove Theorem~\ref{thm:main intro}. In~order to do so, we introduce the classical Riemann--Hilbert characterization~\cite{BaikDeiftSuidan} of the family of orthogonal polynomials on the unit circle (OPUC for brevity) with respect to a measure defined by the symbol $\varphi(z)$. Classical results from orthogonal polynomials theory allow to achieve almost directly formula~\eqref{eq:toepl discrete der intro} where $x_n$ is defined as the constant term of the $n$-th monic orthogonal polynomial of the family. The Riemann--Hilbert problem for the OPUC is then used to deduce a linear system of the same type of~\eqref{eq:lin sys intro} which is proven to be in relation with the Lax pair introduced by Cresswell and Joshi~\cite{joshi1999discrete} for the discrete Painlev\'e~II hierarchy. This is done in Section~\ref{sec2}. The explicit computation of the Lax pair together with the construction of the recursion operator and the hierarchy for $x_n$ as written in~\eqref{eq:dPIIhierarchy intro} are done in Section~\ref{sec3}.

	\section[OPUC: the Riemann--Hilbert approach and a discrete Painlev\'e~II Lax pair]{OPUC: the Riemann--Hilbert approach and a discrete \\Painlev\'e~II Lax pair}
	\label{sec2}
In this section, we introduce the relevant family of orthogonal polynomials on the unit circle. We recall some of their properties and their Riemann--Hilbert characterization. Afterward we derive a Lax pair associated to the Riemann--Hilbert problem and establish the relation with the Lax pair for discrete Painlev\'e~II hierarchy~\eqref{eq:lin sys intro} introduced by Cresswell and Joshi~\cite{joshi1999discrete}. The proofs of the results for orthogonal polynomials stated in here can be found in the classical reference~\cite{BaikDeiftSuidan}.

We denote by	$S^1$ the unit circle in $\C$ counterclockwise oriented.
	We consider the following positive measure on $S^1$ (absolutely continuous w.r.t.\ the Lebesgue measure there):
	\begin{equation}
		\label{eq: measure}
		\d\mu(\beta) = \frac{\e^{w(\e^{{\rm i}\beta})}}{2\pi}\d\beta,
	\end{equation}
	where the function $w(z)$ for any $z\in \mathbb{C}$ is given as in equation~\eqref{eqintro: function w and v}.
	The family of orthogonal polynomials on the unit circle (OPUC) w.r.t.\ the measure~\eqref{eq: measure} is defined as the collection of polynomials $\lbrace p_n(z) \rbrace_{n \in \N}$ written as
	\begin{equation}\label{eq: n th opuc}
		p_n(z)=\kappa_{n}z^n + \dots +\kappa_0, \qquad
\kappa_n >0
	\end{equation}
	and such that the following relation holds for any indices $k$, $h$
	\begin{equation*}
		\int_{-\pi}^{\pi} \overline{p_k\big (\e^{\i\beta}\big)}p_h\big(\e^{\i\beta}\big)\frac{\d \mu(\beta)}{2\pi}=\delta_{k,h}.
	\end{equation*}
	The family of monic orthogonal polynomials $\lbrace\pi_n(z)\rbrace$ associated to the previous ones is defined in analogue way, so that $p_n(z)=\kappa_{n}\pi_n(z)$.
	
	\subsection{Toeplitz determinants related to OPUC}
	We recall that $\varphi(z)=\e^{w(z)}$, $z\in S^1$ with $w(z)$ defined as in~\eqref{eqintro: function w and v} and
 that we defined $D_n \coloneqq \det (T_n(\varphi))$ (by convention $D_{-1}=1$) to be the $n$-Toeplitz determinant associated to the symbol~$\varphi$ (see equations~\eqref{eq:Tn} and~\eqref{def:Dn}). Because $\varphi(z)$ is a real nonnegative function, $D_n \in \mathbb{R}_{>0}$.
	\begin{Proposition}
		If $\varphi(z)$ is a real nonnegative function, we have that
		\begin{equation}
			\label{eq:OPUC in Toeplitz form}
			p_\ell(z)=\frac{1}{\sqrt{D_\ell D_{\ell-1}}}\det\begin{pmatrix}
				\varphi_0 & \varphi_{-1} & \dots & \varphi_{-\ell+1}&\varphi_{-\ell}\\
				\varphi_1& \varphi_0 & \dots & \varphi_{-\ell+2} & \varphi_{-\ell+1}\\
				\vdots& \vdots & \ddots & \vdots &\vdots\\
				\varphi_{\ell-1}& \varphi_{\ell-2}& \dots & \varphi_0 &\varphi_{-1}\\
				1 &z &\dots &z^{\ell-1}& z^\ell
			\end{pmatrix}\!, \qquad
\ell\geq 0.
		\end{equation}
	\end{Proposition}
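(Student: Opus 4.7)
The plan is to check directly that the right-hand side, call it $\tilde{p}_\ell(z)$, is a polynomial of degree $\ell$ with positive leading coefficient satisfying the orthonormality conditions defining $p_\ell$; uniqueness of the OPUC then closes the argument.

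First I would expand the determinant along the last row: only that row contains $z$, so
\[
\tilde{p}_\ell(z)=\frac{1}{\sqrt{D_\ell D_{\ell-1}}}\sum_{j=0}^{\ell}(-1)^{\ell+j}M_j\,z^j,
\]
where $M_j$ is the minor obtained by deleting the last row and the $(j+1)$-th column. Its $z^\ell$-coefficient is $M_\ell/\sqrt{D_\ell D_{\ell-1}}=D_{\ell-1}/\sqrt{D_\ell D_{\ell-1}}=\sqrt{D_{\ell-1}/D_\ell}$, which is strictly positive because $\varphi>0$ on $S^1$ makes every $T_n(\varphi)$ positive definite, so $D_n>0$.

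Next I would test orthogonality against each monomial $z^k$ with $0\leq k<\ell$. Using $\int_{-\pi}^{\pi}\e^{\i(j-k)\beta}\varphi(\e^{\i\beta})\frac{\d\beta}{2\pi}=\varphi_{k-j}$ and pulling the integral through the cofactor expansion,
\[
\int_{-\pi}^{\pi}\e^{-\i k\beta}\tilde{p}_\ell(\e^{\i\beta})\,\frac{\d\mu(\beta)}{2\pi}=\frac{\det A_k}{\sqrt{D_\ell D_{\ell-1}}},
\]
where $A_k$ is the original $(\ell+1)\times(\ell+1)$ matrix with its last row replaced by $(\varphi_k,\varphi_{k-1},\dots,\varphi_{k-\ell})$. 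For $0\leq k\leq\ell-1$ this new row coincides with row $k+1$, so $\det A_k=0$; for $k=\ell$ the matrix $A_\ell$ is exactly $T_\ell(\varphi)$, giving $\sqrt{D_\ell/D_{\ell-1}}$. Since $\{z^k\}_{0\leq k<\ell}$ spans the polynomials of degree strictly less than $\ell$, $\tilde{p}_\ell$ is $L^2(\d\mu/(2\pi))$-orthogonal to every $p_k$ with $k<\ell$.

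Finally, this orthogonality forces only the top-degree term of $\overline{\tilde{p}_\ell}$ to contribute to its squared norm; since $w$ is an even real trigonometric polynomial, all $\varphi_k$, and hence $\kappa_\ell$, are real, so
\[
\int_{-\pi}^{\pi}\big|\tilde{p}_\ell(\e^{\i\beta})\big|^2\,\frac{\d\mu(\beta)}{2\pi}=\overline{\kappa_\ell}\int_{-\pi}^{\pi}\e^{-\i\ell\beta}\tilde{p}_\ell(\e^{\i\beta})\,\frac{\d\mu(\beta)}{2\pi}=\sqrt{\frac{D_{\ell-1}}{D_\ell}}\cdot\sqrt{\frac{D_\ell}{D_{\ell-1}}}=1.
\]
Thus $\tilde{p}_\ell=p_\ell$. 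I do not anticipate any real obstacle: the entire argument is cofactor-expansion bookkeeping, and the one idea worth naming is the row-coincidence observation that delivers orthogonality to all lower-degree monomials for free.
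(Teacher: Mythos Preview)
Your proof is correct and is precisely the standard argument: expand along the bottom row, observe that integrating against $\overline{z^k}$ replaces that row by one already present in the Toeplitz block, and then read off the norm from the $k=\ell$ case. The paper does not actually spell out a proof---it simply remarks that the argument is the same as for orthogonal polynomials on the real line and points to Deift's book---so your write-up is exactly the kind of verification that reference would contain.
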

	\begin{proof}
		The proof is similar to the one for the orthogonal polynomials on the real line, that can be found, e.g., \cite[equation~(3.5)]{DeiftOPRM}, and following discussion.
	\end{proof}

	\begin{Corollary}
		The ratio of two consecutive Toeplitz determinants is expressed as
		\begin{equation}\label{eq: ratio Toeplitz}
			\frac{D_{\ell-1}}{D_{\ell}}=\kappa_\ell^2,\qquad
\ell \geq 0.
		\end{equation}
	\end{Corollary}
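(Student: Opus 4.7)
The plan is to read off $\kappa_\ell$ directly from the determinantal formula \eqref{eq:OPUC in Toeplitz form} and compare it with the normalization prefactor. Since $\kappa_\ell$ is by definition the coefficient of $z^\ell$ in $p_\ell(z)$, and $z^\ell$ appears only as the rightmost entry of the bottom row of the $(\ell+1)\times(\ell+1)$ matrix in \eqref{eq:OPUC in Toeplitz form}, I would expand that determinant along its last row. The cofactor of $z^\ell$ is (up to sign) the determinant of the upper-left $\ell\times\ell$ minor, and that minor is exactly the Toeplitz matrix $T_{\ell-1}(\varphi)$ as defined in \eqref{eq:Tn}, hence has determinant $D_{\ell-1}$.

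Putting this together, the coefficient of $z^\ell$ in $p_\ell(z)$ equals
\[
\kappa_\ell \;=\; \frac{D_{\ell-1}}{\sqrt{D_\ell D_{\ell-1}}} \;=\; \sqrt{\frac{D_{\ell-1}}{D_\ell}},
\]
where I also use that $D_\ell,D_{\ell-1}>0$ (noted just before the Proposition, since $\varphi$ is real and nonnegative) so the square roots are well-defined and $\kappa_\ell>0$ is automatic, consistent with the normalization in \eqref{eq: n th opuc}. Squaring yields the claimed identity \eqref{eq: ratio Toeplitz}.

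Essentially there is no obstacle here: the whole argument is a one-line cofactor expansion, and the only care needed is to check the sign of the cofactor (the $(\ell+1,\ell+1)$-cofactor carries sign $(-1)^{2(\ell+1)}=+1$) and to note the boundary case $\ell=0$, where $p_0(z)=1/\sqrt{D_0}$ with $D_{-1}=1$ by convention, so $\kappa_0^2=1/D_0=D_{-1}/D_0$ also fits the formula.
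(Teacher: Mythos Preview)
Your proof is correct and follows essentially the same route as the paper's: both read off the leading coefficient $\kappa_\ell$ from the determinantal formula \eqref{eq:OPUC in Toeplitz form} by observing that the cofactor of $z^\ell$ is the upper-left $\ell\times\ell$ Toeplitz minor $D_{\ell-1}$, and then square. Your additional remarks on the cofactor sign and the $\ell=0$ boundary case are fine but not needed beyond what the paper already does.
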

	\begin{proof}
		Thanks to formula~\eqref{eq:OPUC in Toeplitz form}, we have that
		\begin{equation*}
			p_\ell(z)= \frac{1}{\sqrt{D_\ell D_{\ell-1}}}\det\begin{pmatrix}
				\varphi_0 & \varphi_{-1} & \dots & \varphi_{-\ell+1}\\
				\varphi_1& \varphi_0 & \dots & \varphi_{-\ell+2} \\
				\vdots& \vdots & \ddots & \vdots \\
				\varphi_{\ell-1}& \varphi_{\ell-2}& \dots & \varphi_0
			\end{pmatrix}z^\ell+\dots =\sqrt{\frac{D_{\ell-1}}{D_\ell}}z^\ell+\cdots,
		\end{equation*}
		and by definition $p_\ell(z)= \kappa_\ell \pi_\ell(z)$ with the latter being the $\ell$-th monic orthogonal polynomial on $S^1$. Thus formula~\eqref{eq: ratio Toeplitz} follows.
	\end{proof}
	
	\subsection{Riemann--Hilbert problem associated to OPUC}
	The family $\{\pi_n\}$ of orthogonal polynomials has a well-known characterization in terms of a $2 \times 2$ dimensional Riemann--Hilbert problem, also depending on $n \geq 0$.
	\begin{problem}\label{RH1}
		The function $Y(z)\coloneqq Y(n,\theta_j;z)\colon\mathbb{C} \rightarrow \mathrm{GL}(2,\mathbb{C})$ has the following properties:
		\begin{enumerate}\itemsep=0pt
			\item[(1)] $Y(z)$ is analytic for every $z\in \mathbb{C}\setminus S^1$;
			\item[(2)] $Y(z)$ has continuous boundary values $Y_{\pm}(z)$ while approaching non-tangentially $S^1$ either from the left or from the right, and they are related for all $z\in S^1$ through
			\begin{equation*}
				Y_+(z)=Y_-(z)J_Y(z),\qquad \text{with}\quad J_Y(z)=\begin{pmatrix}
					1 &z^{-n}\e^{w(z)}\\
					0&1
				\end{pmatrix}\!;
			\end{equation*}
			\item[(3)] $Y(z)$ is normalized at $\infty$ as
			\begin{equation*}
				Y(z) \sim \Bigg(I+\sum_{j=1}^{\infty} \frac{Y_j(n,\theta_j)}{z^j}\Bigg)z^{n\sigma_3},\qquad z \rightarrow \infty,
			\end{equation*}
			where $\sigma_3$ denotes the Pauli's matrix $\sigma_3\coloneqq \left(\begin{smallmatrix}
				1&\hphantom{-} 0\\0&-1
			\end{smallmatrix}\right)$.
		\end{enumerate}	
	\end{problem}
	It is known from~\cite{baik1999distribution} that the above Riemann--Hilbert problem, for each $n\geq 0$, admits a unique solution which is explicitly written in terms of the family $\lbrace\pi_n(z)\rbrace$. Before stating the result, we introduce the following notation. For every polynomial $q(z)$, $z\in \mathbb{C}$, its reverse polynomial $q^*(z)$ is defined as the polynomial of the same degree such that
	\begin{equation*}
		q^*(z)\coloneqq z^n\overline{q\big(\bar{z}^{-1}\big)}.
	\end{equation*}
	For every $\big(L^p\big(S^1\big)\big)$ function $f(y)$, its Cauchy transform $\mathcal{C}f(z)$ is defined for any $z \notin S^1$ as
	\begin{equation*}
		\left(\mathcal{C}f(y)\right)(z) \coloneqq\frac{1}{2\pi \i} \int_{S^1} \frac{f(y)}{y-z}\,\d y.
	\end{equation*}
\begin{Remark}
Notice that the results in~\cite{baik1999distribution} for the Riemann--Hilbert characterization a family of orthogonal polynomials on the unit circle are a sort of extension of the results known from~\cite{FokasItsKitaev2, FokasItsKitaev1} for the case of orthogonal polynomials on the real line.
\end{Remark}
	\begin{Theorem}
		For every $n\geq 0$, the Riemann--Hilbert Problem~$\ref{RH1}$ admits a unique solution~$Y(z)$ that is written as
\begin{equation}\label{eq:sol RH}
Y(z)= \begin{pmatrix}
\pi_n(z)&\mathcal{C}\big(y^{-n}\pi_n(y)\e^{w(y)}\big)(z)
\\[1mm]	
-\kappa^2_{n-1}\pi_{n-1}^*(z)&-\kappa_{n-1}^2\mathcal{C}\big(y^{-n}\pi_{n-1}^*(y)\e^{w(y)}\big)(z)			\end{pmatrix}\!.
		\end{equation}
		Moreover, $\det(Y(z))\equiv 1$.
	\end{Theorem}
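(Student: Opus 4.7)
The plan is to follow the standard Fokas--Its--Kitaev strategy, as adapted to the unit circle in~\cite{baik1999distribution}. Concretely, one first checks directly that the explicit candidate in~\eqref{eq:sol RH} solves the Riemann--Hilbert Problem~\ref{RH1}, then deduces $\det Y\equiv 1$ from $\det J_Y\equiv 1$ by a Liouville argument, and finally uses both to prove uniqueness.

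\textbf{Step 1 (the candidate solves the RH problem).} Analyticity off $S^1$ is immediate: the entries of the first column are polynomials (hence entire), and the entries of the second column are Cauchy integrals of $L^1$ densities on $S^1$, which are analytic in $\C\setminus S^1$. For the jump on $S^1$, apply the Plemelj--Sokhotski formula $(\mathcal{C}f)_+(z)-(\mathcal{C}f)_-(z)=f(z)$ with $f(y)=y^{-n}\pi_n(y)\e^{w(y)}$ and $f(y)=-\kappa_{n-1}^2 y^{-n}\pi_{n-1}^*(y)\e^{w(y)}$ respectively. The first column has no jump, so a direct computation shows that $Y_+(z)=Y_-(z)J_Y(z)$ with the prescribed triangular jump matrix.

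\textbf{Step 2 (asymptotics at $\infty$).} For $|z|>1$ expand
\begin{equation*}
\mathcal{C}f(z)=-\sum_{k\geq 0}\frac{1}{z^{k+1}}\,\frac{1}{2\pi\i}\int_{S^1}y^{k}f(y)\,\d y.
\end{equation*}
Applied to $f(y)=y^{-n}\pi_n(y)\e^{w(y)}$, the coefficients reduce after the substitution $y=\e^{\i\beta}$ to Hermitian inner products $\langle \pi_n,y^{n-1-k}\rangle_w$, which vanish for $k=0,\dots,n-1$ by orthogonality of $\pi_n$ to $1,z,\dots,z^{n-1}$. Hence $\mathcal{C}(y^{-n}\pi_n\e^{w})(z)=O\bigl(z^{-n-1}\bigr)$, so the $(1,2)$ entry of $Y$ matches the off-diagonal term of $(I+O(z^{-1}))z^{n\sigma_3}$. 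The same scheme applied to $\pi_{n-1}^*$ (using that $w$ is real, $\pi_{n-1}^*$ is orthogonal to $y,\dots,y^{n-1}$, and $\langle\pi_{n-1},\pi_{n-1}\rangle_w=1/\kappa_{n-1}^2$) yields $-\kappa_{n-1}^2\mathcal{C}\bigl(y^{-n}\pi_{n-1}^*\e^{w}\bigr)(z)\sim z^{-n}$, matching the $(2,2)$ entry of the normalization. The first column is manifestly consistent with $z^{n\sigma_3}$, since $\pi_n$ is monic and $\pi_{n-1}^*(z)$ has degree $n-1$.

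\textbf{Step 3 ($\det Y\equiv 1$ and uniqueness).} Since $\det J_Y(z)\equiv 1$, $\det Y$ has no jump across $S^1$ and extends to an entire function on $\C$; the prescribed asymptotics give $\det Y(z)\to 1$ as $z\to\infty$, so $\det Y\equiv 1$ by Liouville. For uniqueness, given any other solution $\widetilde{Y}$, the same argument yields $\det\widetilde{Y}\equiv 1$, so $\widetilde{Y}Y^{-1}$ is well defined off $S^1$, has no jump (both sides share the jump matrix $J_Y$), and is therefore entire. Its behavior at infinity is $(I+O(z^{-1}))z^{n\sigma_3}\cdot z^{-n\sigma_3}(I+O(z^{-1}))^{-1}=I+O(z^{-1})$, so by Liouville again $\widetilde{Y}Y^{-1}\equiv I$.

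The only nontrivial step is Step~2: identifying exactly which moments of the Cauchy expansion vanish and pinning down the first non-vanishing one. This is where the defining orthogonality of $\pi_n$ and the conjugation-symmetric orthogonality enjoyed by the reverse polynomial $\pi_{n-1}^*$ enter, and it is the one place where the explicit form of the measure and the reality of $w$ are used in an essential way.
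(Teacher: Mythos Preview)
Your proof is correct and follows exactly the standard Fokas--Its--Kitaev verification adapted to the unit circle. The paper itself does not give an independent argument: it simply cites \cite[Lemma~4.1]{baik1999distribution}, so your Steps~1--3 supply precisely the details that the cited reference contains and that the paper omits.
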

	\begin{proof}
		See~\cite[Lemma 4.1]{baik1999distribution}.
	\end{proof}

	The solution $Y(z)$ has a symmetry which will be very useful in the following section.
	\begin{Corollary}
		The unique solution $Y(z)$ of the Riemann--Hilbert Problem~$\ref{RH1}$ is such that
		\begin{align}
			\label{eq:sym Y}
			&Y(z) = \sigma_3Y(0)^{-1}Y\big(z^{-1}\big)z^{n\sigma_3}\sigma_3,\\ \label{eq:sym Yconj}
			& Y(z)=\overline{Y(\bar{z})}.
		\end{align}
	\end{Corollary}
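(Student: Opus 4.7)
Both identities can be established by the same strategy: I would define the right-hand side as a candidate $\tilde Y(z)$, verify that it satisfies the three conditions of Riemann--Hilbert Problem~\ref{RH1}, and conclude $\tilde Y=Y$ by uniqueness.

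For~\eqref{eq:sym Yconj}, I would take $\tilde Y(z)\coloneqq\overline{Y(\bar z)}$. Analyticity off $S^1$ is immediate since $z\mapsto\bar z$ preserves $\mathbb{C}\setminus S^1$. The reality of the $\theta_j$ gives $\overline{v(\bar z)}=v(z)$, hence $\overline{w(\bar z)}=w(z)$ on $S^1$; a one-line computation then yields $\overline{J_Y(\bar z)}=J_Y(z)$, so the jump relation is preserved. For the normalization at infinity, conjugating the asymptotic expansion of $Y$ and using $\overline{\bar z^{n\sigma_3}}=z^{n\sigma_3}$ suffices.

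For~\eqref{eq:sym Y}, I would set $\tilde Y(z)\coloneqq\sigma_3 Y(0)^{-1}Y(z^{-1})z^{n\sigma_3}\sigma_3$, noting that $Y(0)$ is invertible because $\det Y\equiv 1$. The subtle step---and what I expect to be the main obstacle---is the analyticity at $z=0$: by itself $Y(z^{-1})$ is singular there, with its first column growing like $z^{-n}$ and its second decaying like $z^n$. The plan is to substitute $w=z^{-1}$ into the asymptotic expansion $Y(w)=\bigl(I+\sum_{j\ge 1}Y_jw^{-j}\bigr)w^{n\sigma_3}$ at infinity, yielding $Y(z^{-1})z^{n\sigma_3}=I+\sum_{j\ge 1}Y_jz^j$, which is holomorphic at the origin; so the right-multiplication by $z^{n\sigma_3}$ exactly cancels the singularity of $Y(z^{-1})$. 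The constant left factor $\sigma_3 Y(0)^{-1}$ and right factor $\sigma_3$ do not affect analyticity. For the jump, the inversion $z\mapsto z^{-1}$ exchanges the interior and exterior of $S^1$ and hence also the $\pm$ sides, so a short calculation will give $\tilde Y_-(z)^{-1}\tilde Y_+(z)=\sigma_3 z^{-n\sigma_3}J_Y(z^{-1})^{-1}z^{n\sigma_3}\sigma_3$; using the key fact $w(z^{-1})=w(z)$ on $S^1$, which is built into the definition~\eqref{eqintro: function w and v} of $w$, this should collapse to $J_Y(z)$ after tracking the signs coming from conjugation by $\sigma_3$ and the power shift coming from conjugation by $z^{-n\sigma_3}$. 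At $\infty$, $Y(z^{-1})\to Y(0)$ forces $\sigma_3 Y(0)^{-1}Y(z^{-1})\to\sigma_3$, and since $\sigma_3 z^{n\sigma_3}\sigma_3=z^{n\sigma_3}$, the normalization $\tilde Y(z)\sim(I+O(z^{-1}))z^{n\sigma_3}$ follows. Uniqueness of the solution to Riemann--Hilbert Problem~\ref{RH1} then yields both symmetries at once.
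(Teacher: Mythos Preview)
Your proposal is correct: the uniqueness argument you outline is the standard way to establish both symmetries, and your handling of the three RHP conditions—especially the cancellation of the singularity at $z=0$ via $Y(z^{-1})z^{n\sigma_3}$ and the jump computation using $w(z^{-1})=w(z)$—is accurate. The paper itself does not give a proof but simply cites \cite[Proposition~5.12]{BaikDeiftSuidan}, where the argument is exactly the one you describe.
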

	\begin{proof}
	See~\cite[Proposition 5.12]{BaikDeiftSuidan}.
	\end{proof}

Notice that the factor $Y(0)=Y(n, \theta_j;0)$ appearing in equation~\eqref{eq:sym Y} has a very explicit form by equation~\eqref{eq:sol RH}. This will be useful in the following sections.
	\begin{Lemma}
		For every $n\geq 0$, we have
		\begin{equation}
			\label{eq: Y(0)}
			Y(0)=Y(n,\theta_j;0)=\begin{pmatrix}
				x_n&\kappa_{n}^{-2}\\
				-\kappa^2_{n-1}&x_n
			\end{pmatrix}\!,
		\end{equation}
		where we denoted with $x_n\coloneqq\pi_n(0)$, and $\kappa_n$ is defined as in equation~\eqref{eq: n th opuc}. Moreover, we have
		\begin{equation}
			\label{eq:fundamental rel}
			\frac{\kappa_{n-1}^2}{\kappa_n^2}=1-x_n^2,
		\end{equation}
	and we have $x_n \in \mathbb{R}$.
	\end{Lemma}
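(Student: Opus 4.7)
The plan is to compute the four entries of $Y(0)$ one by one, using the explicit form \eqref{eq:sol RH}, the symmetries \eqref{eq:sym Y}--\eqref{eq:sym Yconj}, and the identity $\det Y\equiv 1$; the fundamental relation \eqref{eq:fundamental rel} will then drop out as a byproduct.

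Three of the entries are read off directly from \eqref{eq:sol RH}. The $(1,1)$ entry is $\pi_n(0)=x_n$ by definition. For the $(2,1)$ entry, the reverse polynomial $\pi_{n-1}^*(z)=z^{n-1}\overline{\pi_{n-1}(\bar z^{-1})}$ has constant term equal to the complex conjugate of the leading coefficient of $\pi_{n-1}$, which is $1$ since $\pi_{n-1}$ is monic; hence the entry equals $-\kappa_{n-1}^2\pi_{n-1}^*(0)=-\kappa_{n-1}^2$. The $(1,2)$ entry $\mathcal{C}\big(y^{-n}\pi_n(y)\e^{w(y)}\big)(0)$ reduces, after parametrizing $y=\e^{\i\beta}$ and using $\d y/y=\i\,\d\beta$, to $\int_{S^1} y^{-n}\pi_n(y)\,\d\mu(y)$. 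On the unit circle one has the identity $\overline{\pi_n(y)}=y^{-n}\pi_n^*(y)$, so
\[
\int_{S^1} y^{-n}\pi_n^*(y)\pi_n(y)\,\d\mu(y)=\int_{S^1}\overline{\pi_n(y)}\pi_n(y)\,\d\mu(y)=\|\pi_n\|_{L^2(\mu)}^2=\kappa_n^{-2}.
\]
Expanding $\pi_n^*(y)=1+(\text{terms of degree }1,\dots,n)$, every contribution with a positive power $y^k$, $1\leq k\leq n$, produces an integral of the form $\int y^{-(n-k)}\pi_n\,\d\mu$ with $0\leq n-k\leq n-1$, which vanishes by the orthogonality of $\pi_n$ to $\{1,z,\dots,z^{n-1}\}$. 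Only the constant term of $\pi_n^*$ survives and one obtains $(Y(0))_{12}=\int y^{-n}\pi_n\,\d\mu=\kappa_n^{-2}$.

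The $(2,2)$ entry is extracted from the symmetry \eqref{eq:sym Y}. Letting $z\to 0$ and using the asymptotic normalization of $Y$ at $\infty$ to check that $Y\big(z^{-1}\big)z^{n\sigma_3}\to I$, one obtains $Y(0)=\sigma_3 Y(0)^{-1}\sigma_3$, equivalently $Y(0)^{-1}=\sigma_3 Y(0)\sigma_3$. Comparing the $(1,1)$ entries of both sides and using $\det Y(0)=1$ forces $(Y(0))_{22}=(Y(0))_{11}=x_n$. With all four entries of $Y(0)$ identified, evaluating $\det Y(0)=1$ gives
\[
x_n^2-\kappa_n^{-2}\big(-\kappa_{n-1}^2\big)=1\qquad\Longleftrightarrow\qquad \frac{\kappa_{n-1}^2}{\kappa_n^2}=1-x_n^2,
\]
which is \eqref{eq:fundamental rel}. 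Finally, setting $z=0$ in \eqref{eq:sym Yconj} yields $Y(0)=\overline{Y(0)}$, so in particular $x_n\in\mathbb{R}$. The only delicate step is the $z\to 0$ limit in \eqref{eq:sym Y}, where the factor $z^{n\sigma_3}$ is singular, but this is controlled explicitly by the large-$z^{-1}$ asymptotics of $Y$ and poses no genuine obstacle.
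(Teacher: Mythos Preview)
Your proof is correct and follows essentially the same route as the paper: the first column of $Y(0)$ is read off from \eqref{eq:sol RH}, the $(1,2)$ entry is computed via orthogonality (the paper replaces $\overline{z^n}$ by $\overline{p_n}/\kappa_n$ modulo lower-degree terms, while you insert $\pi_n^*$ and kill the nonconstant terms---these are equivalent), the $(2,2)$ entry is obtained from the symmetry \eqref{eq:sym Y} by passing to the identity $Y(0)=\sigma_3 Y(0)^{-1}\sigma_3$, and \eqref{eq:fundamental rel} and $x_n\in\R$ follow from $\det Y(0)=1$ and \eqref{eq:sym Yconj} respectively. The only cosmetic difference is that the paper phrases the limit in the symmetry argument as $z\to\infty$ rather than $z\to 0$, but both yield the same matrix identity.
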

	\begin{proof}
		The first column of $Y(n;0)$ directly follows from the evaluation in $z=0$ of $Y(n;z)$ as given in equation~\eqref{eq:sol RH}. Indeed, $Y^{11}(n;0)=\pi_{n}(0)$ and $Y^{21}(n;0)= -\kappa_{n-1}^2\pi^*_{n-1}(0)$ but we observe that
		\begin{equation*}
			\pi^*_{n-1}(0) = z^{n-1}\overline{\pi_{n-1}\big(\bar{z}^{-1}\big)}\big\vert_{z=0} = z^{n-1}\big(z^{-(n-1)}+\dots +\overline{\pi_{n-1}(0)}\big)\big\vert_{z=0}=1.
		\end{equation*}
		Thus we conclude that $Y^{21}(n;0)= -\kappa_{n-1}^2$. For what concerns the second column of $Y(n;0)$, we first find the $(2,2)$-entry. This is indeed easily deduced from the symmetry given in~\eqref{eq:sym Y}. In~the limit for $z\rightarrow \infty$ it gives
		\begin{equation*}
			Y(n;0)=\sigma_3Y^{-1}(n;0)\sigma_3,
		\end{equation*}
		thus $Y^{22}(n;0)=Y^{11}(n;0)=\pi_n(0)$. Finally, for the entry $(1,2)$ of $Y(n;0)$, we compute it explicitly using the orthonormality property of the polynomials $p_m(z)$
		\begin{align*}
Y^{12}(n;0)&= \frac{1}{2\pi {\rm i}}\int_{S^1}\frac{\pi_n(s)s^{-n}w(s)}{s}\,{\rm d}s = \int_{-\pi}^{\pi}\pi_n\big(\e^{{\rm i}\theta}\big)\overline{\e^{{\rm i} n \theta}}w\big(\e^{{\rm i}\theta}\big)\frac{{\rm d}\theta}{2\pi}
\\
&= \frac{1}{\kappa_n^2}\int_{-\pi}^{\pi}p_n\big(\e^{{\rm i}\theta}\big)\overline{p_n\big(\e^{{\rm i}\theta}\big)}w\big(\e^{{\rm i}\theta}\big)\frac{{\rm d}\theta}{2\pi}=\frac{1}{\kappa_n^2}.	
	\end{align*}
		Equation~\eqref{eq:fundamental rel} comes from the fact that $\det(Y(n,\theta_j;z))=1$ identically in $z$ and so in particular for $z=0$ by writing $Y(n,\theta_j;0)$ as in equation~\eqref{eq: Y(0)}, relation~\eqref{eq:fundamental rel} is obtained.
		
		Finally, the fact that $x_n$ is real follows from the entry $(1,1)$ of equation~\eqref{eq:sym Yconj} together with equation~\eqref{eq:sol RH}.
	\end{proof}

At this point, we are already able to express the ratio of Toeplitz determinants in terms of the constant term of the monic orthogonal polynomials, as follows.

	\begin{Corollary}\label{cor:toeplitz recursion relation xn} For every $n\geq 1$, the Toeplitz determinants $D_n$ satisfy the recursion relation
	\begin{equation}\label{eq: discrete log der }
		\frac{D_{n-2}D_{n }}{D_{n-1}^2} = 1-x_n^2.
	\end{equation}
	\end{Corollary}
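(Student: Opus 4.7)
The statement is essentially a one-line combination of two facts already established in the excerpt, so my proposal is to write it as a direct computation rather than introducing any new machinery.

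The plan is to rewrite the ratio $D_{n-2}D_n/D_{n-1}^2$ as a product of two consecutive ratios of Toeplitz determinants, namely
\[
\frac{D_{n-2}D_n}{D_{n-1}^2} = \frac{D_{n-2}}{D_{n-1}} \cdot \frac{D_n}{D_{n-1}},
\]
and then apply equation~\eqref{eq: ratio Toeplitz} to each factor. Using it with $\ell = n-1$ gives $D_{n-2}/D_{n-1} = \kappa_{n-1}^2$, and using it with $\ell = n$ (after inverting) gives $D_n/D_{n-1} = 1/\kappa_n^2$. Multiplying yields $\kappa_{n-1}^2/\kappa_n^2$.

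At this point I invoke the identity~\eqref{eq:fundamental rel}, $\kappa_{n-1}^2/\kappa_n^2 = 1 - x_n^2$, which was itself obtained from $\det Y(n,\theta_j;0)=1$ combined with the explicit form of $Y(0)$ in~\eqref{eq: Y(0)}. This immediately yields the desired relation~\eqref{eq: discrete log der }.

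There is no serious obstacle here: the two ingredients (the ratio formula for consecutive Toeplitz determinants expressed through the leading coefficients $\kappa_\ell$, and the fundamental relation linking $\kappa_{n-1}^2/\kappa_n^2$ to $1-x_n^2$ via the Riemann--Hilbert solution evaluated at $z=0$) have both been proved in the preceding paragraphs. The only thing to check is that the index shifts are consistent, i.e.\ that applying~\eqref{eq: ratio Toeplitz} at $\ell=n-1$ and at $\ell=n$ produces exactly the factors $\kappa_{n-1}^2$ and $\kappa_n^{-2}$ needed, which they do for all $n\geq 1$. Hence the corollary follows without further work.
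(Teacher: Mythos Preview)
Your proof is correct and follows exactly the paper's own argument: the paper likewise obtains the corollary by combining equation~\eqref{eq: ratio Toeplitz} at two consecutive integers with the relation~\eqref{eq:fundamental rel}. Your write-up just makes the intermediate factorization $D_{n-2}D_n/D_{n-1}^2 = \kappa_{n-1}^2/\kappa_n^2$ explicit, which is a helpful clarification but not a departure in approach.
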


\begin{proof}
	Putting together equation~\eqref{eq:fundamental rel} with equation~\eqref{eq: ratio Toeplitz} (for two consecutive integers) we obtain the recursion relation~\eqref{eq: discrete log der }.
\end{proof}

We emphasize again that the symbol $\varphi(z)$ actually depends on the natural parameter $N$, so the Toeplitz determinants $D_n$, $n \geq 1$~\eqref{def:Dn} do as well as $x_n=\pi_n(0)$, $n \geq 1$ do (since it is the constant coefficient of the $n$-th monic OPUC w.r.t.\ the $N$-depending measure~\eqref{eq: measure},~\eqref{eqintro: function w and v}). The $N$-dependence of the latter will be emphasized in the following section, where $x_n$ is proved to be a solution of the $N$-th higher order generalization of the discrete Painlev\'e~II equation.
	
	We consider now the following matrix-valued function
	\begin{equation}
		\label{eq:Psi}
		\Psi(n,\theta_j;z)\coloneqq \begin{pmatrix}
			1&0\\
			0& \kappa_{n}^{-2}
		\end{pmatrix}Y(n,\theta_j;z)\begin{pmatrix}
			1&0\\
			0&z^n
		\end{pmatrix}\e^{w(z)\frac{\sigma_3}{2}}.
	\end{equation}
	Thanks to the properties of $Y(z;n,\theta_j)$ from the Riemann--Hilbert Problem~\ref{RH1} one can prove that $\Psi(n,\theta_j;z)$ satisfies the following Riemann--Hilbert problem.

\begin{problem}\label{RH2}
The function $\Psi(z)\coloneqq \Psi(n,\theta_j;z)\colon\mathbb{C} \rightarrow \mathrm{GL}(2,\mathbb{C})$ has the following properties:
\begin{enumerate}\itemsep=0pt
\item[(1)] $\Psi(z)$ is analytic for every $z \in \mathbb{C}\setminus \big\lbrace S^1\cup \lbrace 0 \rbrace \big\rbrace$;

\item[(2)] $\Psi(z)$ has continuous boundary values $ \Psi_{\pm}(z)$ while approaching non-tangentially $S^1$ either from the left or from the right, and they are related for all $z \in S^1$ through
\begin{equation}\label{eq:jump Psi}
\Psi_+ (z) =\Psi_-(z) J_{0}, \qquad
J_{0}=\begin{pmatrix}1 &1 \\0&1\end{pmatrix}\!;
\end{equation}

\item[(3)] $\Psi(z)$ has asymptotic behavior near $0$ given by
\begin{equation}\label{eq:Psi in 0}
\Psi(z) \sim \begin{pmatrix}1&0\\0& \kappa_{n}^{-2}\end{pmatrix}
Y(0)\Bigg(I+\sum_{j= 1}^\infty z^j\widetilde{Y}_j(n)\Bigg)
\begin{pmatrix}1&0\\0&z^n\end{pmatrix}\e^{w(z)\frac{\sigma_3}{2}},\qquad
z\rightarrow 0;
\end{equation}

\item[(4)] $\Psi(z)$ has asymptotic behavior near $\infty$ given by
\begin{equation}\label{eq: Psi in infy}
\Psi(z) \sim \begin{pmatrix}1&0\\0& \kappa_{n}^{-2}\end{pmatrix}
\Bigg(I+\sum_{j= 1}^\infty\frac{Y_j(n)}{z^j}\Bigg)
\begin{pmatrix}z^n&0\\0&1\end{pmatrix}\e^{w(z)\frac{\sigma_3}{2}},\qquad
\vert z \vert \rightarrow \infty.
\end{equation}
\end{enumerate}
\end{problem}

\begin{Proposition}
The function $\Psi(n,\theta_j;z)$ defined in~\eqref{eq:Psi} solves the Riemann--Hilbert Problem~$\ref{RH2}$.
\end{Proposition}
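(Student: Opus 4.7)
My plan is to verify the four items of Problem~\ref{RH2} one by one, starting from the definition \eqref{eq:Psi} and invoking the properties of $Y$ from Problem~\ref{RH1}. Setting $D \coloneqq \mathrm{diag}(1,\kappa_n^{-2})$ and $E(z) \coloneqq \mathrm{diag}(1,z^n)$, the definition factorizes as $\Psi(z) = D\,Y(z)\,E(z)\,e^{w(z)\sigma_3/2}$, which makes property~(1) transparent: $Y$ is analytic off $S^1$, $E(z)$ is entire, and $e^{w(z)/2}$ is analytic off the single point $z=0$, where $v(z^{-1})$ is the only source of a singularity.

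For the jump condition~(2), I would note that $D$, $E(z)$ and $e^{w(z)\sigma_3/2}$ are single-valued across $S^1$, so the jump of $\Psi$ reduces to conjugating the jump $J_Y(z)$ of $Y$ first by $E(z)$ and then by $e^{w(z)\sigma_3/2}$. The first conjugation converts the $z^{-n}e^{w(z)}$ off-diagonal entry of $J_Y$ into $e^{w(z)}$, and the subsequent conjugation by the diagonal exponential kills this remaining scalar, producing the constant matrix $J_0$.

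For the two asymptotic expansions, (4) at infinity is essentially a direct substitution of the large-$z$ asymptotics of $Y$ from Problem~\ref{RH1} into \eqref{eq:Psi}, together with the identity $z^{n\sigma_3} E(z) = \mathrm{diag}(z^n, 1)$, which reproduces \eqref{eq: Psi in infy}. For (3) at the origin, the key point is that $Y$ is in fact holomorphic in a neighborhood of $0$ (since $S^1$ is its only jump contour), so it admits a Taylor expansion there; the invertibility of $Y(0)$, ensured by $\det Y \equiv 1$ and the explicit form~\eqref{eq: Y(0)}, lets me factor $Y(z) = Y(0)\bigl(I + \sum_{j\geq 1} z^j \widetilde{Y}_j(n)\bigr)$, and substituting into \eqref{eq:Psi} yields \eqref{eq:Psi in 0} on the nose.

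The argument is really a bookkeeping exercise in matrix conjugations, with no genuine obstacle. The step most prone to error is the jump computation, where the $z^n$ in $E$ must cancel the $z^{-n}$ in $J_Y$ exactly, and the diagonal exponential must strip the surviving $e^{w}$ from the off-diagonal slot; once these cancellations are tracked carefully, everything else is a direct reading from the known structure of $Y$.
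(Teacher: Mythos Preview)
Your proposal is correct and follows essentially the same approach as the paper's proof. The paper is terser---it simply states that conditions~(1), (3), (4) follow directly from the definition~\eqref{eq:Psi} and the properties of $Y$ in Problem~\ref{RH1}, and it carries out only the jump computation~(2) explicitly via the same conjugation you describe---but the logical content is identical.
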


\begin{proof}
		The analyticity condition and the asymptotic expansions at $0$, $\infty$ given in~\eqref{eq:Psi in 0},~\eqref{eq: Psi in infy} follows directly from the definition~\eqref{eq:Psi} and the fact that $Y(z)$ solves the Riemann--Hilbert Problem~\ref{RH1}. Condition~\eqref{eq:jump Psi} follows from direct computation
\begin{align*}
			\Psi(z)_+&=\begin{pmatrix}
				1&0\\
				0& \kappa_{n}^{-2}
			\end{pmatrix}Y_+(z)\begin{pmatrix}
				1&0\\
				0&z^n
			\end{pmatrix}\e^{w(z)\frac{\sigma_3}{2}}=\begin{pmatrix}
				1&0\\
				0& \kappa_{n}^{-2}
			\end{pmatrix}Y_-(z)J_Y(z)\begin{pmatrix}
				1&0\\
				0&z^n
			\end{pmatrix}\e^{w(z)\frac{\sigma_3}{2}}
			\\[2mm]
			&=\Psi_-(z) \begin{pmatrix}
				1&0\\
				0&z^{-n}
			\end{pmatrix}\e^{-w(z)\frac{\sigma_3}{2}}\begin{pmatrix}
				1 &z^{-n}\e^{w(z)}\\
				0&1
			\end{pmatrix}\begin{pmatrix}
				1&0\\
				0&z^n
			\end{pmatrix}\e^{w(z)\frac{\sigma_3}{2}}
=\Psi_-(z)\begin{pmatrix}
				1&1\\
				0&1
			\end{pmatrix}\!.\!\!\!\! \tag*{\qed}
\end{align*}
\renewcommand{\qed}{}
\end{proof}

\subsection[A linear differential system for Psi(z)]{A linear differential system for $\boldsymbol{\Psi(z)}$}
	From the solution of the Riemann--Hilbert Problem~\ref{RH2}, we deduce the following equations (in~the following we omit in $\Psi$ the dependence on $\theta_j$ that should be considered only as parameters and not actual variables like $n$, $z$).
	\begin{Proposition}
	\label{Prop:LaxPair} We have
		\begin{equation}
			\Psi(n+1;z)=U(n;z)\Psi(n;z),\qquad
			\partial_z\Psi(n;z)=T(n;z)\Psi(n;z)
			\label{Lax_pair}
		\end{equation}
		with
		\begin{equation}	\label{U}
			U(n;z):=\begin{pmatrix}
				z+x_nx_{n+1}&-x_{n+1}\\[1mm]
				-\big(1-x_{n+1}^2\big)x_n&1-x_{n+1}^2
			\end{pmatrix}=\sigma_+z+U_0(n),
		\end{equation}
		where $\sigma_+\coloneqq\begin{pmatrix}
				1& 0\\
				0&0
			\end{pmatrix}$ and
		\begin{equation}
			T(n;z):=T_1(n)z^{N-1}+T_2(n)z^{N-2}+\dots+T_{2N+1}(n)z^{-N-1} = \sum_{k=1}^{2N+1}T_kz^{N-k},
			\label{T}
		\end{equation}
		where
		\begin{equation}\label{eq: T1}
			T_1(n)=\frac{\theta_N}{2}\sigma_3.
		\end{equation}
	\end{Proposition}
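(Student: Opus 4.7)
My plan is to follow the standard Riemann--Hilbert/Lax pair construction: define the candidate matrices
\begin{equation*}
U(n;z) := \Psi(n+1;z)\Psi(n;z)^{-1}, \qquad T(n;z):=\partial_z\Psi(n;z)\,\Psi(n;z)^{-1},
\end{equation*}
and then identify the global analytic structure in $z$ of each from the data of Riemann--Hilbert Problem~\ref{RH2}. The crucial observation is that the jump $J_0$ in~\eqref{eq:jump Psi} is constant in both $z$ and $n$: this immediately gives $U_+=U_-$ and $T_+=T_-$ on $S^1$, so both quantities extend analytically across the unit circle and are meromorphic on $\C\setminus\{0\}$.

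For $U$, the next step is to control the behavior at $z=0$ and $z=\infty$. Near $0$, the essential singularity $\e^{w(z)\sigma_3/2}$ appearing in~\eqref{eq:Psi in 0} cancels between $\Psi(n+1;z)$ and $\Psi(n;z)^{-1}$, and the diagonal factors $\operatorname{diag}(1,z^{n+1})$ and $\operatorname{diag}(1,z^{-n})$ combine into $\operatorname{diag}(1,z)$, which is bounded; hence $U$ extends holomorphically at $0$. Near $\infty$, the same cancellations, now combined with $\operatorname{diag}(z^{n+1},1)\operatorname{diag}(z^{-n},1)=\operatorname{diag}(z,1)$, show that $U(z)$ grows at most linearly, so by a Liouville-type argument $U(z)=\sigma_+ z+U_0$ with $\sigma_+=\operatorname{diag}(1,0)$ read off from the leading coefficient at $\infty$. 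To compute $U_0=U(0)$ I would take the $z\to 0$ limit using formula~\eqref{eq: Y(0)} for $Y(n;0)$ and $Y(n+1;0)$, noting that $\det Y(n;0)=x_n^2+\kappa_{n-1}^2/\kappa_n^2=1$ by~\eqref{eq:fundamental rel} and that $\kappa_n^2/\kappa_{n+1}^2=1-x_{n+1}^2$; the $\operatorname{diag}(1,z)\to \operatorname{diag}(1,0)$ factor then selects only the first columns in the middle product, and a direct $2\times 2$ multiplication produces exactly~\eqref{U}.

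For $T$, the no-jump argument is identical. Logarithmically differentiating~\eqref{eq: Psi in infy} at $\infty$, all contributions other than $\tfrac{1}{2}w'(z)\sigma_3$ are $O(z^{-1})$, and since
\begin{equation*}
w'(z)=\sum_{j=1}^{N}\theta_j z^{j-1}-\sum_{j=1}^{N}\theta_j z^{-j-1},
\end{equation*}
the polynomial part of $T$ at $\infty$ is of degree $N-1$ with leading coefficient $\tfrac{\theta_N}{2}\sigma_3$, giving~\eqref{eq: T1}. Logarithmically differentiating~\eqref{eq:Psi in 0} at $0$, the most singular contribution again comes from $\partial_z\e^{w(z)\sigma_3/2}$, producing a pole of order exactly $N+1$ through the term $-\theta_N z^{-N-1}$ in $w'(z)$ (the $z^n$ factor only contributes a simple pole, subsumed by the $N+1$ bound). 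Combining these with the analyticity of $T$ on $\C\setminus\{0\}$, one concludes that $T(n;z)$ is a Laurent polynomial of the prescribed shape~\eqref{T}.

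The step I expect to be most delicate is computational rather than conceptual: while the structural form and $T_1$ follow cleanly from leading-order asymptotics above, pinning down $U_0$ requires carefully handling the $z\to 0$ limit so that the degenerate factor $\operatorname{diag}(1,0)$ picks out only the first columns of $Y(n+1;0)$ and $Y(n;0)^{-1}$, and extracting each subleading coefficient $T_k$ for $k\geq 2$ (needed in Section~\ref{sec3} for the hierarchy derivation) will require matching the expansions of $\Psi$ to progressively higher order near both $0$ and $\infty$, which is where the real bookkeeping lives.
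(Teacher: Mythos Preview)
Your proposal is correct and follows essentially the same route as the paper: define $U$ and $T$ as the indicated ratios, use the $n$- and $z$-independence of the jump $J_0$ to conclude analyticity across $S^1$, and then read off the Laurent structure from the asymptotics~\eqref{eq:Psi in 0} and~\eqref{eq: Psi in infy}; the paper likewise obtains $U_0$ by evaluating $U(n;0)$ via the $z\to 0$ expansion, arriving at exactly the product $\operatorname{diag}(1,\kappa_{n+1}^{-2})\,Y(n+1;0)\,\operatorname{diag}(1,0)\,Y(n;0)^{-1}\,\operatorname{diag}(1,\kappa_n^2)$ that your $\operatorname{diag}(1,z)\to\operatorname{diag}(1,0)$ argument singles out. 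Your treatment is in fact slightly more explicit than the paper's in checking that the essential singularity at $0$ cancels in $U$.
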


	\begin{Remark}
	The coefficient $(T_i(n))_{2\leqslant i \leqslant 2N+1}$ defined in equation~\eqref{T} will be computed in Section~\ref{sec3}.
	\end{Remark}

	\begin{proof}
		We first prove the first equation. We start by defining the quantity
\[
U(n;z)\coloneqq \Psi(n+1;z)\Psi^{-1}(n;z).
\]
Since the jump condition for $\Psi(z)$~\eqref{eq:jump Psi} is independent of $n$, $U(n;z)$ is analytic everywhere. Plugging in equation~\eqref{eq: Psi in infy}, we have the expansion at $\infty$
\begin{align*}
U(n;z) = {}&\begin{pmatrix}
			1&0\\
			0& \kappa_{n+1}^{-2}
		\end{pmatrix}\bigg(I+\frac{Y_1(n+1)}{z} + \mathcal{O}\big(z^{-2}\big)\bigg)z^{(n+1)\sigma_3}\begin{pmatrix}1&0\\0&z
		\end{pmatrix}z^{-n\sigma_3}
\\
&\times \bigg(I-\frac{Y_1(n)}{z} + \mathcal{O}\big(z^{-2}\big)\bigg)\begin{pmatrix}
			1&0\\
			0& \kappa_{n}^{2}
		\end{pmatrix}\!,
\end{align*}
from which we deduce that $U(n;z)$ is a polynomial in $z$ of degree $1$, by Liouville theorem. Moreover, its matrix-valued coefficient are written as
		\begin{equation*}
			U(n;z)=z\begin{pmatrix}
				1&0\\
				0&0
			\end{pmatrix}+\underbracket{\begin{pmatrix}
					1&0\\
					0& \kappa_{n+1}^{-2}
				\end{pmatrix}Y(n+1;0)\begin{pmatrix}
					1&0\\0&0
				\end{pmatrix}Y^{-1}(n;0)\begin{pmatrix}
					1&0\\
					0& \kappa_{n}^{2}
			\end{pmatrix}}_{=U_0(n)}.
		\end{equation*}
		Doing the computation and using equation~\eqref{eq: Y(0)}, we obtain
		\begin{align*}
U_0(n) &= \begin{pmatrix}
			Y^{11}(n+1;0)Y^{22}(n;0) &-\kappa_n^2Y^{11}(n+1;0)Y^{12}(n;0)\\
			\kappa_{n+1}^{-2}Y^{21}(n+1;0)Y^{22}(n,0)&-Y^{21}(n+1;0)Y^{12}(n;0)
		\end{pmatrix}
\\
&=\begin{pmatrix}
			x_{n+1}x_n&-x_{n+1}\\
			-\big(1-x_{n+1}^2\big)x_n&1-x_{n+1}^2
		\end{pmatrix}\!.
\end{align*}
		For what concerns the second equation, we define $T(n;z)\coloneqq \partial_z\Psi(n;z)\Psi^{-1}(n;z)$.
		From the asymptotic behavior of $\Psi(n;z)$ at $0$ and $\infty$, we can deduce that $T(n;z)$ is a meromorphic function in $z$ with behavior at $\infty$ described by
		\begin{equation*}
			T(n;z)\sim \begin{pmatrix}
				1&0\\
				0& \kappa_{n}^{-2}
			\end{pmatrix}\bigg(I+\frac{Y_1(n)}{z}+O\big(z^{-2}\big)\bigg)\frac{V'(z)}{2}\sigma_3
\bigg(I-\frac{Y_1(n)}{z}+O\big(z^{-2}\big)\bigg)\begin{pmatrix}
				1&0\\
				0& \kappa_{n}^{2}
			\end{pmatrix}
		\end{equation*}
		(polynomial behavior of degree $N-1$) while at $0$ its behavior is described by
		\begin{gather*}
T(n;z)\sim \begin{pmatrix}
			1&0\\
			0& \kappa_{n}^{-2}
		\end{pmatrix}Y(n,0)\big(I+\tilde{Y}_1(n)z+O\big(z^{2}\big)\big) \\
\hphantom{T(n;z)\sim}{}
\times \frac{-V'(z^{-1})}{2z^2}\sigma_3\big(I-\tilde{Y}_1(n)z+O\big(z^{2}\big)\big)\begin{pmatrix}
			1&0\\
			0& \kappa_{n}^{2}
		\end{pmatrix}\!,
\end{gather*}
		i.e., there is a pole of order $N+1$. In conclusion, we can write
		\begin{equation*}
T(n;z) = \frac{\theta_N}{2}\sigma_3z^{N-1}+T_2(n)z^{N-2}+\dots +T_{2N+1}(n)z^{-N-1}.\tag*{\qed}
\end{equation*}
\renewcommand{\qed}{}
\end{proof}

Moreover, thanks to the symmetry for the solution of the Riemann--Hilbert problem $Y(z)$ stated in~\eqref{eq:sym Y}, we have that the coefficient matrix $T(n;z)$ satisfies a symmetry property.

	\begin{Proposition} $T(n;z)$ has the following symmetry:
		\begin{equation}\label{eq:symmetry T}
			T(n;z^{-1})=-z^2\big(K(n)T(n;z)K(n)^{-1}-nz^{-1}I_2\big)
		\end{equation}
		with $K(n)\coloneqq \begin{pmatrix}
			1&0\\
			0& \kappa_{n}^{-2}
		\end{pmatrix}Y(n;0)\sigma_3\begin{pmatrix}
			1&0\\
			0& \kappa_{n}^{2}
		\end{pmatrix}$.
	\end{Proposition}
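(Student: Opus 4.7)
The plan is to transport the symmetry \eqref{eq:sym Y} from $Y(z)$ to $\Psi(z)$ and then differentiate. First I would write $\Psi(z) = DY(z)Z(z)\e^{w(z)\sigma_3/2}$, where $D=\operatorname{diag}(1,\kappa_n^{-2})$ and $Z(z)=\operatorname{diag}(1,z^n)$, and invert \eqref{eq:sym Y} to obtain $Y(z^{-1})=Y(0)\sigma_3 Y(z)\sigma_3 z^{-n\sigma_3}$. Substituting $z\mapsto z^{-1}$ in the definition of $\Psi$, the key observation is that $w(z^{-1})=w(z)$ (immediate from \eqref{eqintro: function w and v}), so the exponential factor is unchanged, and then the combination $Z(z)^{-1}\sigma_3 z^{-n\sigma_3}Z(z)^{-1}$ collapses, via a short diagonal-matrix computation, to $z^{-n}\sigma_3$. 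Using $\e^{-w(z)\sigma_3/2}\sigma_3\e^{w(z)\sigma_3/2}=\sigma_3$ and recognizing the combination $DY(0)\sigma_3 D^{-1}$ as precisely $K(n)$, this gives the compact symmetry
\begin{equation*}
\Psi(n;z^{-1}) = z^{-n}\,K(n)\,\Psi(n;z)\,\sigma_3.
\end{equation*}

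Next I would differentiate this identity in $z$. The left-hand side produces $-z^{-2}\Psi'(n;z^{-1}) = -z^{-2}T(n;z^{-1})\Psi(n;z^{-1})$ by the chain rule together with the second equation in \eqref{Lax_pair}. The right-hand side produces $-nz^{-n-1}K(n)\Psi(n;z)\sigma_3 + z^{-n}K(n)T(n;z)\Psi(n;z)\sigma_3$. Using the symmetry once more to replace $\Psi(n;z^{-1})$ by $z^{-n}K(n)\Psi(n;z)\sigma_3$ on the left and then cancelling the common right factor $\Psi(n;z)\sigma_3$ (invertible since $\det\Psi\not\equiv 0$), one is left with a matrix identity in $K(n)$, $T(n;z)$ and $T(n;z^{-1})$. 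A clean multiplication by $-z^{n+2}$ on both sides and a right multiplication by $K(n)^{-1}$ then yield exactly \eqref{eq:symmetry T}.

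The computation is entirely linear-algebraic; the only points that require care are the two commutation steps (reducing $Z^{-1}\sigma_3 z^{-n\sigma_3}Z^{-1}$ and checking that $\sigma_3$ commutes with $\e^{w(z)\sigma_3/2}$), and the correct identification of $K(n)=DY(0)\sigma_3 D^{-1}$ from the block that appears after substitution. I do not anticipate a real obstacle: the main task is simply to keep track of the diagonal conjugations carefully so that the $z^{-n}$ scalar factor and the $\sigma_3$ on the right emerge in the right places. Once the symmetry of $\Psi$ is established, \eqref{eq:symmetry T} follows mechanically from differentiating and using the Lax equation $\partial_z\Psi=T\Psi$.
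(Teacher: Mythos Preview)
Your proposal is correct and follows essentially the same approach as the paper: derive the symmetry $\Psi(n;z^{-1})=z^{-n}K(n)\Psi(n;z)\sigma_3$ from \eqref{eq:sym Y} (using $w(z^{-1})=w(z)$ and the diagonal-matrix reductions you describe), then differentiate and combine with $\partial_z\Psi=T\Psi$ to obtain \eqref{eq:symmetry T}. The paper's proof is the same argument, presented with slightly less detail on the intermediate diagonal manipulations.
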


	\begin{Remark}
		Notice that for all $n$, the matrix $K(n)$ is s.t.\ $K(n)^{-1}=K(n)$ since we have the identity $x_n^2+\frac{\kappa_{n-1}^2}{\kappa_n^2}=1$.
	\end{Remark}

	\begin{proof}
		On the one hand,
		\begin{equation*}
		\partial_z\big(\Psi\big(n;z^{-1}\big)\big)=-\dfrac{1}{z^2}T\big(n;z^{-1}\big)\Psi\big(n;z^{-1}\big).
		\end{equation*}
		On the other hand, using the symmetry~\eqref{eq:sym Y} for $Y$ we deduce the following symmetry for $\Psi$:
\begin{equation*}
\Psi\big(n;z^{-1}\big)=z^{-n}\begin{pmatrix}
			1&0\\
			0& \kappa_{n}^{-2}
		\end{pmatrix}Y(0)\sigma_3\begin{pmatrix}
			1&0\\
			0& \kappa_{n}^{2}
		\end{pmatrix}\Psi(n;z)\sigma_3.
\end{equation*}
		This previous equation leads to
		\begin{equation*}
		\partial_z\big(\Psi\big(n;z^{-1}\big)\big)=z^{-n}\begin{pmatrix}
			1&0\\
			0& \kappa_{n}^{-2}
		\end{pmatrix}Y(0)\sigma_3\begin{pmatrix}
			1&0\\
			0& \kappa_{n}^{2}
		\end{pmatrix}\partial_z\Psi(n;z)\sigma_3-nz^{-1}\Psi\big(n;z^{-1}\big).
		\end{equation*}
		Then
		\begin{align*}
		T\big(n;z^{-1}\big)={}&-z^2\bigg(\!\begin{pmatrix}
			1&0\\
			0& \kappa_{n}^{-2}
		\end{pmatrix}Y(0)\sigma_3\begin{pmatrix}
			1&0\\
			0& \kappa_{n}^{2}
		\end{pmatrix}T(n;z)\begin{pmatrix}
			1&0\\
			0& \kappa_{n}^{-2}
		\end{pmatrix}\sigma_3Y(0)^{-1}
\\
&\times\begin{pmatrix}
			1&0\\
			0& \kappa_{n}^{2}
		\end{pmatrix}-nz^{-1}I_2\bigg).\tag*{\qed}
\end{align*}
\renewcommand{\qed}{}
\end{proof}

The symmetry~\eqref{eq:symmetry T} reflects on the coefficients $T_k(n)$, $k=1,\dots,2N+1$ as written below.
	\begin{Corollary} The coefficients $T_k(n)$, $k=1,\dots,2N+1$ satisfy
	\begin{align}
		&\label{eq:symmetry Tjs}
		T_j(n)=-K(n)T_{2N+2-j}(n)K(n)^{-1}, \qquad j=1,\dots, N, \\
		&\label{eq:symmetry TN+1}T_{N+1}(n) =-K(n)T_{N+1}(n)K(n)^{-1}+nI_2.
	\end{align}
\end{Corollary}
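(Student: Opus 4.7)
The plan is to expand both sides of the matrix symmetry~\eqref{eq:symmetry T} as Laurent polynomials in $z$ and match coefficients at each power.

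First I would rewrite the left-hand side by substituting $z \mapsto z^{-1}$ in the expansion $T(n;z) = \sum_{k=1}^{2N+1} T_k(n) z^{N-k}$, obtaining $T(n;z^{-1}) = \sum_{k=1}^{2N+1} T_k(n) z^{k-N}$. Reindexing with $j = 2N+2-k$, this becomes $\sum_{j=1}^{2N+1} T_{2N+2-j}(n)\, z^{N+2-j}$, so that both sides will be indexed by the same monomial $z^{N+2-j}$.

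Next, on the right-hand side I expand
\begin{equation*}
-z^2\bigl(K(n) T(n;z) K(n)^{-1} - n z^{-1} I_2\bigr) = -\sum_{j=1}^{2N+1} K(n) T_j(n) K(n)^{-1}\, z^{N+2-j} + n z\, I_2,
\end{equation*}
observing that the residual term $n z\, I_2$ has degree $1 = N+2-(N+1)$ and so contributes only at $j = N+1$. Matching the coefficient of $z^{N+2-j}$ for each $j \in \{1,\dots,2N+1\}$ yields
\begin{equation*}
T_{2N+2-j}(n) = -K(n) T_j(n) K(n)^{-1} + n\, \delta_{j,N+1}\, I_2.
\end{equation*}
For $j = N+1$ this is exactly~\eqref{eq:symmetry TN+1}. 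For $j \neq N+1$, relabelling $j \leftrightarrow 2N+2-j$ and invoking the involutive property $K(n)^{-1} = K(n)$ noted in the Remark (equivalent to the identity $x_n^2 + \kappa_{n-1}^2/\kappa_n^2 = 1$) recovers~\eqref{eq:symmetry Tjs} on the stated range $j = 1,\dots,N$.

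Since the argument amounts to Laurent-coefficient bookkeeping, I do not anticipate any real obstacle; the only mildly delicate point is tracking the index reversal and using $K^{-1} = K$ to fold the identities for the upper half-range $j = N+2,\dots,2N+1$ onto the stated range $j = 1,\dots, N$.
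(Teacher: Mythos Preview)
Your proposal is correct and follows essentially the same route as the paper: both expand the symmetry relation~\eqref{eq:symmetry T} as Laurent polynomials and identify coefficients term by term. The only cosmetic difference is that the paper reindexes the conjugated sum on the right-hand side (so the identity $T_j=-K T_{2N+2-j}K^{-1}$ drops out directly at $z^{-N+j}$), whereas you reindex the left-hand side and then need a final relabelling; note that this relabelling alone already yields~\eqref{eq:symmetry Tjs}, so the appeal to $K^{-1}=K$ is in fact unnecessary here.
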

\begin{proof}
	Indeed, by replacing the exact shape of $T(n;z) $ in equation~\eqref{eq:symmetry T}, we have
	\begin{align*}
\begin{split}
		\sum_{k=1}^{2N+1}T_k(n)z^{-N+k}&=T\big(n;z^{-1}\big)
=-z^2\Bigg(\sum_{k=1}^{2N+1}KT_k(n)K^{-1}z^{N-k}-nz^{-1}I_2\Bigg)\\
		&=-\sum_{k=1}^{2N+1}KT_k(n)K^{-1}z^{N+2-k}+nzI_2\\
		&=-\sum_{j=1}^{2N+1}KT_{2N+2-j}(n)K^{-1}z^{-N+j}+nzI_2,
\end{split}
	\end{align*}
	so looking at the powers $z^{-N+j}$ for $j=1,\dots, N$, we get equation~\eqref{eq:symmetry Tjs} and for $j=N+1$, we get equation~\eqref{eq:symmetry TN+1}.
\end{proof}

Notice first that from equations~\eqref{eq:symmetry Tjs} if the first $N+1$ coefficients of $T(n;z)$ are known, then we can obtain the remaining ones. Second, notice that the coefficient $T_{N+1}(n)$ plays an important role since it solves an equation, the one given in~\eqref{eq:symmetry TN+1}.

\subsection{Relation with the Cresswell--Joshi Lax pair}
To conclude this section, we describe how the Lax pair~\eqref{Lax_pair} is related with the one of the discrete Painlev\'e~II hierarchy~\eqref{eq:lin sys intro} originally introduced by
	Cresswell and Joshi in~\cite{joshi1999discrete} as follows.
	\begin{Definition} 	
		 A Lax pair for the discrete Painlev\'e~II hierarchy is given by a pair of matrices $(L_n(z),M_n(z))$, defining the coefficients of a discrete-differential system for a matrix-valued function $\Phi(n;z)$, such as
		\begin{align}
			\label{eq:JoshiLaxPairL}
			&\Phi(n+1;z)=\begin{pmatrix}
				z & x_n\\
				x_n & 1/z
			\end{pmatrix}\Phi(n;z)=L_n(z)\Phi(n;z),\\
			\label{eq:JoshiLaxPairM}
			&\dfrac{\partial}{\partial z}\Phi(n;z)=M_n(z)\Phi(n;z),
		\end{align}
		with the property that \[ M_n(z)=\begin{pmatrix}
			A_n(z) & B_n(z)\\
			C_n(z) & -A_n(z)
		\end{pmatrix}\]
with $A_n$, $B_n$ and $C_n$ are rational in $z$ (and depending also on $N$).
	\end{Definition}

\begin{Remark}
Specifically, in~\cite[Section 3.1]{joshi1999discrete}, the authors proved that the compatibility condition of the system of equations~\eqref{eq:JoshiLaxPairL} and~\eqref{eq:JoshiLaxPairM} defines the coefficients of the matrix~$M_n(z)$, leaving in turns only one discrete equation of order~$2N$ for~$x_n$. This is defined as the $N$-th member of the discrete Painlev\'e~II hierarchy.
\end{Remark}

	We establish now a link between this Lax Pair and the system~\eqref{Lax_pair} we obtained starting from the OPUC. We define
			\begin{equation*}
		\Phi(n;z):=\sigma_3\begin{pmatrix}
		z^{-n+3/2} & 0\\
		0 & z^{-n+1/2}
	\end{pmatrix}\begin{pmatrix}
		1 & 0\\
		-x_{n-1} & 1
	\end{pmatrix}\Psi\big(n-1;z^2\big).
		\end{equation*}

	\begin{Proposition}
	\label{Prop:connectionJoshi}
		$\Phi(n;z)$ defined as above satisfies the system of equations~\eqref{eq:JoshiLaxPairL} and~\eqref{eq:JoshiLaxPairM}.
	\end{Proposition}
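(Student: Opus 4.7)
The plan is to verify the two identities~\eqref{eq:JoshiLaxPairL} and~\eqref{eq:JoshiLaxPairM} by direct manipulation. Write
\[
G(n;z) := \sigma_3\begin{pmatrix} z^{-n+3/2} & 0 \\ 0 & z^{-n+1/2}\end{pmatrix}\begin{pmatrix} 1 & 0 \\ -x_{n-1} & 1 \end{pmatrix},
\]
so that $\Phi(n;z) = G(n;z)\,\Psi(n-1;z^2)$. The equation $\Phi(n+1;z)=L_n(z)\Phi(n;z)$ becomes equivalent---after using Proposition~\ref{Prop:LaxPair} to rewrite $\Psi(n;z^2) = U(n-1;z^2)\Psi(n-1;z^2)$---to the purely algebraic identity
\[
L_n(z)\,G(n;z) = G(n+1;z)\,U(n-1;z^2).
\]
I would verify this entrywise using~\eqref{U} and~\eqref{eq:JoshiLaxPairL}: the only nontrivial simplification is in the $(2,1)$-entry, where $-x_n(z^2 + x_{n-1}x_n) - (1-x_n^2)x_{n-1} = -x_n z^2 - x_{n-1}$ after the $\pm x_{n-1}x_n^2$ terms cancel; the other entries match by inspection.

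For the differential equation, the chain rule together with $\partial_\xi \Psi(n-1;\xi)|_{\xi=z^2} = T(n-1;z^2)\Psi(n-1;z^2)$ yields $\partial_z \Phi(n;z) = M_n(z)\Phi(n;z)$ upon setting
\[
M_n(z) := \bigl(\partial_z G(n;z) + 2z\,G(n;z)\,T(n-1;z^2)\bigr)\,G(n;z)^{-1}.
\]
Two properties then need checking. For \emph{rationality in $z$}, decompose $G(n;z) = \sigma_3 D(z) N$ with $D(z)$ diagonal and $N$ constant in $z$; then $(\partial_z G)G^{-1} = \sigma_3(\partial_z D)D^{-1}\sigma_3$ is diagonal with entries proportional to $z^{-1}$, while conjugation by $\sigma_3 D(z)$ only rescales the off-diagonal entries of the Laurent polynomial $T(n-1;z^2)$ by $z^{\pm 1}$, so $M_n$ is a Laurent polynomial in $z$. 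For \emph{tracelessness}, a direct computation gives $\det G(n;z) = -z^{-2n+2}$, while $\det \Psi(n-1;z^2) = \kappa_{n-1}^{-2}z^{2n-2}$ follows from $\det Y \equiv 1$ and the definition~\eqref{eq:Psi}; multiplying, $\det \Phi(n;z) = -\kappa_{n-1}^{-2}$ is a nonzero constant in $z$. Hence $0 = \partial_z \det \Phi = \operatorname{tr}(M_n)\,\det \Phi$ forces $\operatorname{tr} M_n = 0$, so $M_n$ has the required form $\bigl(\begin{smallmatrix} A_n & B_n \\ C_n & -A_n\end{smallmatrix}\bigr)$ with rational entries.

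The main obstacle is mostly bookkeeping: the first identity is a small but sign-sensitive matrix computation, and once the ansatz for $G(n;z)$ is in place the differential equation is essentially automatic---no new ingredients beyond Proposition~\ref{Prop:LaxPair} and the normalization $\det Y \equiv 1$ are required.
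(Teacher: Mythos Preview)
Your proposal is correct and follows essentially the same route as the paper: both verify~\eqref{eq:JoshiLaxPairL} by reducing it to the algebraic identity between $L_n(z)G(n;z)$ and $G(n+1;z)U(n-1;z^2)$, and both obtain $M_n(z)$ from the chain rule applied to $G(n;z)\Psi(n-1;z^2)$. The only cosmetic difference is in the tracelessness step: the paper invokes $\Tr T(n;z)=nz^{-1}$ directly to compute $\Tr M_n(z)=(-2n+2)z^{-1}+2z\cdot(n-1)z^{-2}=0$, whereas you deduce $\Tr M_n=0$ from the constancy of $\det\Phi(n;z)$---but both facts are equivalent consequences of $\det Y\equiv 1$, so this is not a genuinely different argument.
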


	\begin{proof}
		First we compute the discrete equation for $\Phi(n;z)$. From the definition, we have
		\begin{align*}
			\Phi(n+1;z)=\sigma_3\begin{pmatrix}
				z^{-n+1/2} & 0\\
				0 & z^{-n-1/2}
			\end{pmatrix}\begin{pmatrix}
				1 & 0\\
				-x_{n} & 1
			\end{pmatrix}\Psi\big(n;z^2\big).
		\end{align*}
		According to equation~\eqref{Lax_pair},
		\begin{align*}
			\Phi(n+1;z)={}&\sigma_3\begin{pmatrix}
				z^{-n+1/2} & 0\\
				0 & z^{-n-1/2}
			\end{pmatrix}\begin{pmatrix}
				1 & 0\\
				-x_{n} & 1
			\end{pmatrix}U\big(n-1;z^2\big)\Psi\big(n-1;z^2\big)\\
			={}&\sigma_3\begin{pmatrix}
				z^{-n+1/2} & 0\\
				0 & z^{-n-1/2}
			\end{pmatrix}\begin{pmatrix}
				1 & 0\\
				-x_{n} & 1
			\end{pmatrix}U\big(n-1;z^2\big)\begin{pmatrix}
				1 & 0\\
				x_{n-1} & 1
			\end{pmatrix}
\\
&\times\begin{pmatrix}
				z^{n-3/2} & 0\\
				0 & z^{n-1/2}
			\end{pmatrix}\sigma_3\Phi(n;z)
			=\begin{pmatrix}
				z & x_n\\
				x_n & 1/z
			\end{pmatrix}\Phi(n;z).
		\end{align*}
		Now we compute the derivative with respect to $z$.

		Defining $M_n(z):=\big(\frac{\partial}{\partial z}\Phi(n;z)\big)\Phi(n;z)^{-1}$, similar computations lead to
		\begin{align}
			M_n(z)={}&z^{-1}\sigma_3\begin{pmatrix}
				-n+3/2 &0\\
				0 & -n+1/2
			\end{pmatrix}\sigma_3+2z\sigma_3\begin{pmatrix}
				z &0\\
				0 & 1
			\end{pmatrix}\begin{pmatrix}
				1 & 0\\
				-x_{n-1} & 1
			\end{pmatrix}\nonumber
\\[1mm]
&\times T\big(n-1;z^2\big)\begin{pmatrix}
				1 & 0\\
				x_{n-1} & 1
			\end{pmatrix}\begin{pmatrix}
				z^{-1} &0\\
				0 & 1
			\end{pmatrix}\sigma_3.
			\label{eq:M(n,z)}
		\end{align}
		We need to prove two things: first the trace of $M_n(z)$ is null and then entries of $M_n(z)$ are rational in $z$.

		For the trace of $M_n(z)$ we use the fact that $\Tr(T(n;z))=nz^{-1}$. Then
		\begin{equation*}
		\Tr(M_n(z))=(-2n+2)z^{-1}+2z\Tr\big(T\big(n-1;z^2\big)\big)=0.
		\end{equation*}
		From the expression of $T(n;z)$~\eqref{T} and the equation~\eqref{eq:M(n,z)}, we conclude entries of $M_n(z)$ are rational in $z$.
	\end{proof}
	
	\section{From the Lax Pair to the discrete Painlev\'e~II hierarchy}
	\label{sec3}
	In this section, we study the compatibility condition associated to the linear system~\eqref{Lax_pair}. This first allows us to reconstruct completely the matrix $T(n;z)$ and then to obtain an explicit $2N$ order discrete equation for $x_n$ which corresponds to equation~\eqref{eq:dPIIhierarchy intro}.
	\subsection{The symmetry in the compatibility condition}
	We study the consequences of the symmetry~\eqref{eq:symmetry T} for the matrix $T(n;z)$ on the compatibility condition for the Lax pair introduced in Proposition~\ref{Prop:LaxPair}. More precisely, we show that, thanks to the symmetry~\eqref{eq:symmetry T}, the compatibility condition contains an overdetermined system of equations.

	We recall that the compatibility condition reads as
	\begin{equation}
		\label{ZC}
		\sigma_+-T(n+1;z)U(n;z)+U(n;z)T(n;z)=0,
	\end{equation}
	where we have to replace $U(n;z)$ as in~\eqref{U} and $T(n;z)$ as
	\begin{equation}\label{eq: Tsymmetricform}
		T(n;z) =\sum_{k=1}^{N+1} T_k(n)z^{N-k}+
		\sum_{k=N+2}^{2N+1}-K(n)T_{2N+2-k}(n)K(n)^{-1}z^{N-k},
	\end{equation}
	and with the coefficient $T_{N+1}(n)$ satisfying equation~\eqref{eq:symmetry TN+1}.
	
\begin{Lemma}
The compatibility condition~\eqref{ZC}, for $U(n;z)$, $T(n;z)$ as described above, corresponds to the following system
\begin{align*}
& T_1(n+1)\sigma_+-\sigma_+T_1(n)=0, \\
&T_{j+1}(n+1)\sigma_+-\sigma_+T_{j+1}(n)+T_{j}(n+1)U_0(n)-U_0(n)T_{j}(n)= \sigma_+\delta_{j,N}, \qquad j=1,\dots,N,\\
&T_{N+1}(n) =-K(n)T_{N+1}(n)K(n)^{-1}+nI_2.
\end{align*}
\end{Lemma}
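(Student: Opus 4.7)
The approach is a direct substitution: plug $U(n;z) = \sigma_+ z + U_0(n)$ from~\eqref{U} and the symmetric form~\eqref{eq: Tsymmetricform} of $T(n;z)$ into the compatibility condition~\eqref{ZC}, and then match coefficients of like powers of $z$. The third equation of the stated system is simply~\eqref{eq:symmetry TN+1}, which is already part of the symmetric form we have imposed on $T(n;z)$, so it requires no further proof; the content of the lemma is really that the matrix equation~\eqref{ZC} decomposes into the two listed families plus (redundantly) equations that follow automatically from the symmetry.

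The decisive observation is that since $U(n;z)$ is linear in $z$, one can split
\begin{equation*}
T(n+1;z)U(n;z)-U(n;z)T(n;z) = z\bigl[T(n+1;z)\sigma_+ - \sigma_+ T(n;z)\bigr] + T(n+1;z)U_0(n) - U_0(n)T(n;z).
\end{equation*}
Inserting $T(n;z) = \sum_{k=1}^{2N+1}T_k(n)z^{N-k}$ and performing the index shift $k \mapsto k-1$ in the bracketed sum, the coefficient of $z^{N-j}$ on the left-hand side of~\eqref{ZC} becomes
\begin{equation*}
\sigma_+\delta_{j,N} + \sigma_+ T_{j+1}(n) - T_{j+1}(n+1)\sigma_+ + U_0(n)T_j(n) - T_j(n+1)U_0(n),
\end{equation*}
with the convention $T_0 \equiv T_{2N+2} \equiv 0$. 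The constant term $\sigma_+$ of~\eqref{ZC} only enters at $z^0 = z^{N-N}$, which is why the $\sigma_+\delta_{j,N}$ appears. Setting this expression to zero yields the first equation of the system at $j=0$ (only the $T_1$ terms survive), and the second family at $j=1,\dots,N$.

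The coefficients arising from $j = N+1,\dots,2N+1$ are not listed in the system because they are redundant consequences of the symmetry. Indeed, using~\eqref{eq:symmetry Tjs} to rewrite each $T_k(n)$ with $k \geq N+2$ as $-K(n)T_{2N+2-k}(n)K(n)^{-1}$, and the parallel formula for $T_k(n+1)$, each equation for $j>N$ becomes the conjugate by $K(n+1),K(n)$ of an already listed equation, modulo terms handled by~\eqref{eq:symmetry TN+1}. I expect this redundancy verification to be the main (modest) obstacle: it requires bookkeeping how $K(n)$ and $K(n+1)$ intertwine with the linear pencil $\sigma_+ z + U_0(n)$, but the necessary structural identity $K(n+1)\,U_0(n)\,K(n) \sim U_0(n)^\mathsf{T}$-type relation can be read off directly from the explicit forms of $K$ and $U_0$ in~\eqref{U},~\eqref{eq: Y(0)}, making the check a routine matrix calculation rather than a new conceptual step.
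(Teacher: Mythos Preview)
Your approach is exactly the one the paper takes: substitute $U=\sigma_+z+U_0$ and the symmetric form of $T$ into~\eqref{ZC}, match powers of $z$, and then argue that the equations coming from $z^{N-j}$ with $j\ge N+1$ are consequences of those with $j\le N$ together with~\eqref{eq:symmetry TN+1}. The first two displayed families in the lemma are obtained precisely as you describe.

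The gap is in the last paragraph. You correctly identify that the redundancy verification is the real content, but you do not carry it out, and the structural identity you guess at is not the right one. The relation is \emph{not} of transpose type; what actually holds (using $K(n)^{-1}=K(n)$ and the explicit forms~\eqref{U},~\eqref{eq: Y(0)}) is the swap
\[
K(n+1)^{-1}U_0(n)K(n)=\sigma_+,\qquad K(n+1)^{-1}\sigma_+K(n)=U_0(n).
\]
With these two identities, multiplying each ``low-power'' equation on the left by $K(n+1)^{-1}$ and on the right by $K(n)$ turns the $z^{-N-1}$ equation into the $z^{N}$ equation, and the $z^{N-j}$ equation for $j=N+2,\dots,2N$ into the $z^{N-k}$ equation for $k=2N+1-j\in\{1,\dots,N-1\}$. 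The case $j=N+1$ (power $z^{-1}$) needs separate treatment: after the same conjugation one must also invoke~\eqref{eq:symmetry TN+1} for $T_{N+1}$ at both $n$ and $n+1$, and this is exactly where the inhomogeneous term $\sigma_+\delta_{j,N}$ reappears on the other side. Without these explicit identities and the special handling of $j=N+1$, the redundancy claim is asserted rather than proved.
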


\begin{proof}
	 The compatibility condition~\eqref{ZC}, after replacing $U(n;z)$, $T(n;z)$ of the prescribed form, involves powers of $z$ from $N$ to $-N-1$. Imposing that the coefficients of each of these powers of $z$ is identically zero, we obtain the following equations:
\begin{alignat}{3}
\label{eq: zN}
&z^N\colon&& T_1(n+1)\sigma_+-\sigma_+T_1(n)=0,
\\[1mm]
&z^{N-j},&&\!\!\!\!\! j=1,\dots,N\colon \nonumber
\\
&&&T_{j+1}(n+1)\sigma_+-\sigma_+T_{j+1}(n)+T_{j}(n+1)U_0(n)-U_0(n)T_{j}(n)= \sigma_+\delta_{j,N,}
\label{eq: zj}
\\[1mm]
&z^{-1}\colon&& T_{N+1}(n+1)U_0(n)-U_0(n)T_{N+1}(n)-K(n+1)T_{N}(n+1)K(n+1)^{-1}\sigma_+\nonumber
\\
&&&\qquad{}+\sigma_+K(n)T_{N}(n)K(n)^{-1}=0,
\label{eq: z-1}
\\[1mm]
&z^{N-j},&&\!\!\!\!\! j=N+2,\dots,2N\colon\ \nonumber
\\
&&&-K(n+1)T_{2N+1-j}(n+1)K(n+1)^{-1}\sigma_++\sigma_+K(n)T_{2N+1-j}(n)K(n)^{-1}\nonumber
\\
&&&\qquad{}+U_0(n)K(n)T_{2N+2-j}(n)K(n)^{-1} \nonumber
\\
&&&\qquad{}-K(n+1)T_{2N+2-j}(n+1)K(n+1)^{-1}U_0(n)=0, \label{eq: zN-jbis}
\\[1mm]
&z^{-N-1}\colon\ && -K(n+1)T_{1}(n+1)K(n+1)^{-1}U_0(n)+U_0(n)K(n)T_1(n)K(n)^{-1}=0. \label{eq: z-N-1}
	\end{alignat}
	With the change of indices $2N+1-j=k \iff k=2N+1-j=N-1,\dots,1$, the equation~\eqref{eq: zN-jbis} becomes:
	\begin{align} &-K(n+1)T_k(n+1)K(n+1)^{-1}\sigma_++\sigma_+K(n)T_k(n)K(n)^{-1}\nonumber
\\
&\qquad{}-K(n+1)T_{k+1}(n+1)K(n+1)^{-1}U_0(n)+U_0(n)K(n)T_{k+1}(n)K(n)^{-1}=0. \label{eq: zN-j}
	\end{align}
	 We now show that equations~\eqref{eq: z-1},~\eqref{eq: zN-jbis},~\eqref{eq: z-N-1} are equivalent to the first ones~\eqref{eq: zN},~\eqref{eq: zj} thanks to the symmetry of the coefficients $T_k(n)$ given in~\eqref{eq:symmetry Tjs} together with the equation for $T_{N+1}(n)$, already obtained in~\eqref{eq:symmetry TN+1}.
	
	To start with, we notice the following relations:
	\begin{gather*}
		\widetilde{U}_0(n)\coloneqq K(n+1)^{-1}U_0(n)K(n) = \sigma_+,
	\\
		\widetilde{\sigma}(n) \coloneqq K(n+1)^{-1}\sigma_+K(n)=U_0(n),
	\end{gather*}
	deduced by using multiple times relation~\eqref{eq:fundamental rel}, namely $x_n^2+\frac{\kappa_{n-1}^2}{\kappa_n^2}=1$.
	\begin{enumerate}\itemsep=0pt
		\item[1.] Let us consider first the equation~\eqref{eq: z-N-1} obtained from the coefficient of the term $z^{-N-1}$. Multiplying by $K(n+1)^{-1}$ to the left and by $K(n)$ to the right, we obtain
		\begin{equation*}
			-T_1(n+1)\widetilde{U}_0(n)+\widetilde{U}_0(n)T_1(n)=0,
		\end{equation*}
		that is exactly~\eqref{eq: zN}.
		
		\item[2.] Let us consider now equations~\eqref{eq: zN-j}, obtained from the coefficients of the term $z^{N-j}$, $j =N+2,\dots, 2N$. By multiplying by $K(n+1)^{-1}$ to the left and by $K(n)$ to the right as before, we obtain the equations for $k=N-1,\dots, 1$
		\begin{equation*} -T_k(n+1)\widetilde{\sigma}(n)+\widetilde{\sigma}(n)T_k(n)-T_{k+1}(n+1)\widetilde{U}_0(n) +\widetilde{U}_0(n)T_{k+1}(n)=0,
		\end{equation*}
		which is exactly equation~\eqref{eq: zj} for $j=1,\dots,N-1$.
		\item[3.] The last equation is~\eqref{eq: z-1} obtained from the coefficient of the term $z^{-1}$. We multiply, again, by $K(n+1)^{-1}$ to the left and by $K(n)$ to the right, and we get
		\begin{gather*} K(n+1)^{-1}T_{N+1}(n+1)K(n+1)\widetilde{U}_0(n)-\widetilde{U}_0(n)K(n)^{-1}T_{N+1}(n)K(n)
\\
\qquad{}-T_N(n+1)\widetilde{\sigma}(n)+\widetilde{\sigma}(n)T_N(n)=0,
		\end{gather*}
		and then we replace the symmetry for the term $T_{N+1}(n)$ namely the equation~\eqref{eq:symmetry TN+1} (that indeed it has not be used until now)
		\begin{equation*}
			-T_{N+1}(n+1)\widetilde{U}_0(n)+\widetilde{U}_0(n)T_{N+1}(n) + \widetilde{U}_0(n) -T_N(n+1)\widetilde{\sigma}(n)+\widetilde{\sigma}(n)T_N(n)=0.
		\end{equation*}
		And this is again exactly equation~\eqref{eq: zj}, for $j=N$.
	\end{enumerate}
Thus the compatibility condition~\eqref{ZC} is reduced to the equations in the statement, namely equations~\eqref{eq: zN},~\eqref{eq: zj},~\eqref{eq:symmetry TN+1}.
\end{proof}

Now, we use equations~\eqref{eq: zN},~\eqref{eq: zj} together with the initial condition for $T_1(n)$ given in~\eqref{eq: T1}, to recursively find the coefficients $T_k(n)$, for $k=1,\dots, N+1$, in terms of the $x_{n\pm j}, j=1,\dots,N$. With the coefficients $T_k(n)$ computed in such a way, the symmetry for $T_{N+1}(n)$, i.e., equation~\eqref{eq:symmetry TN+1}, once $T_{N+1}(n)$ is determined, provides an actual discrete equation for~$x_n$ of order~$2N$, that is what we call the higher order analogue of the discrete Painlev\'e~II equation (that coincide for $N=1,2$ to the ones already appeared in~\cite{AV, Borodindiscrete, joshi1999discrete}).

	\subsection{The recursion}
	In this subsection, we explain how equations~\eqref{eq: zN},~\eqref{eq: zj} resulting from the compatibility condition~\eqref{ZC} can be used to find recursively (in $k$) all the coefficients $T_k(n)$, $k=1,\dots, N+1$ of $T(n;z)$.
	\begin{Lemma}
		For every $i=1,\dots, N$, starting from the initial condition~\eqref{eq: T1} $T_1(n)=\frac{\theta_N}{2}\sigma_3$, we have
\begin{gather*}
T_{i+1,12}(n)=x_{n+1}\big(2\Delta^{-1}+I\big)\bigg(\dfrac{x_{n+1}}{v_{n+1}} T_{i,21}(n+1)-x_nT_{i,12}(n)\bigg)+v_{n+1}T_{i,12}(n+1)
\\
\hphantom{T_{i+1,12}(n)=}-x_nx_{n+1}T_{i,12}(n),
\\
T_{i+1,21}(n+1)=x_nv_{n+1}\big(2\Delta^{-1}+I\big)\bigg(\dfrac{x_{n+1}}{v_{n+1}} T_{i,21}(n+1)-x_nT_{i,12}(n)\bigg)+v_{n+1}T_{i,21}(n)
\\
\hphantom{T_{i+1,21}(n+1)=}-x_nx_{n+1}T_{i,21}(n+1),
\\
T_{i+1,11}(n)=-T_{i+1,22}(n)+n\delta_{i,N}=\Delta^{-1}\bigg(\dfrac{-x_{n+1}}{v_{n+1}} T_{i+1,21}(n+1)+x_nT_{i+1,12}(n)\bigg)+n\delta_{i,N},
\end{gather*}
	where 	
	\begin{gather}
	\label{def delta} \Delta\colon\ T_i(n)\to T_i(n+1)-T_i(n),
\\
	\label{def vn} v_n\coloneqq 1-x_{n}^2,
	\end{gather}
	\end{Lemma}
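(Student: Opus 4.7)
The plan is to extract from the matrix identity~\eqref{eq: zj}, taken at level $j=i$, four scalar equations and then to combine them with a trace identity for $T_k(n)$. Writing $T_i(n)$ entrywise and substituting the explicit form of $U_0(n)$ from~\eqref{U}, the $(1,2)$ and $(2,1)$ components of~\eqref{eq: zj} directly read
\begin{align*}
T_{i+1,12}(n) &= v_{n+1} T_{i,12}(n+1) - x_n x_{n+1} T_{i,12}(n) + x_{n+1}\bigl(T_{i,22}(n) - T_{i,11}(n+1)\bigr),\\
T_{i+1,21}(n+1) &= v_{n+1} T_{i,21}(n) - x_n x_{n+1} T_{i,21}(n+1) + x_n v_{n+1}\bigl(T_{i,22}(n+1) - T_{i,11}(n)\bigr),
\end{align*}
while the $(2,2)$ component involves no $T_{i+1}$ entry at all and, after dividing by $v_{n+1}$, reduces to the pure constraint $\Delta T_{i,22}(n) = \frac{x_{n+1}}{v_{n+1}} T_{i,21}(n+1) - x_n T_{i,12}(n)$ on $T_i$.

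Next, I would establish the trace identity $\Tr T_k(n) = n\delta_{k,N+1}$ for every $k$. Since $\det Y(z) \equiv 1$ by~\eqref{eq:sol RH}, the definition~\eqref{eq:Psi} gives $\det \Psi(n;z) = \kappa_n^{-2} z^n$, hence $\Tr T(n;z) = \partial_z \log\det\Psi(n;z) = n/z$; matching against the Laurent expansion~\eqref{T} forces the claim. In particular $T_{i,11}(n) = -T_{i,22}(n)$ whenever $i \leq N$, so both asymmetric remainders $T_{i,22}(n) - T_{i,11}(n+1)$ and $T_{i,22}(n+1) - T_{i,11}(n)$ collapse to the symmetric sum $T_{i,22}(n) + T_{i,22}(n+1) = 2T_{i,22}(n) + \Delta T_{i,22}(n) = (2\Delta^{-1} + I)\bigl(\Delta T_{i,22}(n)\bigr)$. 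Substituting the value of $\Delta T_{i,22}(n)$ from the $(2,2)$ equation back into the expressions above yields the first two formulas of the lemma.

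For the third formula, I would invoke the same $(2,2)$ component of~\eqref{eq: zj} one level higher, at $j = i+1$, which is again a constraint on $T_{i+1}$ alone and gives $\Delta T_{i+1,22}(n) = \frac{x_{n+1}}{v_{n+1}} T_{i+1,21}(n+1) - x_n T_{i+1,12}(n)$. Inverting $\Delta$ and combining with the trace identity $T_{i+1,11}(n) + T_{i+1,22}(n) = n\delta_{i,N}$ (the $k=i+1$ case of the trace formula, noting $\delta_{i+1,N+1}=\delta_{i,N}$) produces the stated expression. The principal technical point to watch is the trace identity: once it is secured via $\det\Psi(n;z) = \kappa_n^{-2} z^n$, the two asymmetric combinations reduce to the symmetric sum $T_{i,22}(n)+T_{i,22}(n+1)$ and the somewhat unusual operator $(2\Delta^{-1}+I)$ emerges naturally from $\Delta^{-1}$; everything else is bookkeeping of $2\times 2$ matrix entries.
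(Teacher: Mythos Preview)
Your argument is correct and follows essentially the same route as the paper: extract the entrywise equations from~\eqref{eq: zj}, use the trace identity $\Tr T(n;z)=n/z$ to rewrite the diagonal combinations, and feed the $(2,2)$ constraint back into the $(1,2)$ and $(2,1)$ equations. Your derivation of the trace identity from $\det\Psi(n;z)=\kappa_n^{-2}z^n$ is a nice touch that the paper simply asserts.

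One small point to tighten: when you invoke ``the $(2,2)$ component of~\eqref{eq: zj} at $j=i+1$'' to obtain $\Delta T_{i+1,22}(n)$, note that \eqref{eq: zj} is only stated for $j=1,\dots,N$, so the case $i=N$ (i.e.\ $j=N+1$) is not literally covered. The required identity $v_{n+1}\Delta T_{N+1,22}(n)=x_{n+1}T_{N+1,21}(n+1)-x_nv_{n+1}T_{N+1,12}(n)$ does hold, but it comes from the $(2,2)$ entry of~\eqref{eq: z-1} rather than~\eqref{eq: zj}: the extra $K$-conjugated terms in~\eqref{eq: z-1} are each of the form $M\sigma_+$ or $\sigma_+M$ and hence have vanishing $(2,2)$ entry. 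The paper glosses over this same point with the phrase ``which still holds for $i=N+1$'', so you are in good company, but it is worth a sentence.
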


	\begin{proof}
	We rewrite equations~\eqref{eq: zN},~\eqref{eq: zj} for $ i=1,\dots,N$, entry by entry.
	For the first one, we~have
			\begin{equation*}
	\begin{cases}
		T_{1,11}(n+1)-T_{1,11}(n)=0,\\
		T_{1,12}(n)=T_{1,21}(n+1)=0.
	\end{cases}
		\end{equation*}
	This is satisfied by $T_1(n) $ given in~\eqref{eq: T1}. For the second one, for any $1\leqslant i\leqslant N$ we have the four equations:
\begin{gather*}
T_{i+1,11}(n+1)-T_{i+1,11}(n)=-T_{i,11}(n+1)x_n x_{n+1} +T_{i,12}(n+1)\big(1-x_{n+1}^2\big)x_n
\\ \hphantom{T_{i+1,11}(n+1)-T_{i+1,11}(n)=}
{}+x_nx_{n+1}T_{i,11}(n)-x_{n+1}T_{i,21}(n)+\delta_{i,N},
\\[1mm]		T_{i+1,12}(n)=-x_{n+1}T_{i,11}(n+1)+T_{i,12}(n+1)\big(1-x_{n+1}^2\big)-x_nx_{n+1}T_{i,12}(n)+x_{n+1}T_{i,22}(n),
\\[1mm]	
T_{i+1,21}(n+1)=-T_{i,21}(n+1)x_nx_{n+1}+T_{i,22}(n+1)x_n\big(1-x_{n+1}^2\big)
\\\hphantom{T_{i+1,21}(n+1)=}
{}-T_{i,11}(n)x_n\big(1-x_{n+1}^2\big)+\big(1-x_{n+1}^2\big)T_{i,21}(n),
\\[1mm]	
0=T_{i,21}(n+1)x_{n+1}\!-T_{i,22}(n+1)\big(1\!-x_{n+1}^2\big)\!-x_n\big(1\!-x_{n+1}^2\big)T_{i,12}(n)+T_{i,22}(n)\big(1\!-x_{n+1}^2\big).		\label{EqRecTi22}
\end{gather*}

	Using the notations introduced in~\eqref{def delta},~\eqref{def vn}, the previous equations
	with $1\leqslant i\leqslant N$ become
	\begin{gather}
		\Delta T_{i+1,11}(n)=-x_nx_{n+1}\Delta T_{i,11}(n) +x_nv_{n+1}T_{i,12}(n+1)-x_{n+1}T_{i,21}(n)+\delta_{i,N},
		\label{EqRecTi11mod}
	\\[1mm] T_{i+1,12}(n)=-x_{n+1}T_{i,11}(n+1)\!+v_{n+1}T_{i,12}(n+1)\!-x_nx_{n+1}T_{i,12}(n)\!+x_{n+1}T_{i,22}(n),
		\label{EqRecTi12mod}
	\\[1mm] T_{i+1,21}(n+1)=-x_nx_{n+1}T_{i,21}(n+1)+x_nv_{n+1}T_{i,22}(n+1)-x_nv_{n+1}T_{i,11}(n)\nonumber
\\ \hphantom{T_{i+1,21}(n+1)=}
{}+v_{n+1}T_{i,21}(n),
		\label{EqRecTi21mod}
	\\[1mm]
		v_{n+1}\Delta T_{i,22}(n)=x_{n+1}T_{i,21}(n+1)-x_nv_{n+1}T_{i,12}(n).
		\label{EqRecTi22mod}
	\end{gather}
	From these equations, we see that in order to obtain the diagonal terms, there is a ``discrete integration'' to perform, while the off-diagonal terms are directly determined from the previous ones.
	Moreover, we can rewrite the four equation as only two equations involving only the off-diagonal terms. Indeed,
	because of $\Tr(T(n;z))=nz^{-1}$, $T_{i,11}(n,z)=-T_{i,22}(n,z)$ for $1\leqslant i \leqslant N$. Thus~\eqref{EqRecTi22mod} can be written as
\begin{equation*}
v_{n+1}\Delta T_{i,11}(n)=-x_{n+1}T_{i,21}(n+1)+x_nv_{n+1}T_{i,12}(n).
\end{equation*}
	Formally, $1\leqslant i\leqslant N$,
\begin{equation}
	\label{expr:Ti11expl}
	T_{i,11}(n)=-T_{i,22}(n)=\Delta^{-1}\bigg(\dfrac{-x_{n+1}}{v_{n+1}} T_{i,21}(n+1)+x_nT_{i,12}(n)\bigg),
\end{equation}
which still holds for $i=N+1$ up to adding the ``constant'' $n$ on the right hand side.
Using this in~\eqref{EqRecTi12mod} and~\eqref{EqRecTi21mod}, we obtain:
\begin{gather*}
T_{i+1,12}(n)=x_{n+1}\big(2\Delta^{-1}+I\big)\bigg(\dfrac{x_{n+1}}{v_{n+1}} T_{i,21}(n+1)-x_nT_{i,12}(n)\bigg)+v_{n+1}T_{i,12}(n+1)
\\ \hphantom{T_{i+1,12}(n)=}
{}-x_nx_{n+1}T_{i,12}(n),
\\
T_{i+1,21}(n+1)=x_nv_{n+1}\big(2\Delta^{-1}+I\big)\bigg(\dfrac{x_{n+1}}{v_{n+1}} T_{i,21}(n+1)-x_nT_{i,12}(n)\bigg)+v_{n+1}T_{i,21}(n)
\\ \hphantom{T_{i+1,21}(n+1)=}
-x_nx_{n+1}T_{i,21}(n+1).\tag*{\qed}
\end{gather*}
\renewcommand{\qed}{}
\end{proof}

We notice that, defining the discrete recursion operator
	\begin{gather}
		\label{L}
		\mathcal{L}\begin{pmatrix}u_n\\y_n\end{pmatrix}
=\begin{pmatrix}
			x_{n+1}\big(2\Delta^{-1}+I\big)\bigg(\dfrac{x_{n+1}}{v_{n+1}} y_n-x_nu_n\bigg)+(v_{n+1}(\Delta+I)-x_nx_{n+1})u_n
\\[4mm]
			x_nv_{n+1}\big(2\Delta^{-1}\!+\!I\big)\bigg(\dfrac{x_{n+1}}{v_{n+1}} y_n\!-\!x_nu_n\bigg)\!+\!\big(v_{n+1}(\Delta+I)^{-1}\!-\!x_nx_{n+1}\big)y_n
		\end{pmatrix}\!,
	\end{gather}
	we can rewrite the two equations for the off-diagonal entries of $T_i(n)$ obtained above as
	\begin{equation}
	\label{LTiTi+1}
	\begin{pmatrix}
		T_{i+1,12}(n)\\
		T_{i+1,21}(n+1)
	\end{pmatrix}=\mathcal{L}\begin{pmatrix}
		T_{i,12}(n)\\
		T_{i,21}(n+1)
	\end{pmatrix}\!,\qquad 1\leqslant i\leqslant N.
	\end{equation}
	And, recursively we obtain
	\begin{equation}
		\label{TN+11221}
		\begin{pmatrix}
			T_{N+1,12}(n)\\
			T_{N+1,21}(n+1)
		\end{pmatrix}={\mathcal{L}}^N\begin{pmatrix}
			0\\
			0
		\end{pmatrix}\!.
	\end{equation}
	This procedure allows to construct the whole matrix $T(n;z)$, starting from the initial condition $T_1(n)=\frac{\theta_N}{2}\sigma_3$ and iterating the operator $\mathcal{L}$ we obtain off diagonal terms of $T(n;z)$ and compute diagonal one with equation~\eqref{expr:Ti11expl}.
	Below we implemented this method to find the matrix $T(n;z)$ in the first few cases $N=1,2$.
	\begin{Example} In the case $N=1$, the matrix $T(n;z)=T_1(n)+T_2(n)z^{-1}+T_3(n)z^{-2}$. Knowing $T_1(n)$, we only have to find $T_2(n)$ using the recurrence relation given from the compatibility, i.e., equations~\eqref{EqRecTi11mod},~\eqref{EqRecTi12mod},~\eqref{EqRecTi21mod} for $i=1$.
		Since: $T_{1,12}(n)=T_{1,21}(n)=0$, and $T_{1,11}(n)=\theta_N/2= -T_{1,22}(n)$, we have
\begin{gather*}
T_{2,11}(n)=n,
\\
T_{2,12}(n)=-x_{n+1}(T_{1,11}(n+1)+T_{1,11}(n))=-\theta_1x_{n+1},
\\
T_{2,21}(n+1)=x_nv_{n+1}(T_{1,22}(n+1)+T_{1,22}(n))=-\theta_1 x_nv_{n+1},
\end{gather*}
and $T_{2,22}(n)=n-T_{2,11}(n)=0$.
		Moreover, the symmetry which reflects terms of $T(n;z)$ two by two gives $T_3(n)=-K(n)T_1(n)K(n)$. Thus the Lax matrix for $N=1$ is
		\begin{equation*}
			T(n;z) = \dfrac{\theta_1}{2}\begin{pmatrix}
				1&\hphantom{-}0\\0&-1
			\end{pmatrix} + \frac{1}{z}\begin{pmatrix}
				n&-\theta_1x_{n+1}\\
				-\theta_1v_{n}x_{n-1}&0
			\end{pmatrix}+\frac{\theta_1}{z^2}\begin{pmatrix}
				\frac{1}{2}-x_n^2&x_n\\
				v_{n}x_n&x_n^2-\frac{1}{2}
			\end{pmatrix}\!.
		\end{equation*}
	\end{Example}

	\begin{Example}
In the case $N=2$, the matrix $T(n;z)=T_1(n)z+T_2(n)+T_3(n)z^{-1}+T_4(n)z^{-2}+T_5(n)z^{-3}$. This time we have to find $T_2(n)$ (that will be almost the same as before) and also~$T_3(n)$ using the recurrence relation given from the compatibility, i.e., equations~\eqref{EqRecTi11mod},~\eqref{EqRecTi12mod},~\eqref{EqRecTi21mod} for $i=1$ and $2$. First we find $T_2(n)$ ($i=1$ above), we have
\begin{gather*}
T_{2,11}(n)=\frac{\theta_1}{2},
\\
T_{2,12}(n)=-x_{n+1}(T_{1,11}(n+1)+T_{1,11}(n))=-\theta_2x_{n+1},
\\
T_{2,21}(n+1)=x_nv_{n+1}(T_{1,22}(n+1)+T_{1,22}(n))=-\theta_2 x_nv_{n+1},
\end{gather*}
and $T_{2,22}(n)=-T_{2,11}=-\frac{\theta_1}{2}$.

		Then we consider the equation for $i=2$ and find $T_3(n)$. We have
\begin{gather*}
\Delta T_{3,11}(n)=x_nv_{n+1}(-\theta_2x_{n+2}) -x_{n+1}(-\theta_2 x_{n-1}v_{n})+1\! \implies \! T_{3,11}(n)=n-\theta_2 x_{n-1}x_{n+1}v_n,
\\
T_{3,12}(n)=-\theta_1x_{n+1}-\theta_2\big(v_{n+1}x_{n+2}-x_nx_{n+1}^2\big),
\\
T_{3,21}(n+1)=\bigl(-\theta_1x_{n}-\theta_2\big(v_nx_{n-1}-x_n^2x_{n+1}\big)\bigr)v_{n+1},
\\
T_{3,22}(n)= n-T_{3,11}(n)=\theta_2 x_{n-1}x_{n+1}v_n.
\end{gather*}
Finally, we take $T_4(n)=-K(n)T_2(n)K(n)$ and $T_5(n)=-K(n)T_1(n)K(n)$. Thus the Lax matrix for $N=2$ is
\begin{align*}
&T(n;z)=z\frac{\theta_2}{2}\begin{pmatrix}
				1&0\\0&-1
			\end{pmatrix}+\begin{pmatrix}
				\frac{\theta_1}{2}&-\theta_2x_{n+1}\\
				-\theta_2 x_{n-1}v_{n}&-\frac{\theta_1}{2}
			\end{pmatrix}
\\
&+\frac{1}{z}\begin{pmatrix}
				n-\theta_2 x_{n-1}x_{n+1}v_n & -\theta_1x_{n+1}-\theta_2\big(v_{n+1}x_{n+2}-x_nx_{n+1}^2\big)\\
				\bigl(-\theta_1x_{n-1}-\theta_2\big(v_{n-1}x_{n-2}-x_nx_{n-1}^2\big)\bigr)v_{n} & \theta_2 x_{n-1}x_{n+1}v_n
			\end{pmatrix}
\\
			&+\frac{1}{z^2}\begin{pmatrix}
				-\theta_2v_n(x_nx_{n-1}+x_nx_{n+1})+\frac{\theta_1}{2}\big(v_n-x_n^2\big)	& -\theta_2\big(v_nx_{n-1}+x_n^2x_{n+1}\big)\\
				-\theta_2\big(v_nx_{n+1}+x_n^2x_{n-1}\big)v_n &\theta_2v_n(x_nx_{n-1}+x_nx_{n+1})-\frac{\theta_1}{2}\big(v_n-x_n^2\big)
			\end{pmatrix}\\&+
			\frac{\theta_2}{z^3}\begin{pmatrix}
				\frac{1}{2}-x_n^2&x_n\\
				v_{n}x_n&x_n^2-\frac{1}{2}
\end{pmatrix}\!.
\end{align*}
\end{Example}

	 Now that we have reconstructed the whole matrix $T(n;z)$ in terms of $x_{n\pm j}$, $j=-N,\dots,N$ we are left with the equation that $T_{N+1}(n)$ has to satisfy, namely~\eqref{eq:symmetry TN+1}. We now show that actually this coincide with only one scalar equation in $T_{N+1, 12}$ and $T_{N+1, 21}$.
	 Indeed, entry by entry it reads as the following system of four equations. From the off-diagonal entries
\begin{align}
\label{eq: hiddendPII 1}
\begin{split}
&v_{n}T_{N+1,12}(n)=x_n(T_{N+1,11}(n)-T_{N+1,22}(n))-T_{N+1,21}(n),
\\	
&v_{n}T_{N+1,21}(n)=x_nv_{n}(T_{N+1,11}(n)-T_{N+1,22}(n))-v_{n}^2T_{N+1,12}(n)
\end{split}
\end{align}
and from the diagonal entries
\begin{align} &n-\big(1+x_n^2\big)T_{N+1,11}(n)-v_{n}T_{N+1,22}(n)+x_nT_{N+1,21}(n)+x_nv_{n}T_{N+1,12}(n)=0,\nonumber\\
		&n-\big(1+x_n^2\big)T_{N+1,22}(n)-v_{n}T_{N+1,11}(n)-x_nT_{N+1,21}(n)-x_nv_{n}T_{N+1,12}(n)=0.\nonumber
	\end{align}
	We notice first that the four above equations are all the same. The first and the second equations are the same up to a multiplication by $v_n$. Using the relation $T_{N+1,11}(n)+T_{N+1,22}(n)=n$, we can rewrite the third and the forth equations and obtain the same equation up to a sign. Finally, multiplying by $x_n$ the first equation and using the relation $T_{N+1,11}(n)+T_{N+1,22}(n)=n$ we obtain the third one. Thus from now on we will refer only to~\eqref{eq: hiddendPII 1}, as for the remaining equation.
	
	Using equation~\eqref{EqRecTi22mod} and $\Tr(T(n;z))=nz^{-1}$, we express equation~\eqref{eq: hiddendPII 1} in function of $T_{N+1,12}(n)$ and $T_{N+1,21}(n)$.
	Consider equation~\eqref{eq: hiddendPII 1}, with the identity $\Tr(T_{N+1}(n))=n$, it is rewritten as
\begin{equation*}
v_nT_{N+1,12}(n)=x_n(n-2T_{N+1,22}(n))-T_{N+1,21}(n).
\end{equation*}
Equation~\eqref{EqRecTi22mod} holds also for $i=N+1$. It means it is possible to replace $T_{N+1,22}(n)$ in the previous equation and obtain
\begin{gather}
\label{eq:DPIINTN+112et21} nx_n-v_nT_{N+1,12}(n)-T_{N+1,21}(n)\nonumber
\\ \qquad
{}-2x_n\Delta^{-1}\biggl(-x_nT_{N+1,12}(n)+\dfrac{x_{n+1}}{v_{n+1}}(\Delta+I)T_{N+1,21}(n)\biggr)=0.
\end{gather}
		
\subsection[The relation between T\_\{i,12\}(n) and T\_\{i,21\}(n)]{The relation between $\boldsymbol{T_{i,12}(n)}$ and $\boldsymbol{T_{i,21}(n)}$}

	The previous equation~\eqref{eq:DPIINTN+112et21} depends on $T_{N+1,12}(n)$ and $T_{N+1,21}(n)$. The aim of this part is to establish a connection between $T_{i,12}(n)$ and $T_{i,21}(n)$ to rewrite equation~\eqref{eq:DPIINTN+112et21} just in function of~$T_{N+1,12}(n)$.

To accomplish this, we study the compatibility condition of $C(n;z)\coloneqq T(n;z)^2$ and $U(n;z)$.
$C(n;z)$ is rational in $z$ with a pole of order $-2N-2$ at $0$. We write $C(n;z)$ as
	\begin{equation}
		C(n;z)=\sum_{i=1}^{4N+1}C_i(n)z^{2N-1-i}
		\label{exprC}
	\end{equation}
	with{\samepage
	\begin{equation}
		C_i(n)\coloneqq\sum_{j=1}^iT_j(n)T_{i+1-j}(n)
		\label{DefCi}
	\end{equation}
	where $C_1(n)=\frac{\theta_N^2}{4}I_2$.}
	
	In what follows we will need the following lemma:
	\begin{Lemma}
		\label{lemma:CiCN+1}
		Diagonal coefficients of $C_i(n)$ defined as in~\eqref{DefCi} satisfy the following equation:
\begin{gather*}
\forall 1\leqslant i\leqslant N,\qquad C_{i,11}(n)=C_{i,22}(n),
\\
C_{N+1,11}(n)=n\theta_N+C_{N+1,22}(n).
\end{gather*}
	\end{Lemma}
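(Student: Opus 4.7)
The plan is to exploit the Cayley--Hamilton theorem applied to the $2\times 2$ matrix $T(n;z)$, combined with the trace identity already established. Since any $2\times 2$ matrix $A$ satisfies $A^2=\operatorname{Tr}(A)A-\det(A)I_2$, and Proposition~\ref{Prop:connectionJoshi} (in its proof) used that $\operatorname{Tr}(T(n;z))=nz^{-1}$, we immediately get
\begin{equation*}
C(n;z)=T(n;z)^2=nz^{-1}T(n;z)-\det(T(n;z))\,I_2.
\end{equation*}
Since $\det(T(n;z))$ is a scalar rational function of $z$, at each power of $z$ the matrix $C(n;z)-nz^{-1}T(n;z)$ is a scalar multiple of $I_2$; in particular its diagonal entries are equal and its off-diagonal entries vanish.

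Next I would match powers of $z$ on both sides of this identity. From $T(n;z)=\sum_{k=1}^{2N+1}T_k(n)z^{N-k}$, the highest-degree term of $z^{-1}T(n;z)$ is $T_1(n)z^{N-2}$. Writing $C(n;z)=\sum_{i=1}^{4N+1}C_i(n)z^{2N-1-i}$ as in~\eqref{exprC}, the powers $z^{2N-1-i}$ for $1\leqslant i\leqslant N$ correspond to exponents $2N-2,\,2N-3,\,\dots,\,N-1$, all strictly greater than $N-2$. Hence for these values of $i$ the term $nz^{-1}T(n;z)$ contributes nothing, and
\begin{equation*}
C_i(n)=-\big[\det(T(n;z))\big]_{2N-1-i}\,I_2,\qquad 1\leqslant i\leqslant N,
\end{equation*}
is a scalar multiple of $I_2$. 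This yields $C_{i,11}(n)=C_{i,22}(n)$ (and as a bonus, $C_{i,12}(n)=C_{i,21}(n)=0$).

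For $i=N+1$ the relevant power is $z^{N-2}$, which is precisely the leading power of $z^{-1}T(n;z)$, contributing the matrix $nT_1(n)=\frac{n\theta_N}{2}\sigma_3$ thanks to the initial condition~\eqref{eq: T1}. Therefore
\begin{equation*}
C_{N+1}(n)=\frac{n\theta_N}{2}\sigma_3-\big[\det(T(n;z))\big]_{N-2}\,I_2,
\end{equation*}
whose diagonal entries differ by $n\theta_N$, giving $C_{N+1,11}(n)=n\theta_N+C_{N+1,22}(n)$ as required. There is no real obstacle here beyond keeping careful track of which range of powers of $z$ receives a contribution from $nz^{-1}T(n;z)$; once Cayley--Hamilton is invoked, the two claims of the lemma are simply the $i\leqslant N$ versus $i=N+1$ cases of the same identity.
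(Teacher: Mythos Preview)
Your proof is correct and takes a genuinely different route from the paper's. The paper argues directly from the convolution formula~\eqref{DefCi}: it expands
\[
C_{i,11}(n)=\sum_{j=1}^i T_{j,11}(n)T_{i+1-j,11}(n)+T_{j,12}(n)T_{i+1-j,21}(n),
\]
performs the index swap $j\mapsto i+1-k$, and then invokes the trace identity $T_{k,11}(n)=-T_{k,22}(n)$ for $k\leqslant N$ (and $T_{N+1,11}+T_{N+1,22}=n$) term by term to convert the sum into $C_{i,22}(n)$, picking up the extra $n\theta_N$ when $i=N+1$ from the boundary term $k=1$. Your approach instead packages the whole computation into the single matrix identity $T^2=\Tr(T)\,T-\det(T)I_2$ from Cayley--Hamilton, so that the equality of diagonal entries (and the discrepancy at $i=N+1$) is read off from the scalar nature of $\det(T)I_2$ and the grading of $nz^{-1}T(n;z)$. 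This is both shorter and more structural: it yields for free that $C_i(n)$ is a scalar multiple of $I_2$ for $1\leqslant i\leqslant N$ (off-diagonal entries vanish too), which is essentially the content of the subsequent Proposition~\ref{propCiHom} that the paper then proves separately by an induction on the compatibility recursion. The paper's argument, on the other hand, stays closer to the entrywise bookkeeping that drives the rest of Section~3 and does not rely on identifying the scalar $\det T$.
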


	\begin{proof}
		We express $C_{i,11}(n)$ in function of $T_{i,kj}(n)$. With the equation~\eqref{DefCi}
\begin{equation*}
C_{i,11}(n)=\sum_{j=1}^iT_{j,11}(n)T_{i+1-j,11}(n)+T_{j,12}(n)T_{i+1-j,21}(n).
\end{equation*}
Then, the sum index change $j=i-k+1$ leads to
\begin{equation*}
C_{i,11}(n)=\sum_{k=1}^iT_{i-k+1,11}(n)T_{k,11}(n)+T_{i-k+1,12}(n)T_{k,21}(n).
\end{equation*}
		Finally, with the relation $\Tr(T(n;z))=nz^{-1}$,
\begin{itemize}\itemsep=0pt
		\item if $1\leqslant i\leqslant N$,
\begin{equation*}
C_{i,11}(n)=\sum_{k=1}^iT_{i-k+1,22}(n)T_{k,22}(n)+T_{k,21}(n)T_{i-k+1,12}(n)=C_{i,22}(n).
\end{equation*}
		\item if $i=N+1$,
\begin{align*}
C_{N+1,11}(n)&=-2nT_{1,22}(n)+\sum_{k=1}^{N+1}T_{N-k+2,22}(n)T_{k,22}(n)+T_{k,21}(n)T_{N-k+2,12}(n)
\\
&=n\theta_N+C_{N+1,22}(n).
 \tag*{\qed}
\end{align*}
\end{itemize}\renewcommand{\qed}{}
\end{proof}
	
	We deduce the compatibility condition for $C$ and $U$ from the one for $T$ and $U$.
	\begin{Lemma}
		$C(n;z)$~\eqref{exprC} and $U(n;z)$~\eqref{U} satisfy the following compatibility condition:
		\begin{equation}
			C(n+1;z)U(n;z)-U(n;z)C(n;z)=T(n+1;z)\sigma_++\sigma_+T(n;z).
			\label{ZCUC}
		\end{equation}
	\end{Lemma}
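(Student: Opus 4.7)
The plan is to derive the claimed identity directly from the compatibility condition already established for $T$ and $U$, namely equation~\eqref{ZC}, which can be rewritten as
\begin{equation*}
T(n+1;z)U(n;z) = U(n;z)T(n;z) + \sigma_+.
\end{equation*}
Since $C(n;z) = T(n;z)^2$, the natural strategy is to ``multiply'' this relation twice — once on the left by $T(n+1;z)$ and then applying the relation again to absorb the internal factor $T(n+1;z)U(n;z)$ into $U(n;z)T(n;z) + \sigma_+$.

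Concretely, I would carry out the following short chain of equalities:
\begin{align*}
C(n+1;z)U(n;z) &= T(n+1;z)\bigl(T(n+1;z)U(n;z)\bigr) \\
&= T(n+1;z)\bigl(U(n;z)T(n;z) + \sigma_+\bigr) \\
&= \bigl(T(n+1;z)U(n;z)\bigr)T(n;z) + T(n+1;z)\sigma_+ \\
&= \bigl(U(n;z)T(n;z) + \sigma_+\bigr)T(n;z) + T(n+1;z)\sigma_+ \\
&= U(n;z)C(n;z) + \sigma_+ T(n;z) + T(n+1;z)\sigma_+,
\end{align*}
and then rearrange to obtain~\eqref{ZCUC}. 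The argument uses only associativity of matrix multiplication and the already proven identity~\eqref{ZC}; no property specific to the $z$-dependence of $T$ or $U$ is needed, so the result holds as an identity of rational matrix-valued functions of $z$.

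There is essentially no obstacle here: the statement is a purely formal consequence of~\eqref{ZC} applied twice, analogous to how a Lax-type zero-curvature relation automatically propagates to polynomial (or here, power) combinations of the Lax matrix. The only thing worth emphasising in the write-up is that the factor $\sigma_+$ is picked up symmetrically on both ``sides'' of $T$, which is precisely why the right-hand side of~\eqref{ZCUC} takes the symmetric form $T(n+1;z)\sigma_+ + \sigma_+ T(n;z)$ rather than a single term.
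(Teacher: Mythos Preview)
Your proof is correct and follows essentially the same approach as the paper: the paper's proof multiplies~\eqref{ZC} on the left by $T(n+1;z)$, on the right by $T(n;z)$, and sums the two resulting equations, which is algebraically the same as your chain of substitutions applying~\eqref{ZC} twice.
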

	\begin{proof}
		Multiplying on the left (resp.\ on the right) equation~\eqref{ZC} by $T(n+1;z)$ (resp.\ $T(n;z)$) and summing these two equations leads to the result.
	\end{proof}

The left (resp.\ right) hand side of the equation in the previous lemma is an expression in powers of $z$ from $z^{2N-1}$ to $z^{-2N-2}$ (resp.\ from $z^{N-1}$ to $z^{-N-1}$). This equation leads to recursive equation for $C_i(n)$. We consider only expression in powers of $z$ from $z^{2N-1}$ to $z^{N-1}$.

	According to~\eqref{ZC} and~\eqref{ZCUC}, $\forall 1\leqslant i\leqslant N$, $C_i(n)$ and $T_i(n)$ satisfy the same recursive equation (see equations~\eqref{EqRecTi11mod}--\eqref{EqRecTi22mod}). For $i=N+1$, the equation is a bit different. The term with $\delta_{i,N}$ is now multiplied by $\theta_N$.

	From these equations we deduce the following result.
	\begin{Proposition}
		\label{propCiHom}
		Let $C_i(n)$ be as in~\eqref{DefCi}. Then $\forall 1\leqslant i\leqslant N$,
\begin{equation*}
C_i(n)=\alpha_iI_2\qquad \text{and}\qquad
C_{N+1}(n)=\theta_Nn\sigma_++\alpha_{N+1}I_2.
\end{equation*}
\end{Proposition}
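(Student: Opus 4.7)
The plan is to prove Proposition~\ref{propCiHom} by induction on $i$, exploiting the observation made in the text above the statement: $C_i(n)$ satisfies the same entry-wise recursion \eqref{EqRecTi11mod}--\eqref{EqRecTi22mod} as $T_i(n)$ for $1\leqslant i\leqslant N$, with the only modification that the inhomogeneous term $\delta_{i,N}$ is replaced by $\theta_N\delta_{i,N}$ when passing from $C_N$ to $C_{N+1}$. Combined with Lemma~\ref{lemma:CiCN+1}, which pins down the relation between the two diagonal entries of each $C_i(n)$, these recursions should be enough to force the scalar structure claimed.

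The base case $i=1$ is immediate: since $T_1(n)=\tfrac{\theta_N}{2}\sigma_3$, one has $C_1(n)=T_1(n)^2=\tfrac{\theta_N^2}{4}I_2$, so $\alpha_1:=\theta_N^2/4$. For the inductive step, assume $C_i(n)=\alpha_i I_2$ with $\alpha_i$ independent of $n$, for some $1\leqslant i\leqslant N$. Plugging the scalar values $C_{i,12}(n)=C_{i,21}(n)=0$ and $C_{i,11}(n)=C_{i,22}(n)=\alpha_i$ into the $C$-analogue of \eqref{EqRecTi12mod} gives
\[
C_{i+1,12}(n)=-x_{n+1}\alpha_i+0-0+x_{n+1}\alpha_i=0,
\]
and the $C$-analogue of \eqref{EqRecTi21mod} yields $C_{i+1,21}(n+1)=0$ by an identical cancellation. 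Meanwhile the $C$-analogue of \eqref{EqRecTi11mod} collapses to $\Delta C_{i+1,11}(n)=\theta_N\delta_{i,N}$, because every coupling term on the right-hand side involves either an off-diagonal entry of $C_i$ or $\Delta C_{i,11}(n)$, all of which vanish.

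For $i<N$, this last equation says $C_{i+1,11}(n)=\alpha_{i+1}$ is a constant in $n$, and Lemma~\ref{lemma:CiCN+1} then forces $C_{i+1,22}(n)=\alpha_{i+1}$, giving $C_{i+1}(n)=\alpha_{i+1}I_2$ and closing the induction. For $i=N$ the discrete integration instead produces $C_{N+1,11}(n)=\theta_N n+\alpha_{N+1}$ for some integration constant $\alpha_{N+1}$, and Lemma~\ref{lemma:CiCN+1} now yields $C_{N+1,22}(n)=C_{N+1,11}(n)-n\theta_N=\alpha_{N+1}$, whence $C_{N+1}(n)=\theta_N n\sigma_++\alpha_{N+1}I_2$, exactly as stated. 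There is no genuine obstacle here: the crucial simplification is that as soon as $C_i$ is a scalar multiple of $I_2$, all the quadratic coupling terms in the $T$-style recursion vanish identically and one is left with only a trivial scalar linear recursion for the $(1,1)$-entry, which is solved by inspection.
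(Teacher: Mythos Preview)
Your proof is correct and follows essentially the same approach as the paper: induction on $i$, with the inductive step using the $C$-analogue of the recursion \eqref{EqRecTi11mod}--\eqref{EqRecTi22mod} to kill the off-diagonal entries and reduce the $(1,1)$-entry to the scalar equation $\Delta C_{i+1,11}(n)=\theta_N\delta_{i,N}$, then invoking Lemma~\ref{lemma:CiCN+1} to recover the $(2,2)$-entry. The only cosmetic difference is that you state the inductive hypothesis for a single $i\leqslant N$ and then split into $i<N$ versus $i=N$, whereas the paper phrases it slightly differently, but the content is the same.
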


	\begin{proof}
		We prove Proposition~\ref{propCiHom} by induction.
		For $i=1$, we already know $C_1(n)=\frac{\theta_N^2}{4}$.
		Suppose $C_i(n)=\alpha_iI_2$ for $i\leqslant N-1$.
		$C_{i+1}(n)$ satisfies the following equations:
		\begin{gather*}
			\Delta C_{i+1,11}(n)=-x_nx_{n+1}\Delta C_{i,11}(n) +x_nv_{n+1}C_{i,12}(n+1)-x_{n+1}C_{i,21}(n)+\theta_N\delta_{i,N},
\\
C_{i+1,12}(n)=-x_{n+1}C_{i,11}(n+1)+v_{n+1}C_{i,12}(n+1)-x_nx_{n+1}C_{i,12}(n)+x_{n+1}C_{i,22}(n),
\\		
C_{i+1,21}(n+1)=-x_nx_{n+1}C_{i,21}(n+1)+x_nv_{n+1}C_{i,22}(n+1)-x_nv_{n+1}C_{i,11}(n)
\\ \hphantom{C_{i+1,21}(n+1)=}
{}+v_{n+1}C_{i,21}(n).
		\end{gather*}
		Using induction hypothesis,
		\begin{gather*}
			\Delta C_{i+1,11}(n)=-0 \cdot x_nx_{n+1} +0\cdot x_nv_{n+1}-0\cdot x_{n+1}+\theta_N\delta_{i,N}=\theta_N\delta_{i,N}, \\
			C_{i+1,12}(n)=-x_{n+1}\alpha_i+0\cdot v_{n+1}-0\cdot x_nx_{n+1}+x_{n+1}\alpha_i=0, \\
			C_{i+1,21}(n+1)=-0\cdot x_nx_{n+1}+x_nv_{n+1}\alpha_i-x_nv_{n+1}\alpha_i+0\cdot v_{n+1}=0.
		\end{gather*}
		From the first equation, we conclude $C_{i+1,11}(n)=\alpha_{i+1}$ if $i\leqslant N-1$ (resp.\ $C_{N+1,11}(n)=\theta_Nn+\alpha_{N+1}$ if $i=N$) and according to Lemma~\ref{lemma:CiCN+1} $C_{i+1,22}(n)=\alpha_{i+1}$ (resp.\ $C_{N+1,22}(n)=\alpha_{N+1}$) which concludes the proof.
\end{proof}

	From equation~\eqref{DefCi} and Proposition~\ref{propCiHom}, we obtain
\begin{gather}
\theta_NT_{i,11}(n)=\alpha_i-\sum_{j=2}^{i-1}T_{j,11}(n)T_{i-j+1,11}(n)+T_{j,12}(n)T_{i-j+1,21}(n),
\label{Ti11}
\\
\theta_NT_{N+1,11}(n)=n\theta_N+\alpha_{N+1}-\sum_{j=2}^{N}T_{j,11}(n)T_{N-j+2,11}(n)
+T_{j,12}(n)T_{N-j+2,21}(n).
\label{TN+111}
\end{gather}
With all this discussion on $C(n;z)$ it is now possible to prove the following proposition.

\begin{Proposition}
\label{prop:symTi1221}
The following holds:
$\forall 1\leqslant i \leqslant N+1$, $T_{i,11}(n)$, $T_{i,12}(n)$ and $T_{i,21}(n)$ are polynomials in $x_{n+j}$'s. Moreover, the following symmetries hold:
\begin{equation*}
\exists (Q_{i,n}((u_{n+j})_{1-i\leqslant j\leqslant i-1}),P_{i,n}((u_{n+j})_{1-i\leqslant j\leqslant i-1}))
\end{equation*}
polynomials in $u_{n+j}$'s such that,
\begin{gather*}
T_{i,11}(n)=Q_{i,n}((x_{n+j})_{1-i\leqslant j\leqslant i-1})
=Q_{i,n}((x_{n-j})_{1-i\leqslant j\leqslant i-1}),
\\
T_{i,12}(n)=P_{i,n}((x_{n+j})_{1-i\leqslant j\leqslant i-1}),
\\
T_{i,21}(n)=v_nP_{i,n}((x_{n-j})_{1-i\leqslant j\leqslant i-1}).
\end{gather*}
\end{Proposition}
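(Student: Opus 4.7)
The approach is by strong induction on $i$, exploiting the fact that every entry of $T_{i+1}(n)$ can be expressed algebraically in terms of entries of $T_j(n)$ with $j\leqslant i$, without invoking the discrete integration operator $\Delta^{-1}$. The base case $i=1$ is immediate: since $T_1(n)=\frac{\theta_N}{2}\sigma_3$, we may take $Q_{1,n}\equiv\theta_N/2$ and $P_{1,n}\equiv 0$, both trivially reversal-symmetric.

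For the inductive step, assume the claim for all $i'\leqslant i\leqslant N$ and deduce it for $i+1$. The diagonal entry $T_{i+1,11}(n)$ is obtained from \eqref{Ti11} when $i+1\leqslant N$ and from \eqref{TN+111} when $i=N$; in both cases it is a quadratic expression in $\{T_{j,kl}(n)\}_{j\leqslant i}$. The off-diagonal entries $T_{i+1,12}(n)$ and $T_{i+1,21}(n)$ come from the purely algebraic relations \eqref{EqRecTi12mod} and \eqref{EqRecTi21mod} (the latter rewritten at $n\to n-1$ to produce $T_{i+1,21}(n)$). Inserting the inductive hypothesis into these relations immediately yields that $T_{i+1,11}(n)$, $T_{i+1,12}(n)$, $T_{i+1,21}(n)$ are polynomials in the $x_{n+j}$, with the correct index ranges $|j|\leqslant i=(i+1)-1$ following by inspection of which shifted arguments appear in each recursion.

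The ${\rm Perm}_n$-symmetries form the more delicate part. For $T_{i+1,11}(n)$, the products $T_{j,11}(n)T_{i+2-j,11}(n)$ in \eqref{Ti11}/\eqref{TN+111} are ${\rm Perm}_n$-invariant term by term, while for each cross term one computes
\[
{\rm Perm}_n\bigl(T_{j,12}(n)T_{i+2-j,21}(n)\bigr)=v_n^{-1}T_{j,21}(n)\cdot v_n T_{i+2-j,12}(n)=T_{j,21}(n)T_{i+2-j,12}(n),
\]
using that $v_n$ is ${\rm Perm}_n$-invariant; the substitution $j\mapsto i+2-j$ then shows that the whole sum is ${\rm Perm}_n$-invariant. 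For the relation $T_{i+1,21}(n)=v_n\,{\rm Perm}_n(T_{i+1,12}(n))$, one applies ${\rm Perm}_n$ to \eqref{EqRecTi12mod} and compares the result with \eqref{EqRecTi21mod} rewritten at $n\to n-1$. The comparison reduces to the two compatibility identities ${\rm Perm}_n(T_{i,11}(n+1))=T_{i,11}(n-1)$ and ${\rm Perm}_n(T_{i,12}(n+1))=v_{n-1}^{-1}T_{i,21}(n-1)$, both of which follow from the inductive reversal-symmetry of $Q_{i,\cdot}$ and $P_{i,\cdot}$ together with the observation that applying ${\rm Perm}_n$ to a polynomial in $x_{(n+1)+k}$ reverses its arguments relative to $n$, producing exactly the polynomial in $x_{(n-1)+k}$ read off from the reversal-symmetric expression at $n-1$.

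The main obstacle is precisely this last bookkeeping: carefully tracking how the involution ${\rm Perm}_n$, defined relative to the center $n$, interacts with quantities naturally centered at the shifted indices $n\pm 1$, so that the reversals present on both sides of the inductive identities align correctly. A minor additional subtlety arises at $i=N\to i=N+1$, where the trace relation changes from $T_{i,22}=-T_{i,11}$ to $T_{N+1,22}=n-T_{N+1,11}$; however, the extra $n$ is a ${\rm Perm}_n$-invariant scalar that does not disturb the symmetry argument, so the inductive step closes uniformly.
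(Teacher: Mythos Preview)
Your proposal is correct and follows essentially the same strategy as the paper: strong induction on $i$, using the purely algebraic identities \eqref{Ti11}/\eqref{TN+111} (avoiding $\Delta^{-1}$) for the diagonal entries and the recursions \eqref{EqRecTi12mod}/\eqref{EqRecTi21mod} for the off-diagonal ones, then verifying the reversal symmetry by applying ${\rm Perm}_n$ and matching with the shifted recursion. Your treatment is in fact somewhat more explicit than the paper's in two places: the ${\rm Perm}_n$-invariance of the quadratic sum in \eqref{Ti11} via the index swap $j\mapsto i+2-j$, and the identification of the two compatibility identities ${\rm Perm}_n(T_{i,11}(n+1))=T_{i,11}(n-1)$ and ${\rm Perm}_n(T_{i,12}(n+1))=v_{n-1}^{-1}T_{i,21}(n-1)$ that make the comparison work.
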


\begin{proof}
		We prove this proposition by strong induction.
For $i=1$, $T_1(n)=\frac{\theta_N}{2}\sigma_3$, then defining $Q_{1,n}(u_n):=\frac{\theta_N}{2}$, $P_{1,n}(u_n):=0$; $T_{1,11}(n)=Q_{1,n}(x_n)$, $T_{1,12}(n)=P_{1,n}(x_n)$ and $T_{1,21}(n)=v_nP_{1,n}(x_n)$.

		Now suppose the property true for all $j\in[[1,i]]$ with $i\leqslant N$ and let $(Q_{j,n},P_{j,n})_{j\leqslant i}$ be polynomials in $x_{n+j}$'s satisfying the property.
		According to~\eqref{Ti11} (and~\eqref{TN+111} for $i=N$) and strong induction hypothesis, $T_{i+1}(n)$ is a polynomial in $x_{n+j}$'s and the invariance when you exchange $x_{n+j}$ by $x_{n-j}$ holds.

		Because of equation~\eqref{EqRecTi12mod} (resp.\ equation~\eqref{EqRecTi21mod}) and of induction hypothesis, there exists $P_{i+1,n}((u_{n+j})_{-i\leqslant j\leqslant i})$ (resp.\ $\tilde{P}_{i+1,n}((u_{n+j})_{-i\leqslant j\leqslant i})$) a polynomial such that
\begin{equation*}
T_{i+1,12}(n)=P_{i+1,n}((x_{n+j})_{-i\leqslant j\leqslant i}),
\end{equation*}
respectively,
\begin{equation*}
T_{i+1,21}(n)=\tilde{P}_{i+1,n}((x_{n+j})_{-i\leqslant j\leqslant i}).
\end{equation*}
		Now we establish the link between $P_{i+1,n}$ and $\tilde{P}_{i+1,n}$. According to equation~\eqref{EqRecTi12mod} and the relation $\Tr(T(n;z))=nz^{-1}$,
\begin{align*}
P_{i+1,n}\big((x_{n+j})_{j=-i}^{i}\big)={}&-x_{n+1}Q_{i,n+1}\big((x_{n+j})_{j=-i}^{i-2}\big)
+v_{n+1}P_{i,n+1}\big((x_{n+j})_{j=-i}^{i-2}\big)
\\
&-x_nx_{n+1}P_{i,n}\big((x_{n+j})_{j=1-i}^{i-1}\big)
-x_{n+1}Q_{i,n}\big((x_{n+j})_{j=1-i}^{i-1}\big).
\end{align*}
		Then
\begin{align*} v_nP_{i+1,n}\big((x_{n-j})_{j=-i}^{i}\big)={}&v_n\big(-x_{n-1}Q_{i,n-1}\big((x_{n-j})_{j=-i}^{i-2}\big)
+v_{n-1}P_{i,n-1}\big((x_{n-j})_{j=-i}^{i-2}\big)
\\
&-x_nx_{n-1}P_{i,n}\big((x_{n-j})_{j=1-i}^{i-1}\big)-x_{n-1}Q_{i,n}\big((x_{n-j})_{j=1-i}^{i-1}\big)\big).
		\end{align*}
		From induction hypothesis and $\Tr(T(n;z))=nz^{-1}$
\begin{align*} v_nP_{i+1,n}\big((x_{n-j})_{j=-i}^{i}\big)={}&-x_{n-1}v_nT_{i,11}(n-1)+v_nT_{i,21}(n-1)+x_{n-1}x_nT_{i,21}(n)
\\
&+x_{n-1}v_nT_{i,22}(n).
\end{align*}
		According to equation~\eqref{EqRecTi21mod},
\begin{equation*}
v_nP_{i+1,n}\big((x_{n-j})_{j=-i}^{i}\big)=T_{i+1,21}(n+1).
\end{equation*}
		Then
\begin{equation*}
v_nP_{i+1,n}\big((x_{n-j})_{j=-i}^{i}\big)
=\tilde{P}_{i+1,n}\big((x_{n+j})_{-i\leqslant j\leqslant i}\big)
\end{equation*}		
and this concludes the proof.
\end{proof}

Define $\C[(x_j)_{j\in[[0,2n]]}]$ and the transformation
\begin{align*}
{\rm Perm}_n\colon\quad \C[(x_j)_{j\in[[0,2n]]}]&\longrightarrow\C[(x_j)_{j\in[[0,2n]]}],
\\
P((x_{n+j})_{-n\leqslant j\leqslant n})&\longmapsto P((x_{n-j})_{-n\leqslant j\leqslant n}).
\end{align*}
	From the previous proposition,
\begin{equation}
		 T_{i,21}(n)=v_n{\rm Perm}_n(T_{i,12}(n)).
		 \label{symTi1221}
\end{equation}

	\begin{Remark}
		As a consequence of the Proposition~\ref{prop:symTi1221}, the equation~\eqref{eq: hiddendPII 1} is a polynomial in $x_{n+j}$'s and is invariant when you apply ${\rm Perm}_n$ to this equation because ${\rm Perm}_n^2={\rm Id}$ and ${\rm Perm}_nv_n=v_n{\rm Perm}_n$.
	\end{Remark}
	We use the link we established in Proposition~\ref{prop:symTi1221} between $T_{i,12}(n)$ and $T_{i,21}(n)$ to rewrite the operator $\mathcal{L}$~\eqref{L} as a scalar operator:
	\begin{equation}
		L(u_n)\coloneqq \big(x_{n+1}\big(2\Delta^{-1}+I\big)((\Delta+I)x_{n}{\rm Perm}_n-x_n)+v_{n+1}(\Delta+I)-x_nx_{n+1}\big)u_n.
		\label{Ln}
	\end{equation}
Finally, collecting all the results from the previous sections, we state and proof the following theorem.
\begin{Theorem}
\label{thm:discretePIIequation}
The system~\eqref{Lax_pair}, with $T(n;z)$ of the form~\eqref{eq: Tsymmetricform} and coefficient $T_{N+1}(n)$ satisfying the symmetry condition~\eqref{eq:symmetry TN+1}, is a Lax pair for the $N$-th higher order discrete Painlev\'e~II equation and the equation is given by the expression:
\begin{equation}
\label{eq:eqdPIIhierarchy}
nx_n+\big(2x_n\Delta^{-1}(x_n-(\Delta+I)x_{n}{\rm Perm}_n)-v_n-v_n{\rm Perm}_n\big)T_{N+1,12}(n)=0,
\end{equation}
where $T_{N+1,12}(n)={L}^N(0)$ with $L$ as in~\eqref{Ln}.
\end{Theorem}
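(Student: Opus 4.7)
The plan is twofold: first to derive the scalar equation \eqref{eq:eqdPIIhierarchy} from the remaining compatibility constraint \eqref{eq:DPIINTN+112et21}, and then to verify the identification $T_{N+1,12}(n)=L^N(0)$.

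For the first step, I would exploit Proposition~\ref{prop:symTi1221} in the form \eqref{symTi1221}, namely $T_{N+1,21}(n)=v_n\,{\rm Perm}_n(T_{N+1,12}(n))$, to rewrite every occurrence of $T_{N+1,21}$ in \eqref{eq:DPIINTN+112et21}. The direct term $-T_{N+1,21}(n)$ becomes $-v_n\,{\rm Perm}_n\, T_{N+1,12}(n)$, contributing the summand $-v_n-v_n\,{\rm Perm}_n$ acting on $T_{N+1,12}(n)$. Inside the discrete integration one computes
\begin{equation*}
\frac{x_{n+1}}{v_{n+1}}(\Delta+I)T_{N+1,21}(n)=x_{n+1}\,{\rm Perm}_{n+1}T_{N+1,12}(n+1)=(\Delta+I)\bigl[x_n\,{\rm Perm}_n\, T_{N+1,12}(n)\bigr],
\end{equation*}
where I use that the shift $(\Delta+I)$ increments all indices (including the reference point of ${\rm Perm}$) by one. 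Substituting back into \eqref{eq:DPIINTN+112et21} yields precisely \eqref{eq:eqdPIIhierarchy}, with the operator in parentheses acting on $T_{N+1,12}(n)$.

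For the identification $T_{N+1,12}(n)=L^N(0)$, I would start from the initial condition $T_1(n)=\tfrac{\theta_N}{2}\sigma_3$ in \eqref{eq: T1}, whose off-diagonal entries vanish, and iterate the vector recursion \eqref{LTiTi+1} to obtain $\bigl(T_{N+1,12}(n),\, T_{N+1,21}(n+1)\bigr)=\mathcal{L}^N(0,0)$, which is exactly \eqref{TN+11221}. Using the symmetry \eqref{symTi1221} at every stage, one has $T_{i,21}(n+1)=v_{n+1}\,{\rm Perm}_{n+1}T_{i,12}(n+1)=(\Delta+I)[v_n\,{\rm Perm}_n\, T_{i,12}(n)]$, so that the action of $\mathcal{L}$ on the first component reduces exactly to the scalar operator $L$ defined in \eqref{Ln} acting on $u_n=T_{i,12}(n)$ alone. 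Iterating $L$ from the initial value $0$ then produces $T_{N+1,12}(n)=L^N(0)$.

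The main technical subtlety is justifying the commutation $(\Delta+I)\,{\rm Perm}_n={\rm Perm}_{n+1}(\Delta+I)$ on the polynomial ring in which the entries $T_{i,kj}$ live, together with the pull-through of the factor $v_n$. Both are direct consequences of the polynomial structure established in Proposition~\ref{prop:symTi1221}, so once that is granted the remaining steps are straightforward algebraic rewritings of formulas already derived in the previous subsections.
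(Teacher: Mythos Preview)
Your proposal is correct and follows essentially the same approach as the paper's own proof: substitute the symmetry \eqref{symTi1221} into \eqref{eq:DPIINTN+112et21} to obtain \eqref{eq:eqdPIIhierarchy}, and then use the same symmetry to collapse the two-component recursion \eqref{LTiTi+1}--\eqref{TN+11221} to the scalar iteration $T_{i+1,12}(n)=L(T_{i,12}(n))$. The paper's proof is extremely terse (two sentences), whereas you spell out the key identity $\frac{x_{n+1}}{v_{n+1}}T_{i,21}(n+1)=(\Delta+I)\bigl[x_n\,{\rm Perm}_n\,T_{i,12}(n)\bigr]$ and the shift--${\rm Perm}$ commutation explicitly; but the logical structure is identical.
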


\begin{proof}
Replacing $T_{N+1,21}(n)$ with the relation~\eqref{symTi1221}, equation~\eqref{eq:DPIINTN+112et21} now reads as
\begin{equation*}
nx_n+\big(2x_n\Delta^{-1}(x_n-(\Delta+I)x_{n}{\rm Perm}_n)-v_n-v_n{\rm Perm}_n\big)T_{N+1,12}(n)=0.
\end{equation*}
Equations~\eqref{LTiTi+1} and~\eqref{TN+11221} with the relation~\eqref{symTi1221} reduce to
\begin{equation*}
T_{i+1,12}(n)=L(T_{i,12}(n))\qquad \text{and}\qquad
T_{N+1,12}(n)={L}^N(0),
\end{equation*}
which concludes the proof.
\end{proof}

The next two examples explain for $N=1,2$ how to compute explicitly equation~\eqref{eq:eqdPIIhierarchy}.

	\begin{Example}\label{ex:computationsDiscetePIIN=12}
		Using the expression defined in Theorem~\ref{thm:discretePIIequation}, we compute the first equation~\eqref{eq:hierar1 intro} and the second~\eqref{eq:hierar2 intro}.

		For $N=1$: First we compute $T_{2,12}(n)$ with the operator $L$~\eqref{Ln}:
\begin{equation*}
T_{2,12}(n)=2x_{n+1}\Delta^{-1}(0)=-\theta_1x_{n+1},
\end{equation*}
		where $-\theta_1/2$ is the integration constant.

		Replacing $T_{2,12}(n)$ in equation~\eqref{eq:eqdPIIhierarchy},
\begin{gather*}
nx_n+v_n\theta_1(x_{n+1}+x_{n-1})+2x_n\Delta^{-1}(\theta_1x_nx_{n+1}-\theta_1x_nx_{n+1})=0.
\end{gather*}
		Then
\begin{equation*}
(n+\alpha)x_n+\theta_1v_n(x_{n+1}+x_{n-1})=0.
\end{equation*}
This equation is the same as equation~\eqref{eq:hierar1 intro} if we choose the integration constant $\alpha$ to be zero.

		For $N=2$: We compute $T_{3,12}(n)$. Computations are the same for $T_{2,12}(n)$ except for the integration constant, $T_{2,12}(n)=-\theta_2x_{n+1}$.
\begin{gather*}
T_{3,12}(n)=L(T_{2,12}(n))=\big(x_nx_{n+1}^2-v_{n+1}x_{n+2}\big)\theta_2
\\ \hphantom{T_{3,12}(n)=}
{}+x_{n+1}\big(2\Delta^{-1}+I\big)(-\theta_2x_nx_{n+1}+\theta_2x_nx_{n+1})
\end{gather*}
Then $T_{3,12}(n)=\theta_2\big(x_nx_{n+1}^2-v_{n+1}x_{n+2}\big)-\theta_1x_{n+1}$.

Replacing $T_{3,12}(n)$ in equation~\eqref{eq:eqdPIIhierarchy},
\begin{equation*}
(n+\alpha)x_n+\theta_2v_n\big(v_{n+1}x_{n+2}+v_{n-1}x_{n-2}-x_n(x_{n+1}+x_{n-1})^2\big)
+\theta_1v_n(x_{n+1}+x_{n-1})=0
\end{equation*}
		which is the same equation as~\eqref{eq:hierar2 intro}.
	\end{Example}
We finally conclude the work by noticing that Theorem~\ref{thm:discretePIIequation} together with Corollary~\ref{cor:toeplitz recursion relation xn} give the proof of Theorem~\ref{thm:main intro}.

	\begin{Remark}
		In our setting, the fixed $N\geq 1$ define the order $(2N)$ of the discrete equation solved by $x_n$, the quantity related to the Toeplitz determinants $D_n$. An alternative approach could be to leave $N$ variate and consider it as a second discrete variable for $x_n$. In effect, this is done in~\cite{HisakadoWadatireal}, where the authors consider orthogonal polynomials on the real line, w.r.t.\ a weight $\rho(\lambda;N){\rm d}\lambda$ and where the dependence on an integer parameter $N$ is such that $\rho(\lambda;N+1)=\lambda \rho(\lambda;N)$. In this case the relevant quantities to consider (related to the Hankel determinants) are the coefficients of the three terms recurrence relation satisfied by these polynomials. The authors there proved that these quantities solve (up to some change of variables) the discrete-time Toda molecule equation, a coupled system of discrete equations in the two variables $n$, $N$. The result deeply relies on the quasi-periodic condition satisfied by the weight $\rho$. Back to our setting, the measure we have for our orthogonal polynomials on the unit circle is such that
\[
\d \mu(\lambda;N+1)= \e^{\sum_{j=1}^{N+1} \frac{\theta_j}{j}(\e^{{\rm i} \lambda j}+\e^{-{\rm i} \lambda j})}\frac{\d \lambda}{2\pi}= \e^{\frac{\theta_{N+1}}{N+1}(\e^{{\rm i}\lambda (N+1)}+\e^{-{\rm i}\lambda (N+1)})}\, \d \mu(\lambda;N).
\]
		This relation does not seem as promising as the one for $\rho$ for the study of the $N$-dependence, but it is another point that we could further investigate.
\end{Remark}

\appendix
\section{The continuous limit}
\label{appendix}
This appendix contains further computations for the continuous limit of the equations of the discrete Painlev\'e~II hierarchy~\eqref{eq:dPIIhierarchy intro} in the first cases $N=1,2,3$. To obtain it, we follow the scaling limit given in~\cite[Theorem 1]{betea2021multicritical} as already recalled in the introduction.

{\bf The case $\boldsymbol{N=1}$.} Notice that in this case we recover the same computation done in~\cite[Chapter 9]{Borodindiscrete}.
We consider equation~\eqref{eq:hierar1 intro} written as
\begin{equation*}
	x_{n+1}+x_{n-1}+\frac{nx_n}{\theta_1\big(1-x_n^2\big)}=0
\end{equation*}
in which the only parameter appearing is $\theta_1=\theta$. Following the scaling limit of~\cite[Theorem~1]{betea2021multicritical}, in the case $N=1$, we have
\begin{equation*}
	b=2,\qquad d=1\qquad\text{and}\qquad x_n = (-1)^n \theta^{-\frac{1}{3}}u(t) \quad\text{with}\quad t=(n-2\theta)\theta^{-\frac{1}{3}}.
\end{equation*}
Now, for $\theta\rightarrow +\infty$, we compute
\begin{align*}
	x_{n\pm 1} &\sim (-1)^{n+1}\theta^{-\frac{1}{3}} u\big(t\pm \theta^{-\frac{1}{3}}\big)\\
	&\sim (-1)^{n+1}\theta^{-\frac{1}{3}} \bigg(u(t)\pm\theta^{-\frac{1}{3}}u'(t) +\frac{\theta^{-\frac{2}{3}}}{2}u''(t)+O\big(\theta^{-1}\big)\bigg),
\end{align*}
that gives
\begin{equation*}
	x_{n+1}+x_{n-1} \sim (-1)^{n+1}2\theta^{-\frac{1}{3}}u(t)+ (-1)^{n+1}\theta^{-1}u''(t) + O\big(\theta^{-1}\big).
\end{equation*}
The other term appearing in the discrete Painlev\'e~II equation gives instead
\begin{align*}
	\frac{nx_n}{\theta_1\big(1-x_n^2\big)}\sim \big(2\theta +t\theta^{\frac{1}{3}}\big)(-1)^n \theta^{-\frac{1}{3}}u(t) \theta^{-1}\bigg(1 + \theta^{-\frac{2}{3}}u^2(t) +O\big(\theta^{-1}\big)\bigg)
\\
	\sim (-1)^n2\theta^{-\frac{1}{3}}u(t)+(-1)^n \theta^{-1}\big(tu(t) + 2u^3(t)\big) +O\big(\theta^{-1}\big).
\end{align*}
Thus equation~\eqref{eq:dPIIintro} in this scaling limit gives at the first order (coefficient of $\theta^{-1}$) the second order differential equation
\begin{equation*}
	u''(t)-tu(t)-2u^3(t) =0,
\end{equation*}
which coincides indeed with the Painlev\'e~II equation.

{\bf The case $\boldsymbol{N=2}$.} We consider equation~\eqref{eq:hierar2 intro}, with the parameters $\theta_1$, $\theta_2$ rescaled as
$\theta_1=\theta$, $\theta_2 = \frac{\theta}{4}$. It reads as
\begin{align}
	&\frac{n x_n}{\big(1-x_n^2\big)}+\theta(x_{n+1}+x_{n-1})\nonumber
\\
&{}\qquad+\frac{\theta}{4}\big(x_{n+2}\big(1-x_{n+1}^2\big)+x_{n-2}\big(1-x_{n-1}^2\big)-x_n(x_{n+1}+x_{n-1})^2\big)=0
	\label{eq:dpIIN=2scale}
\end{align}
and this time we consider the following scaling limit (case $N=2$ in~\cite[Theorem~1]{betea2021multicritical})
\begin{equation*}
	b=\frac{3}{2}, \qquad
d=4\qquad\text{and}\qquad x_n=(-1)^n\theta^{-\frac{1}{5}}4^{\frac{1}{5}} u(t)
\quad\text{with}\quad t=\bigg(n-\frac{3}{2}\theta\bigg)\theta^{-\frac{1}{5}}4^{\frac{1}{5}}.
\end{equation*}
For $\theta \rightarrow +\infty$, similar computations gives the fourth order differential equation
\begin{equation*}
	t u(t) + 6 u(t)^5 - 10 u(t)u'(t)^2 -
	10 u(t)^2 u''(t) + u''''(t)=0
\end{equation*}
which corresponds to the second equation of the Painlev\'e~II hierarchy. Detailed computations to obtain certain terms from the previous equation are given below. We begin with the expansion of the first term in equation~\eqref{eq:dpIIN=2scale}:
\begin{align*}
	\frac{nx_n}{\big(1-x_n^2\big)}&\sim \bigg(\frac{3}{2}\theta+4^{-\frac{1}{5}}\theta^{\frac{1}{5}}t\bigg)(-1)^n\theta^{-\frac{1}{5}}4^{\frac{1}{5}} u(t) \big(1 + 4^{\frac{2}{5}}\theta^{-\frac{2}{5}}u^2(t) +4^{\frac{4}{5}}\theta^{-\frac{4}{5}}u^4(t)+O\big(\theta^{-1}\big)\big)
\\
	&\sim (-1)^n\bigg(\frac{3}{2}4^{\frac{1}{5}}\theta^{\frac{4}{5}} u(t)+\frac{3}{2}4^{\frac{3}{5}}\theta^{\frac{2}{5}} u(t)^3+tu(t)+6u(t)^5+O\big(\theta^{-\frac{1}{5}}\big)\bigg).
\end{align*}
Computing expansions of $x_{n\pm 1}$, $x_{n\pm 2}$ as $\theta\to\infty$, we obtain
\begin{align*}
&x_{n\pm 1} \sim (-1)^{n+1}4^{\frac{1}{5}}\theta^{-\frac{1}{5}} u\big(t\pm 4^{\frac{1}{5}}\theta^{-\frac{1}{5}}\big)
\sim (-1)^{n+1}4^{\frac{1}{5}}\theta^{-\frac{1}{5}}
\\
&\hphantom{x_{n\pm 1} \sim}
\times \bigg(u(t)\pm4^{\frac{1}{5}}\theta^{-\frac{1}{5}}u'(t) +\frac{4^{\frac{2}{5}}\theta^{-\frac{2}{5}}}{2}u''(t)\pm\frac{4^{\frac{3}{5}}\theta^{-\frac{3}{5}}}{6}u'''(t) +\frac{4^{\frac{4}{5}}\theta^{-\frac{4}{5}}}{24}u''''(t)+O\big(\theta^{-1}\big)\bigg),
\\
&	x_{n\pm 2}\sim (-1)^{n}4^{\frac{1}{5}}\theta^{-\frac{1}{5}} u\big(t\pm 2\theta^{-\frac{1}{5}}4^{\frac{1}{5}}\big)
\sim (-1)^{n}4^{\frac{1}{5}}\theta^{-\frac{1}{5}}
\\
&\hphantom{x_{n\pm 2}\sim}
\times\bigg(u(t)\pm 4^{\frac{1}{5}}2\theta^{-\frac{1}{5}}u'(t) +4^{\frac{7}{5}}\theta^{-\frac{2}{5}}u''(t)\pm\frac{4^{\frac{8}{5}}2\theta^{-\frac{3}{5}}}{3}u'''(t) +\frac{4^{\frac{9}{5}}\theta^{-\frac{4}{5}}}{3}u''''(t)+O\big(\theta^{-1}\big)\bigg)
\end{align*}
that gives for the second term of equation~\eqref{eq:dpIIN=2scale}
\begin{equation*}
	\theta(x_{n+1}+x_{n-1}) \sim (-1)^{n+1}\bigg(4^{\frac{1}{5}}2\theta^{\frac{4}{5}}u(t)+4^{\frac{3}{5}}\theta^{\frac{2}{5}}u''(t) +\frac{1}{3}u''''(t)+O\big(\theta^{-\frac{1}{5}}\big)\bigg).
\end{equation*}
Some linear and nonlinear terms appear with the expansion of the third term of equation~\eqref{eq:dpIIN=2scale}. The linear one is
\begin{equation*}
\frac{\theta}{4}(x_{n+2}+x_{n-2}) \sim (-1)^{n}\bigg(4^{\frac{1}{5}}\theta^{\frac{4}{5}}\frac{1}{2}u(t) +4^{\frac{3}{5}}\theta^{\frac{2}{5}}u''(t)+\frac{4}{3}u''''(t)+O\big(\theta^{-\frac{1}{5}}\big)\bigg).
\end{equation*}
Nonlinear ones are
\begin{align*}
&\frac{\theta}{4}x_n(x_{n+1}+x_{n-1})^2 \sim (-1)^{n} u(t)\big(4^{\frac{3}{5}}\theta^{\frac{2}{5}}u(t)^2+4 u(t)u''(t)+O\big(\theta^{-\frac{1}{5}}\big)\big),
\\
&\frac{\theta}{4}x_{n\pm 2}x_{n\pm 1}^2\sim(-1)^n\big(4^{-\frac{2}{5}}\theta^{\frac{2}{5}}u(t)^3\pm 4^{\frac{4}{5}}\theta^{\frac{1}{5}}u(t)^2u'(t)+3u(t)^2u''(t)+5u(t)u'(t)^2\big).
\end{align*}
From these computations, we see that we recover exactly
\begin{equation*}
t u(t) + 6 u(t)^5 - 10 u(t)u'(t)^2 -10 u(t)^2 u''(t) + u''''(t)=0.
\end{equation*}

{\bf The case $\boldsymbol{N=3}$.}
We consider equation~\eqref{eq:hierar3 intro} with the parameters $\theta_1$, $\theta_2$, $\theta_3$ rescaled as
$\theta_1=\theta$, $\theta_2 = \frac{2\theta}{5}$, $\theta_3=\frac{\theta}{15}$ and rewritten as
\begin{align*}
&\frac{n x_n}{\theta\big(1-x_n^2\big)}+(x_{n+1}+x_{n-1}) +\frac{2}{5}\big(x_{n+2}\big(1-x_{n+1}^2\big)+x_{n-2}\big(1-x_{n-1}^2\big)-x_n(x_{n+1}+x_{n-1})^2\big)
\\	
&\qquad +\frac{1}{15}\big(x_n^2(x_{n+1}+x_{n-1})^3+x_{n+3}\big(1-x_{n+2}^2\big)\big(1-x_{n+1}^2\big) +x_{n-3}\big(1-x_{n-2}^2\big)\big(1-x_{n-1}^2\big)\big)
\\
&\qquad +\frac{1}{15}\bigl(-2x_n(x_{n+1}+x_{n-1})\big(x_{n+2}\big(1\!-x_{n+1}^2\big)\!+x_{n-2}\big(1\!-x_{n-1}^2\big)\big) -x_{n-1}x_{n-2}^2\big(1\!-x_{n-1}^2\big)\bigr)
\\
&\qquad+\frac{1}{15}\bigl(-x_{n+1}x_{n+2}^2\big(1-x_{n+1}^2\big)-x_{n+1}x_{n-1}(x_{n+1}+x_{n-1})\bigr)=0.
\end{align*}
Finally, we consider the following scaling limit (case $N=3$ of~\cite[Theorem~1]{betea2021multicritical})
\begin{equation*}
b=\frac{4}{3},\qquad
d=15\qquad\text{and}\qquad
x_n=(-1)^n\theta^{-\frac{1}{7}}15^{\frac{1}{7}}u(t) \quad\text{with}\quad t=\bigg(n-\frac{4}{3}\theta\bigg)\theta^{-\frac{1}{7}}15^{\frac{1}{7}}.
\end{equation*}
Again, for $\theta\rightarrow +\infty$ the asymptotic expansion of the equation above results at the first order (coefficient of $\theta^{-1}$) into the sixth order differential equation
\begin{align*}
&	tu(t)+20u(t)^7 -140u(t)^3u'(t)^2-70u(t)^4u''(t)+70u'(t)^2u''(t)+42 u(t)u''(t)^2
\\
& \qquad{}
+56u(t)u'(t)u'''(t)+14u(t)^4u''(t)-u''''''(t)=0,
\end{align*}
which corresponds to the third equation in the Painlev\'e~II hierarchy.
\begin{Remark}
	Computations for $N=2$ and $N=3$ were performed with \texttt{Maple/Mathematica}. Files are available on demand.
\end{Remark}

\subsection*{Acknowledgments} We acknowledge the support of the H2020-MSCA-RISE-2017 PROJECT No. 778010 IPaDEGAN and the International Research Project PIICQ, funded by CNRS. During the period from November 2021 to October 2022, S.T.\ was supported also by the Fonds de la Recherche Scientifique-FNRS under EOS project O013018F and based at the Institut de Recherche en Math\'ematique et Physique of UCLouvain. The authors are grateful to Mattia Cafasso for the inspiration given to work on this project and his guidance. The authors also want to thank the referees of this paper for useful comments and suggestions. S.T.\ is also grateful to Giulio Ruzza for meaningful conversations.

\pdfbookmark[1]{References}{ref}
\LastPageEnding

\end{document}